\newcommand{\problemdef}[3]{
	\begin{center}
		\begin{minipage}{0.95\textwidth}
			\noindent
			\textsc{#1}
			
			\vspace{2pt}
			\setlength{\tabcolsep}{3pt}
			\begin{tabularx}{\textwidth}{@{}lX@{}}
				\textbf{Input:} 		& #2 \\
				\textbf{Question:} 	& #3
			\end{tabularx}
		\end{minipage}
	\end{center}
}
\newtheorem{observation}{Observation}
\newtheorem{claimS}{Claim}
\newtheorem{theorem}{Theorem}
\newtheorem{corollary}{Corollary}
\newtheorem{proposition}{Proposition}
\newtheorem{definition}{Definition}
\newtheorem{lemma}{Lemma}
\newcommand{\stubsEq}{\textsc{S-Eq-Stub}\xspace}
\newcommand{\sEq}{\textsc{S-Eq}\xspace}
\newcommand{\stubjEq}{\textsc{J-Eq-Stub}\xspace}
\newcommand{\jEq}{\textsc{J-Eq}\xspace}
\newcommand{\NN}{\mathbb{N}}
\newcommand{\bigO}{\mathcal{O}}
\algnewcommand\algorithmicinput{\textbf{Input:}}
\algnewcommand\algorithmicoutput{\textbf{Output:}}
\algnewcommand\Input{\item[\algorithmicinput]}
\algnewcommand\Output{\item[\algorithmicoutput]}
\definecolor{c_T1}{rgb}{0.61, 0.87, 1.0}
\definecolor{c_T2}{rgb}{0.64, 0.0, 0.0}
\definecolor{seagreen}{rgb}{0.61, 0.87, 1.0}
\def\edgeDotted(#1,#2){\draw[majarr] (#1) edge ($(#1)!0.35!(#2)$)  
	($(#1)!0.65!(#2)$) edge (#2);
	\node[] at ($(#1)!0.5!(#2)$) {$\dots$};
}
\newcommand{\fitellipsis}[3] 
{\draw[#3] let \p1=(#1), \p2=(#2), \n1={atan2(\y2-\y1,\x2-\x1)}, 
	\n2={veclen(\y2-\y1,\x2-\x1)}
	in ($ (\p1)!0.5!(\p2) $) ellipse [x radius=\n2/2+2ex, y radius=5ex, 
	rotate=\n1];
}
\newcommand{\fitellipsiss}[3] 
{\draw[#3] let \p1=(#1), \p2=(#2), \n1={atan2(\y2-\y1,\x2-\x1)}, 
	\n2={veclen(\y2-\y1,\x2-\x1)}
	in ($ (\p1)!0.5!(\p2) $) ellipse [x radius=\n2/2+2ex, y radius=4ex, 
	rotate=\n1];
}
\Crefname{observation}{Observation}{Observations}
\title{Equilibria in Schelling Games: Computational Hardness and Robustnesss\thanks{An extended abstract of 
the paper is scheduled to appear in the proceedings of the \emph{21st International Conference on Autonomous Agents and Multiagent Systems (AAMAS~2022)}.}}
\author{Luca Kreisel} 
\author{Niclas Boehmer}
\author{Vincent Froese}
\author{Rolf Niedermeier}
\affil{\small
  Technische Universit\"at Berlin, Faculty~IV, Algorithmics and Computational 
  Complexity\protect\\
  \{l.kreisel,niclas.boehmer,vincent.froese\}@tu-berlin.de}
\date{\today}
\begin{document}

\maketitle

\begin{abstract}
	In the simplest game-theoretic formulation of Schelling's model of 
segregation on graphs, agents of two 
different types each select their own vertex in a given graph so as to
maximize the fraction of agents of their type in their occupied
neighborhood. Two ways of modeling agent movement here are either to allow two 
agents to swap their vertices or to allow an agent to jump to a free vertex. 

The contributions of this paper are twofold. First, we prove that deciding the 
existence of a swap-equilibrium and a jump-equilibrium in this 
simplest 
model of Schelling 
games is NP-hard, thereby 
answering questions left open by Agarwal et al. [AAAI~'20] and Elkind et al.\ 
[IJCAI~'19]. 
Second, we introduce two measures for the robustness of equilibria in Schelling 
games in terms of the minimum number of 
edges or the minimum number of vertices that need to be deleted to make an 
equilibrium unstable. We prove 
tight lower and upper bounds on the edge- and vertex-robustness of 
swap-equilibria in Schelling 
games 
on 
different graph classes.
	
\end{abstract} 

\section{Introduction}\label{chap:intro}
Schelling's model of segregation \cite{schelling1969models,schelling1971models} 
is a simple random process that aims at explaining segregation patterns 
frequently observed in real life, e.g., in the context of residential 
segregation 
\cite{massey1988dimensions,williams2016racial}. In Schelling's model, one 
considers agents of two 
different types. Each agent is initially placed uniformly at random on 
an individual vertex of some given graph (also called \emph{topology}), where 
an 
agent 
is  called \emph{happy} if at least a $\tau$-fraction of its neighbors is of 
its type for some given tolerance parameter $\tau \in (0,1]$. Happy agents do 
not change location, whereas, depending on the model, unhappy agents either 
randomly swap vertices with other 
unhappy agents or randomly jump to empty vertices. 
\citeA{schelling1969models,schelling1971models}  observed that even 
for tolerant agents with $\tau \sim \frac{1}{3}$, segregation patterns (i.e., large connected areas where agents have only neighbors of their type) are 
likely to occur. Over the last 50 years, Schelling's model has been thoroughly 
studied 
both from an empirical (see, e.g., 
\citeS{DBLP:conf/atal/CarverT18,clark2008understanding})
and a theoretical (see, e.g., 
\citeS{DBLP:conf/focs/BarmpaliasEL14,DBLP:conf/soda/BhaktaMR14,DBLP:conf/stoc/BrandtIKK12,DBLP:conf/soda/ImmorlicaKLZ17})
perspective in various disciplines including computer science, economics, 
physics, and sociology. Most works focused on 
explaining under which circumstances and how quickly segregation patterns occur.

In Schelling's model it is assumed that unhappy agents move randomly and 
only 
care about whether their tolerance threshold is met. As this seems unrealistic, 
Schelling games, which are a game-theoretic 
formulation of Schelling's model where agents move strategically in order to 
maximize their individual utility, have recently 
attracted 
considerable attention 
\cite{AGARWAL2021103576,bilo2020topological,ChauhanLM18,Echzell0LMPSSS19,KanellopoulosKV20}.
However, there is no unified formalization of the agents' utilities
in the different game-theoretic formulations. In this paper, we assume that 
all agents want to maximize the fraction of agents of their type in their 
occupied neighborhood. In contrast, for instance, \citeA{ChauhanLM18} and \citeA{Echzell0LMPSSS19} assumed that the utility of an agent~$a$ depends 
on the 
minimum of its threshold parameter $\tau$ and the fraction of agents of 
$a$'s type in the occupied neighborhood of $a$ 
(as done by \citeA{AGARWAL2021103576} and \citeA{bilo2020topological}, we assume that 
$\tau=1$).
Along a different dimension, in 
the works of
\citeA{ChauhanLM18}  and  \citeA{AGARWAL2021103576}, the utility of (some) agents also depends on 
the 
particular vertex they occupy. For instance, \citeA{AGARWAL2021103576} assume that there exist some agents 
which are \emph{stubborn} and have a favorite vertex in the graph 
which they 
never leave (in our model, as also done before by
\citeA{bilo2020topological}, \citeA{Echzell0LMPSSS19}, and \citeA{KanellopoulosKV20}, we assume that 
the agents'
behavior does not depend on their specific vertex, so there are no stubborn 
agents).
The main focus in previous works and our work lies on the analysis 
of certain pure equilibria in Schelling games, where it is typically either assumed 
that 
agents can swap their vertices or jump to empty vertices. While we mostly focus 
on swap-equilibria, we also partly consider jump-equilibria. As 
the existence and other specifics of equilibria crucially depend on the 
underlying topology, a common approach
is 
to consider different  graph~classes 
\cite{bilo2020topological,ChauhanLM18,KanellopoulosKV20}.    

Our paper is divided into two parts. First, we prove that deciding the 
existence of a swap- or jump-equilibrium in Schelling 
games (as formally defined in \Cref{se:SG}) is NP-hard. 
Second, we introduce a new perspective for the analysis of equilibria in 
Schelling games: Stability under changes.
Considering the original motivation of modeling 
residential segregation, it might for example occur that agents move away and 
leave the city.
In the context of swap-equilibria on which we focus in the second part, this 
corresponds to deleting the vertex occupied by the leaving agent (or all edges 
incident to it), as unoccupied vertices can never get occupied again and 
are also irrelevant for computing agents' utilities.
An interesting task now is to find a more ``robust'' equilibrium that remains stable in such a changing environment, as for non-robust equilibria it could be that a small change can cause the reallocation of all agents.
We formally define a worst-case 
measure for the \textit{robustness} of an equilibrium with respect to vertex/edge deletion as the 
maximum integer~$r$ such that the deletion of any set of at most~$r$ 
vertices/edges leaves the equilibrium stable. Clearly, the robustness depends 
on the underlying topology. That is why we 
study different graph classes with respect to the robustness of~equilibria.

\subsection{Related Work}	
Most of the works on Schelling games focused on one of three 
aspects: existence and 
complexity of computing equilibria 
\cite{AGARWAL2021103576,bilo2020topological}, game dynamics 
\cite{bilo2020topological,ChauhanLM18,Echzell0LMPSSS19}, and price of 
anarchy and stability 
\cite{AGARWAL2021103576,bilo2020topological,KanellopoulosKV20}.
The first area is closest to our paper, so we review some
results here. On the negative side, 
\citeA{AGARWAL2021103576} showed that a jump- and swap-equilibrium may fail to 
exist even on tree topologies and that checking their existence is NP-hard on 
general 
graphs in the 
presence of stubborn agents. On a tree, the existence of a jump- and swap equilibrium can be checked in polynomial time~\cite{AGARWAL2021103576}. On the positive side,  \citeA{AGARWAL2021103576} showed that a jump-equilibrium always exists
on 
stars, paths, and cycles. Concerning swap-equilibria,   
\citeA{Echzell0LMPSSS19} showed that a swap-equilibrium always exists on regular 
graphs (and in particular cycles).  Recently, 
\citeA{bilo2020topological} proved that a swap-equilibrium is guaranteed 
to exist on paths and grids, and they obtained further results for 
the restricted
case where
only adjacent agents are allowed to swap.

Lastly, different from 
the three above mentioned directions, \citeA{DBLP:journals/corr/abs-2012-02086} and  \citeA{DBLP:journals/corr/abs-220106904} studied finding Pareto-optimal 
assignments and assignments maximizing the summed utility of all agents in 
Schelling games and \citeA{DBLP:conf/atal/ChanIT20}  proposed a 
generalization of Schelling games where, among others, multiple agents can 
occupy the same 
vertex.
 
Interpreting the neighborhood of an agent as its coalition, Schelling games 
also share characteristics with hedonic games  
\cite{DBLP:journals/geb/BogomolnaiaJ02,dreze1980hedonic} and in particular 
with 
(modified) fractional hedonic games 
\cite{DBLP:journals/teco/AzizBBHOP19,DBLP:journals/jair/Bilo0FMM18,DBLP:journals/aamas/MonacoMV20}
and hedonic diversity games 
\cite{DBLP:conf/aaai/BoehmerE20,DBLP:conf/atal/BredereckEI19}. The main 
difference between hedonic games and Schelling games is that in Schelling 
games agents are placed on a topology so each agent has its own coalition and coalitions may overlap. Recently, 
\citeA{DBLP:conf/atal/BodlaenderHJOOZ20} proposed a model 
that can be somewhat considered as 
both a  
generalization of hedonic games and Schelling games:  
In their model, 
a set of agents with cardinal preferences over each other need to be placed on
the 
vertices of a graph and the utility of an agent is its summed  utility that 
it derives from all adjacent
agents.  

Analyzing the robustness of outcomes of decision processes has become a popular topic in  
algorithmic game theory 
\cite{DBLP:journals/mp/AghassiB06,DBLP:journals/jair/BoehmerBHN21,DBLP:journals/ai/BredereckFKNST21,DBLP:journals/jet/Kets11,DBLP:conf/alt/Perchet20,DBLP:journals/jet/TakahashiT20}. 
For instance, in the context of  hedonic games, 
\citeA{DBLP:conf/ijcai/IgarashiOSY19} studied stable outcomes that 
remain stable even after some agents have been deleted and, in the context of 
stable matching, 
\citeA{DBLP:conf/esa/MaiV18,DBLP:journals/corr/abs-1804-05537} and
\citeA{DBLP:conf/ec/ChenSS19} studied stable matchings that remain 
stable 
even if the agents' preferences partly change.

\subsection{Our Contributions}	
The contributions of this paper are twofold. 
In the first, more technical part, we prove that deciding the 
existence of a jump- or swap-equilibrium 
in Schelling games where all agents want to maximize the fraction of agents of 
their type in their occupied neighborhood is NP-hard. Notably, our 
technically involved
results
strengthen results by  \citeA{AGARWAL2021103576}, who proved the NP-hardness of these 
problems making decisive use of the existence of stubborn agents (which never 
leave their vertices).

Having analyzed the existence of equilibria, in the second, more 
conceptual part, we introduce a notion for 
the robustness of an equilibrium under vertex/edge deletion: We say that an equilibrium has vertex/edge-robustness $r$ if it 
remains stable under the deletion of any set of at most $r$ vertices/edges but not under the deletion of~$r+1$ vertices/edges.
We study the existence of swap-equilibria with a given robustness.
We restrict our attention to swap-equilibria, as for jump-equilibria, the robustness heavily depends on both
the underlying topology and the specific numbers of agents of each type.
Providing meaningful bounds on the robustness of a jump-equilibrium is 
therefore rather cumbersome.

\begin{table}[t]
	\caption{Overview of robustness values of swap-equilibria for various graph 
		classes.
		For each considered class, a swap-equilibrium always exists (see 
		\citeA{bilo2020topological} and \citeA{Echzell0LMPSSS19} and 
		\Cref{sec:swap:topInfluence}). Moreover, there exists a Schelling 
		game on a  
		graph from this class 
		with two swap-equilibria whose robustness match the depicted lower and 
		upper 
		bound. For bounds marked 
		with $\dagger$, on a graph from this class, an equilibrium with 
		this robustness is guaranteed to exist in every Schelling game with at 
		least four agents of the one and two agents of the other~type.}
	\label{tab:results}
	\centering
	\begin{tabular}{l lll l}
			\toprule
			& \multicolumn{2}{c}{edge-robustness} &
			\multicolumn{2}{c}{vertex-robustness} \\
			\cmidrule{2-3} \cmidrule{4-5}
			& \makecell{lower \\ bound}& \makecell{upper 
				\\ bound} & \makecell{lower \\ bound}& \makecell{upper 
				\\ bound}\\
			\midrule
			Cliques (Ob. \ref{edgedel:clique}) & $0^\dagger$ & $0^\dagger$ &  
			$|V(G)|^\dagger$ &  $|V(G)|^\dagger$\\
			Cycles (Pr. \ref{swap:cycle}) & $0^\dagger$  & 
			$0^\dagger$ &  $0^\dagger$ &  $0^\dagger$ \\
			Paths (Th. \ref{path})& $0^\dagger$ & $|E(G)|^\dagger$ & $0^\dagger$ 
			& $|V(G)|^\dagger$ \\
			Grids (Th. \ref{grid})&  $0$ 
			& $1$ & $0$ & $1$ \\
			$\alpha$-Caterpillars (Pr. \ref{cater}) & $\alpha$ & 
			$|E(G)|^\dagger$ &  $\alpha$ & 
			$|V(G)|^\dagger$\\
			\bottomrule
	\end{tabular}
\end{table}

In our analysis of the robustness of swap-equilibria, we follow the approach 
from 
most previous works and 
investigate the influence of the structure of the 
topology~\cite{AGARWAL2021103576,bilo2020topological}. That 
is, we 
show tight
upper and 
lower bounds on the edge- and vertex-robustness of swap-equilibria in Schelling 
games on 
topologies 
from various 
graph classes, summarized in 
\Cref{tab:results}. We prove that the edge- and vertex-robustness of 
swap-equilibria on a graph class can be arbitrarily far apart, as 
on cliques it is always sufficient to delete a single edge to make every 
swap-equilibrium unstable while every swap-equilibrium remains stable after the 
deletion of any subset of vertices. In contrast to this, all of our other 
lower and upper bounds are the same (and tight) for vertex- and edge-robustness and can be proven using similar arguments. 
We further show that on paths 
there exists a swap-equilibrium that can be made unstable by deleting a single 
edge or vertex and a swap-equilibrium that remains stable after the deletion of 
any 
subset of edges or vertices, implying that the difference between the 
edge/vertex-robustness of the 
most and least robust equilibrium  
can be arbitrarily large. This suggests that, in practice, one should be cautious when dealing with equilibria if robustness is important.

As an example of a nontrivial graph class where every 
swap-equilibrium has 
robustness 
larger than zero, we define $\alpha$\textit{-star-constellation graphs} (see 
\Cref{sub:GC} for a definition). 
We show that every 
swap-equilibrium on an $\alpha$-star-constellation graph has edge- and 
vertex-robustness at least 
$\alpha$. 
Moreover, independent of our robustness notion, we obtain a precise characterization of swap equilibria on $\alpha$-star-constellation graphs and a polynomial-time algorithm for checking whether a swap equilibrium exists on a graph from this class. 
Lastly, we prove that a 
swap-equilibrium always exists on a subclass of 
$\alpha$\textit{-star-constellation 
	graphs} and caterpillar graphs which we call
$\alpha$-caterpillars (see \Cref{fig:ex}).

\section{Preliminaries}
\label{ch:prelims}
Let $\mathbb{N}$ be the set of positive integers and $\mathbb{N}_0$ the set 
of non-negative integers. For two integers $i<j\in \mathbb{N}_0$, we denote 
by~$[i,j]$ the set $\{i,i+1,\dots, j-1, j\}$ and by~$[j]$ the set $[1,j]$. 
Let~$G=(V,E)$ be an undirected graph. Then, 
$V(G)$ is the vertex set of $G$ and $E(G)$ is the edge set of $G$. For a 
subset $S\subseteq E$
of edges,  $G-S$ denotes the graph obtained from $G$ by 
deleting all edges from $S$. For a subset $V'\subseteq V$ of vertices, 
$G[V']$ denotes the graph $G$ induced by~$V'$. 
Overloading notation, for a subset $V'\subseteq V$, we sometimes write $G-V'$ 
to denote the graph $G$ induced by $V\setminus V'$, that is, $G-V'=G[V\setminus 
V']$. For a 
vertex $v\in V(G)$, we denote by $N_G(v)$ 
the set of vertices adjacent to $v$ in~$G$. The degree 
$\deg_G(v):=|N_G(v)|$ of $v$ is the 
number of vertices adjacent to~$v$ in~$G$. Lastly, $\Delta(G) := 
\max_{v \in V(G)} \deg_G(v)$ is the  maximum degree of a vertex in 
$G$.

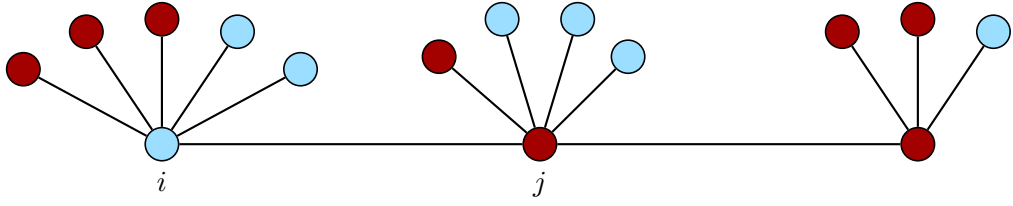
\begin{figure}[t]
	\tikzstyle{alter}=[circle, minimum size=12.5pt, draw, inner sep=1pt, 
	semithick] 
	\tikzstyle{majarr}=[draw=black, thick]
	\centering
	\begin{tikzpicture}[auto]
		
		\node[alter, fill=c_T1, label=below:$i$] at (0ex,0ex) (a) {};
		
		\node[alter, fill=c_T2] at (-11ex,6ex) (a1) {};
		
		\node[alter, fill=c_T2] at (-6ex,9ex) (a2) {};
		
		\node[alter, fill=c_T2] at (0ex,10ex) (a3) {};
		
		\node[alter, fill=c_T1] at (6ex,9ex) (a4) {};
		
		\node[alter, fill=c_T1] at (11ex,6ex) (a5) {};
		
		\node[alter, fill=c_T2, label=below:$j$] at (30ex,0ex) (b) {};
		
		\node[alter, fill=c_T2] at (22ex,7ex) (b1) {};
		
		\node[alter, fill=c_T1] at (27ex,10ex) (b2) {};
		
		\node[alter, fill=c_T1] at (33ex,10ex) (b3) {};
		
		\node[alter, fill=c_T1] at (37ex,7ex) (b4) {};
		
		\node[alter, fill=c_T2] at (60ex,0ex) (c) {};
		
		\node[alter, fill=c_T2] at (54ex,9ex) (c1) {};
		
		\node[alter, fill=c_T2] at (60ex,10ex) (c2) {};
		
		\node[alter, fill=c_T1] at (66ex,9ex) (c3) {};
		
		\draw[majarr] (a) edge  (a1);
		\draw[majarr] (a) edge  (a2);
		\draw[majarr] (a) edge  (a3);
		\draw[majarr] (a) edge  (a4);
		\draw[majarr] (a) edge  (a5);
		\draw[majarr] (b) edge  (b1);
		\draw[majarr] (b) edge  (b2);
		\draw[majarr] (b) edge  (b3);
		\draw[majarr] (b) edge  (b4);
		\draw[majarr] (c) edge  (c1);
		\draw[majarr] (c) edge  (c2);
		\draw[majarr] (c) edge  (c3);
		\draw[majarr] (a) edge  (b);
		\draw[majarr] (b) edge  (c);
		
		\end{tikzpicture}
	\caption{\label{fig:ex} Schelling game with $|T_1|=7$ and $|T_2|=8$. 
		Agents from $T_1$ are drawn in light blue and agents from $T_2$ 
		in dark red. Let $\mathbf{v}$ be the depicted assignment. It holds that 
		$u_i(\mathbf{v})=u_j(\mathbf{v})=\frac{1}{3}$ and $u_i(\mathbf{v}^{i 
			\leftrightarrow 
			j})=u_j(\mathbf{v}^{i \leftrightarrow j})=\frac{1}{2}$. Thus, 
		$\mathbf{v}$ 
		is not a swap-equilibrium, as $i$ and $j$ have a profitable swap. The 
		displayed graph is a $2$-star-constellation graph and a 
		$2$-caterpillar.} 
\end{figure}

\subsection{Schelling Games} \label{se:SG}
A \textit{Schelling game} is defined by a set $N = [n]$ of 
$n \in\NN$ (strategic) agents partitioned into two
\textit{types} 
$T_{1}$ and $T_{2}$ and an undirected graph $G=(V, E)$ 
with $|V| \geq n$, called the \textit{topology}. The strategy of agent  $i \in 
N$  consists of picking some \textit{position} $v_i \in V$ with~$v_i \neq 
v_j$ for $i\neq j \in N$. The \textit{assignment vector} 
$\mathbf{v}=\left(v_{1}, \ldots, v_{n}\right)$ contains the positions of all 
agents.  A vertex $v \in V$ is called \textit{unoccupied} in $\mathbf{v}$ if 
$v \neq v_i$ 
for all~$i \in N$. In the following, we refer to an agent~$i$ and its position 
$v_i$ interchangeably. For example, we say  agent $i$ has an edge to agent $j$ 
if $\{v_i, v_j\} \in E$. For an agent~$i \in T_l$ with $l\in 
\{1,2\}$, we call 
all other agents $F_{i}= T_l \setminus\{i\}$ of the same type  
\textit{friends} of $i$. The set of $i$'s \textit{neighbors} on topology $G$ is 
$N^G_{i}(\mathbf{v})\coloneqq\left\{j \neq i\mid \left\{v_{i}, v_{j}\right\} 
\in\right. E(G)\}$, and~$a^G_i(\mathbf{v})\coloneqq|N^G_{i}(\mathbf{v}) \cap 
F_{i}|$ is the number of friends in the neighborhood of $i$ in 
$\mathbf{v}$.

Given an assignment $\mathbf{v}$, the utility of agent $i$ on topology $G$ 
is: 
\begin{equation*}
u_i^{G}(\mathbf{v}) \coloneqq \begin{cases}
0 &\text{if } N^G_{i}(\mathbf{v}) = \emptyset \text{,}\\
\nicefrac{a^G_i(\mathbf{v})}{|N^G_{i}(\mathbf{v})|}& \text{otherwise.}
\end{cases}
\end{equation*}
If the topology is clear from the context,  we omit the superscript $G$.

Given some assignment $\mathbf{v}$, agent $i \in N$, and an unoccupied vertex~$v$, we denote by~$\mathbf{v}^{i \rightarrow v}=(v^{i \rightarrow v}_1,\dots, 
v^{i \rightarrow v}_n)$ the assignment obtained from  
$\mathbf{v}$ where $i$ \emph{jumps} to $v$, that is, $v_{i}^{i \rightarrow 
	v}=v$ and $v_{j}^{i \rightarrow v}=v_j$ for all $j\in N\setminus \{i\}$.
A jump of an agent $i$ to vertex $v$ is called \textit{profitable} 
if it improves $i$'s utility, that is, $u_{i}\left(\mathbf{v}^{i \rightarrow 
	v}\right)>u_{i}(\mathbf{v})$.
Note that an agent can jump to any unoccupied vertex .
For two agents~$i\neq j \in N$ and some assignment~$\mathbf{v}$, we define 
$\mathbf{v}^{i \leftrightarrow j}=(v^{i \leftrightarrow j}_1,\dots, 
v^{i \leftrightarrow j}_n)$ as the assignment that is obtained by 
\emph{swapping} the vertices of $i$ and $j$, that is, $v_{i}^{i \leftrightarrow 
	j}=v_j$, $v_{j}^{i \leftrightarrow j}=v_i$, and~$v_{k}^{i \leftrightarrow 
	j}=v_k$ 
for all $k\in N\setminus \{i,j\}$.
Note that any two agents (independent of the vertices they occupy) can perform a swap.
A 
swap of two agents $i,j \in N$ is called \textit{profitable} 
if it improves $i$'s and $j$'s utility, that is, $u_{i}\left(\mathbf{v}^{i 
	\leftrightarrow j}\right)>u_{i}(\mathbf{v})$ and $u_{j}\left(\mathbf{v}^{i 
	\leftrightarrow j}\right)>u_{j}(\mathbf{v})$ (see \Cref{fig:ex} for an example).
Note that a swap between agents of the same type is never profitable.

An assignment $\mathbf{v}$ is a \emph{jump/swap-equilibrium} if no profitable 
jump/swap exists.
Note that a jump-equilibrium is simply a Nash equilibrium for our Schelling game.
Following literature conventions, in cases where we allow 
agents to 
swap, 
we assume that $n=|V(G)|$, while in cases where we allow agents to jump, we 
assume that~$n<|V(G)|$. 

\subsection{Graph Classes} \label{sub:GC}
A \textit{path} of length $n$ is a graph $G=(V,E)$ with $V=\{v_1, \dots ,v_n\}$ 
and $E=\{ \{v_i, v_{i+1}\} \mid i \in [n-1]\}$.
A \textit{cycle} of length $n$ is a graph $G=(V,E)$ with $V=\{v_1, \dots, 
v_n\}$ 
and $E=\{ \{v_i, v_{i+1}\} \mid i \in [n-1]\} \cup \{\{v_n,v_1\}\}$.
We call a graph $G=(V,E)$ where every pair of vertices is 
connected by an edge a \textit{clique}. 
For $x,y\geq 2$, we define the $(x \times y)$-\textit{grid} as the 
graph $G=(V,E)$ with $V=\{ (a,b) 
\in \NN \times \NN \mid a\leq x, b\leq y\}$ and $E=\{\{(a,b),(c,d)\} \mid 
|a-c|+|b-d|=1\}$.
An  $x$\textit{-star} with $x \in \NN$ is a graph $G=(V,E)$ with $V=\{v_0, 
\dots, v_x\}$ and $E=\{ \{v_0,v_i\} \mid i \in [x]\}$; the vertex 
$v_0$ is called the \textit{central vertex} of the star. 
We say that a connected graph $G=(V,E)$ is an 
\textit{$\alpha$-star-constellation graph} for some $\alpha \in \NN_0$ if  it 
holds 
for all 
$v \in V$ with $\deg_G(v) > 1$ that $|\{ w \in N_G(v) \mid \deg_G(w)=1\}| \geq 
|\{ w \in N_G(v) \mid \deg_G(w)>1\}| + \alpha$. That is, the graph $G$ consists 
of stars where the central vertices can be connected by edges 
such that every central vertex is adjacent to at least $\alpha$ more degree-one 
vertices  than other central vertices. Thus, an $\alpha$-star constellation graph consists of (connected) stars forming a \emph{constellation} of stars, which gives this class its name. An \emph{$\alpha$-caterpillar} is an 
$\alpha$-star-constellation graph where the graph restricted to non-degree-one 
vertices forms a path  (see \Cref{fig:ex} for an example).

\section{NP-Hardness of Equilibria Existence}\label{ch:np_nostubborn}
In this section, we prove the NP-hardness of the following two problems: 

\problemdef{Swap/(Jump)-Equilibrium [S/(J)-Eq]}
{A connected topology $G$ and a set $N=[n]$ of agents with 
	$|V(G)|=n$ 
	\big($|V(G)|>n$\big) partitioned into types $T_1$ and $T_2$.}
{Is there an assignment $\mathbf{v}$ of agents to vertices where no 
	two agents have a 
	profitable swap (no agent has a profitable jump)?} 

\citeA{AGARWAL2021103576} proved that
deciding the 
existence of a swap- or jump-equilibrium in a Schelling game with 
\emph{stubborn} and \emph{strategic} agents is NP-hard. In their 
model, 
a strategic agent wants to maximize the fraction of agents of its type in its 
occupied
neighborhood (like the agents in our definition) and a stubborn agent has a 
favorite vertex which it never leaves (in our definition, such agents do 
not exist). 
Formally, 
\citeA{AGARWAL2021103576} proved the NP-hardness of the 
following problems:

\problemdef{Swap/(Jump)-Equilibrium with Stubborn Agents [S/(J)-Eq-Stub]}
{A connected topology $G$ and a set $N=[n]=R \dot\cup S$ of agents with 
	$|V(G)|=n$ 
	\big($|V(G)|>n$\big) partitioned into types $T_1$ and $T_2$ and a set $V_S 
	=\{s_i \in V(G) \mid i \in S\}$ of 
	vertices, where $R$ is the set of strategic and~$S$~the set of stubborn agents.}
{Is there an assignment $\mathbf{v}$ of agents to vertices with  
	$v_i=s_i$ for $i \in S$ such that no two strategic agents have a 
	profitable swap (no strategic agent has a profitable jump)?} 

Both known hardness reductions heavily rely on the existence of stubborn agents.
We show that it is 
possible to polynomial-time reduce \textsc{S/J-Eq-Stub} to \textsc{S/J-Eq}. In the following 
two 
subsections, we first prove this for swap-equilibria and 
afterwards consider jump-equilibria.

\subsection{Swap-Equilibria}\label{sub:sq}
This subsection is devoted to proving the following theorem: 

\begin{restatable}{theorem}{seqhard}
\label{swapHardness:noStubborn}
\sEq{} is NP-complete. 
\end{restatable}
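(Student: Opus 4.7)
The plan is to prove NP-hardness via a polynomial-time reduction from \stubsEq{}, which is known to be NP-hard by \citeA{AGARWAL2021103576}. Membership in NP is straightforward: given a candidate assignment, one can verify in time polynomial in $n$ whether any of the $\binom{n}{2}$ agent pairs admits a profitable swap.

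Given an instance $(G, R \cup S, V_S, T_1, T_2)$ of \stubsEq{}, I will construct a modified topology $G'$ by attaching to each stubborn vertex $s_i \in V_S$ a carefully designed forcing gadget $H_i$, and populating the gadget with fresh strategic agents of the same type $\ell(i)$ as the former stubborn agent $i$. In the reduced \sEq{} instance, the formerly stubborn agent becomes an ordinary strategic agent. The gadget should enforce two properties in any swap-equilibrium of the new instance: (i) the vertex $s_i$ is occupied by some agent of type $\ell(i)$, and (ii) no agent located inside $H_i \cup \{s_i\}$ participates in a profitable swap with an agent outside $H_i$, so that the behaviour of the strategic agents on $V(G) \setminus V_S$ is governed entirely by their positions in $G$. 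A natural first candidate is to make $H_i$ a pendant structure (such as a large star or a subdivided-edge construction) populated with sufficiently many type-$\ell(i)$ agents to dominate the local utility at $s_i$.

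Assuming a gadget with these properties is available, the correspondence follows in both directions. In the forward direction, starting from a stubborn equilibrium $\mathbf{v}$, I extend it by placing each former stubborn agent at $s_i$ and filling each $H_i$ canonically with the fresh type-$\ell(i)$ agents; the absence of a profitable swap in the stubborn instance together with property (ii) implies the extension is a swap-equilibrium of the reduced instance. In the backward direction, property (i) guarantees that the restriction of any swap-equilibrium of the reduced instance to $V(G)$ is a valid assignment for the stubborn instance with $v_i = s_i$ for all $i \in S$; and since the neighbourhoods of vertices in $V(G) \setminus V_S$ are unchanged in $G'$, any profitable swap among strategic agents in $G$ would translate to a profitable swap in $G'$, contradicting the equilibrium assumption.

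The main obstacle will be the gadget design, and in particular establishing property (ii). Even when many same-type pendant neighbours pin an agent of the correct type to $s_i$, the formerly stubborn agent is now strategic and its utility at $s_i$ is typically strictly below~$1$ whenever $s_i$ has a $G$-neighbour of the opposite type; in principle it could then swap with a type-$(3-\ell(i))$ agent sitting in an all-friends neighbourhood elsewhere, with both agents strictly improving. Resolving this requires either engineering the gadget so that the agent at $s_i$ in fact has utility exactly~$1$ in every equilibrium of the reduced instance (for instance by subdividing each $G$-edge incident to $s_i$ and forcing a type-$\ell(i)$ agent onto the subdivision vertex via a nested gadget), or structuring the gadget so that no candidate swap partner of the opposite type can strictly improve by reaching $s_i$. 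Making this argument tight while keeping the reduction polynomial-sized, and checking that no interactions between different gadgets break the analysis, is the technical crux of the proof.
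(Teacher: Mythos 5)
Your high-level plan coincides with the paper's: reduce from \stubsEq{}, replace each stubborn agent by a strategic one, and attach pendant structures to the formerly stubborn vertices to pin down their types. However, the proposal stops exactly where the actual work begins, and both of the properties you defer are genuinely nontrivial. For property (ii), the paper's resolution is not a cleverer gadget but a preprocessing step you do not mention: it first shows (\Cref{le:stEqEx}) that \stubsEq{} remains NP-hard on instances where \emph{every} non-stubborn vertex is adjacent to stubborn agents of both types. This guarantees that every strategic agent $j$ on $V(G')\setminus V_{S'}$ has utility at least $\nicefrac{1}{\Delta(G')}$, hence a gadget agent swapping onto $v_j$ gets at most $\nicefrac{\Delta(G')-1}{\Delta(G')}$, which is strictly below the boosted utility of more than $\nicefrac{\Delta(G')}{\Delta(G')+1}$ that the pendant neighbors provide. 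Without this restriction your worry is real (the gadget agent could reach an all-friends neighborhood), and neither of your two suggested fixes is carried out.

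The larger gap is property (i), which you state but never argue. Since all agents in \sEq{} are strategic and interchangeable, nothing about a ``pendant star populated with fresh type-$\ell(i)$ agents'' forces agents of type $\ell(i)$ onto that star in an arbitrary equilibrium of the constructed instance; in particular, a construction that treats the two types symmetrically admits equilibria with the roles of $T_1$ and $T_2$ exchanged on the gadgets, which destroys the backward direction. The paper spends most of its effort here: it deliberately breaks the symmetry by merging all $T_1$-stubborn vertices (plus a padding set $X_1$) into one huge clique of $s=q\cdot(|T_2|+\Delta(G'))+1$ centers each carrying exactly $|T_2|-2$ pendants, while the $T_2$-stubborn vertices form a small clique with $|V(G')|^2$ pendants each, and then proves via case analysis on the number of misplaced agents (\Cref{swapEqEx:A,swapEqEx:B}) that every equilibrium places $T_1$ on the first block and $T_2$ on the second. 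The exact choice of star size $|T_2|-1$ is what makes the pigeonhole/counting arguments go through. As it stands, your proposal is a correct outline of the strategy but omits the construction and the counting lemmas that constitute the proof.
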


To prove that \sEq{} is NP-hard, we devise a polynomial-time many-one reduction from a 
restricted version of \stubsEq:
\begin{lemma}\label{le:stEqEx}
	\stubsEq{} remains NP-hard even on instances satisfying the following 
	two conditions:
	\begin{enumerate}
		\item For every vertex $v \notin V_S$ not occupied by a stubborn 
		agent there 
		exist two adjacent vertices $s_i, s_j \in V_S$ occupied by stubborn agents $i 
		\in T_1$ and $j \in T_2$.
		\item For both types there are at least five strategic and three stubborn 
		agents.
	\end{enumerate}
\end{lemma}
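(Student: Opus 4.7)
The plan is a polynomial-time many-one reduction from the unrestricted version of \stubsEq, whose NP-hardness is established by \citeA{AGARWAL2021103576}. Given an arbitrary instance $I = (G, T_1, T_2, V_S)$, I would build $I'$ in two stages. First, to secure condition~2, for each type $T_l$ in which $I$ is short of five strategic or three stubborn agents, I attach a \emph{padding gadget} $P_l$: an $8$-vertex clique containing five new strategic and three new stubborn agents, all of type $T_l$, connected to $G$ by a single edge between a stubborn vertex of $P_l$ and an arbitrary vertex of $V_S$. Because every internal edge of $P_l$ joins two agents of type $T_l$ and the connecting edge is stubborn-to-stubborn, no pre-existing strategic agent's neighborhood is altered. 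Call the resulting graph $G_0$. Second, to secure condition~1, at every strategic vertex $v \in V(G_0)\setminus V_S$ I attach $d_v \coloneqq \deg_{G_0}(v)$ pendant stubborn vertices of type $T_1$ and $d_v$ pendant stubborn vertices of type $T_2$ (``satellites''), plus one additional edge between a chosen $T_1$- and a $T_2$-satellite of $v$ so that the two stubborn neighbors of different types of $v$ are themselves adjacent.

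The technical heart of correctness is a single utility identity. For any strategic agent $i$ of type $T_l$ positioned at a strategic vertex $v$ with $a$ same-type and $b$ different-type \emph{non-satellite} neighbors in $G_0$ (so that $a+b = d_v$), its utility in $G'$ equals
\begin{equation*}
 u_i^{G'}(\mathbf v) \;=\; \frac{a + d_v}{a + b + 2 d_v} \;=\; \frac{a + d_v}{3 d_v} \;=\; \frac{1}{3} + \frac{1}{3}\, u_i^{G_0}(\mathbf v).
\end{equation*}
The right-hand side is an affine, strictly increasing function of $u_i^{G_0}(\mathbf v)$ whose coefficients do not depend on $v$. Hence a swap between two strategic agents both sitting at strategic vertices of $G_0$ is profitable in $G'$ if and only if it is profitable in $G_0$, which in turn coincides with profitability in $I$ since the padding link does not alter any neighborhood in $G$.

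Profitable swaps involving padding agents are then ruled out directly. A padding strategic agent of type $T_l$ lives at a $K_8$-vertex of $P_l$ with seven same-type non-satellite neighbors, so $u_i^{G'} = 14/21 = 2/3$. Any same-type swap is trivially not profitable; a swap into the opposite-type padding clique yields utility $7/21 = 1/3$; a swap into a non-padding strategic vertex $v'$ yields utility $(b'+d_{v'})/(3 d_{v'}) \leq 2/3$ because $b' \leq d_{v'}$; none of these is a strict improvement over $2/3$. Consequently, swap-equilibria of $I$ and $I'$ are in bijection: one extends an equilibrium of $I$ by placing padding agents arbitrarily inside their cliques, and one restricts an equilibrium of $I'$ to the original strategic positions.

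The main obstacle is preserving the strict ordering of utilities under the satellite addition. A \emph{uniform} per-vertex satellite count does \emph{not} suffice: the map $u \mapsto (a+k)/(\deg v + 2k)$ still depends on $\deg(v)$ and can reorder utilities at differently sized neighborhoods (a quick counterexample is $(a,b) = (1,0)$ versus $(3,0)$, where the $G$-utilities are equal but the $G'$-utilities differ). Scaling the satellite count with $d_v$ cancels exactly this dependence, yielding the clean affine map $u \mapsto \tfrac{1}{3} + \tfrac{u}{3}$ at every strategic vertex, which is what preserves swap profitability. The remaining bookkeeping---connectivity of $G'$, the polynomial size bound $|V(G')| \in O(|V(G)| + |E(G)|)$, and the explicit verification of conditions~1 and~2---is routine.
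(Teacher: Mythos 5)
Your proposal takes a much more ambitious route than the paper: the paper does not build a new reduction at all, but simply observes that the instances produced by the existing NP-hardness reduction of \citeA{AGARWAL2021103576} already satisfy both conditions, and verifies this by inspecting their construction. Your degree-proportional satellite idea, yielding the affine map $u \mapsto \tfrac{1}{3}+\tfrac{u}{3}$ that is independent of the vertex and hence preserves profitability of swaps exactly, is a genuinely nice observation and the forward direction of your reduction appears sound.

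However, there is a real gap in the backward direction. You claim that an equilibrium of $I'$ restricts to an equilibrium of $I$ ``at the original strategic positions,'' but this presupposes that in \emph{every} swap-equilibrium of $I'$ the five strategic vertices of the padding clique $P_l$ are occupied by agents of type $T_l$. Strategic agents are interchangeable, so nothing a priori prevents an equilibrium of $I'$ in which, say, the strategic vertices of $P_1$ are occupied (partly or wholly) by $T_2$-agents; in that case the restriction to $V(G)\setminus V_S$ has the wrong number of agents of each type and is not a valid assignment for $I$, so the reduction could map a no-instance to a yes-instance. Your $2/3$-utility computation only shows that an agent of the \emph{intended} type sitting in its padding clique never wants to leave; it says nothing about configurations where the clique is mis-occupied, and a quick check shows that, e.g., five $T_2$-agents filling the strategic vertices of $P_1$ each have utility $11/21$ and are not obviously involved in a profitable swap. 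Forcing designated vertices to be occupied by designated types in every equilibrium is precisely the hard part of such constructions --- it is what \Cref{swapEqEx:A,swapEqEx:B} of the paper's main reduction spend several pages and deliberately asymmetric gadget sizes on --- and your symmetric $K_8$ gadgets come with no such argument. To repair the proof you would either need a type-forcing lemma for the padding gadgets, or (far more simply) follow the paper and check that the known reduction already outputs instances with the two required properties.
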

\begin{proof}
	The lemma follows directly from the reduction by  \citeA{AGARWAL2021103576}, since, in their reduction, all constructed 
	instances satisfy both 
	properties.  To verify that these properties hold, we give a short 
	description of their construction below.
	
	The reduction is from \textsc{Clique}. An instance of \textsc{Clique} 
	consists of a connected graph $H=(X,Y)$ and an integer $\lambda$. Without loss 
	of generality, they assume that $\lambda>5$. Given an instance of 
	\textsc{Clique}, 
	they construct a Schelling game with agents $N=R \dot{\cup} S$ partitioned into 
	types $T_1$ and $T_2$ on a topology $G$. $R$ contains $\lambda$ strategic 
	agents of 
	type $T_1$ and $|X|+5$  strategic agents of type $T_2$. Note that we thus have 
	at least five strategic agents from each type. The stubborn agents will be 
	defined together with the topology.
	
	The graph $G$ consists of three 
	subgraphs $G_1,G_2$ and $G_3$, which are connected by single edges. $G_1$ is an 
	extended 
	copy of the given graph $H$ with added degree-one vertices $W_v$ for every $v 
	\in X$. The vertices in $W_v$ are occupied by stubborn agents from both types. 
	$G_2$ is a complete bipartite graph, 
	where one of the partitions is fully occupied by stubborn agents from both types. $G_3$ is constructed by adding degree-one vertices to a given
	tree $T$. For every vertex $v\in V(T)$, at least ten vertices are added, half of which are occupied by stubborn agents from $T_1$ and the other 
	half are occupied by stubborn agents from $T_2$. Thus, we have (more than) three stubborn agents of each type. Furthermore, it is easy to see that every vertex not occupied by a stubborn agent is adjacent to at least one stubborn agent from each type.
\end{proof}

\paragraph*{Idea Behind Our Reduction.}	
Given an 
instance of \stubsEq{} on a topology~$G'$ with vertices of stubborn 
agents $V_{S'}$, we construct a 
Schelling game without stubborn agents on a topology $G$ that 
simulates the given game. To create~$G$, we modify the graph~$G'$ by 
adding new vertices and connecting these new vertices to vertices 
from~$V_{S'}$. 
Moreover, we replace each stubborn agent by a strategic agent and add further 
strategic agents. In the construction, we ensure that if there exists a
swap-equilibrium~$\mathbf{v}'$ in the given game, then~$\mathbf{v}'$ can be extended to a swap-equilibrium $\mathbf{v}$  in the constructed game 
by replacing each stubborn agent with a strategic agent of the same type and 
filling empty vertices with further strategic agents. One particular 
challenge 
here is to ensure that the strategic agents that replace stubborn agents 
do not have a profitable swap in $\mathbf{v}$. For this, recall that, by \Cref{le:stEqEx},  we 
assume 
that in $G'$, every vertex not occupied by a stubborn agent in $\mathbf{v'}$ 
is 
adjacent to at least one stubborn agent of each type. Thus, in $\mathbf{v}'$ 
each strategic agent $i$ is always adjacent to at least one friend and has 
utility~$
u_i^{G'}(\mathbf{v'})\geq \nicefrac{1}{\Delta(G')}
$. Conversely, by swapping with agent $i$, an agent~$j$ of the other type can 
get 
utility at most
$ u_j^{G'}(\mathbf{v'}^{i \leftrightarrow j}) \leq 
\nicefrac{\Delta(G')-1}{\Delta(G')}$.
Our idea is now to ``boost'' the utility of a strategic agent $j$ that replaces 
a stubborn 
agent in $\mathbf{v}$ 
by adding enough degree-one neighbors only adjacent to $v_j$ in $G$, which we 
fill with agents of $j$'s type when extending $\mathbf{v}'$ to $\mathbf{v}$ 
such that $
u_j^{G}(\mathbf{v})\geq \nicefrac{\Delta(G')-1}{\Delta(G')} \geq 
u_j^{G}(\mathbf{\mathbf{v}}^{i \leftrightarrow j})$.

Moreover, we ensure that if there exists a 
swap-equilibrium $\mathbf{v}$ in the constructed game, then $\mathbf{v}$ 
restricted to $V(G')$, where some (strategic) agents are replaced by the 
designated stubborn 
agents of the same type, is a swap-equilibrium in the given game.
Note that the neighborhoods of all vertices in~$V(G')\setminus V_{S'}$ are the 
same in $G$ and $G'$ and thus every swap that is profitable in the assignment in the 
given game would also be 
profitable in $\mathbf{v}$. So the remaining challenge here is to
design $G$ in such a way that the 
vertices occupied by stubborn agents of some type in the input game have to be 
occupied by agents of the same type in every swap-equilibrium in the 
constructed game. We achieve this by introducing an asymmetry 
between the types in the construction.

\paragraph*{Construction.}
We are given an instance $\mathcal{I}'$ of \stubsEq{} consisting of a 
connected 
topology 
$G'$, a set of agents $[|V(G')|]=N'=R' \dot\cup S'$ partitioned into types 
$T_1'$ and~$T_2'$ with at least five strategic and three 
stubborn 
agents 
of each type, and a set $V_{S'} =\{s_i \in V(G') \mid i \in S'\}$ of vertices occupied by stubborn agents 
with each vertex $v \notin V_{S'}$ being adjacent to two vertices 
$s_i, s_j \in 
V_{S'}$ with $i \in T'_1$ and $j \in T'_2$. We denote the sets 
of vertices occupied by stubborn agents from $T_1'$ and $T_2'$ as $V_{S'_1}$ 
and 
$V_{S'_2}$, respectively. We construct an instance $\mathcal{I}$ of 
\sEq{} 
consisting of 
a 
topology $G=(V,E)$ and types $T_1$ and $T_2$ as follows. 

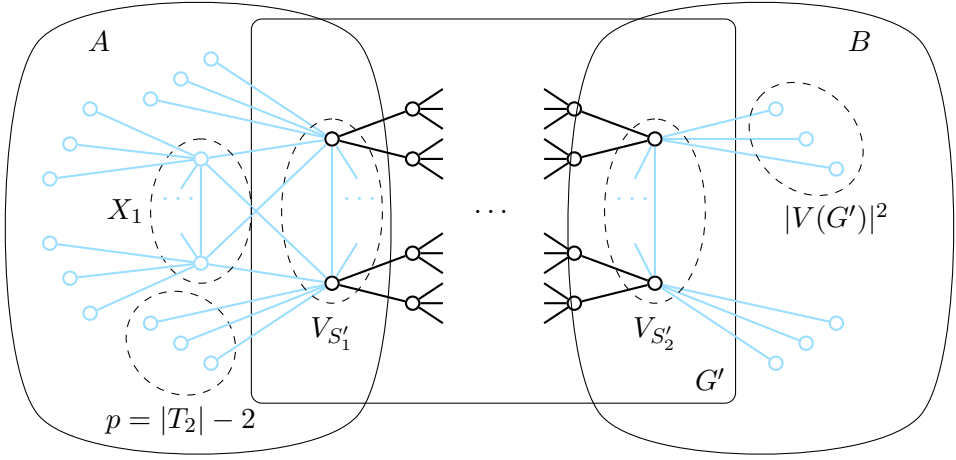
\begin{figure}[t!]
	\tikzstyle{alter}=[circle, minimum size=5pt, draw, inner sep=1pt, thick] 
	\tikzstyle{majarr}=[draw=black, thick]
	\centering
		\begin{tikzpicture}[scale=0.8]
		\draw[rounded corners] (0, 0) rectangle (0.5\textwidth,0.4\textwidth) 
		node[yshift=-0.3\textwidth, xshift=-2ex]{$G'$};
		
		\node[alter, align=center] at (8ex,12ex) (a) {};
		
		\node[alter, color=seagreen] at (-10ex,8ex) (a_1) {};
		\node[alter, color=seagreen] at (-7ex,6ex) (a_2) {};
		\node[alter, color=seagreen] at (-4ex,4ex) (a_3) {};
		\fitellipsis{a_1}{a_3}{ dashed}
		\node[ yshift=-6.2ex] at ($(a_1)!0.5!(a_3)$) {$p=|T_2|-2$};
		
		\draw[majarr, color=seagreen] (a) edge (a_1);
		\draw[majarr, color=seagreen] (a) edge (a_2);
		\draw[majarr, color=seagreen] (a) edge (a_3);
		
		\node[alter] at (8ex,\textwidth*0.4-12ex) (b) {};
		
		\node[alter, color=seagreen] at (-10ex,\textwidth*0.4-8ex) (b_1) {};
		\node[alter, color=seagreen] at (-7ex,\textwidth*0.4-6ex) (b_2) {};
		\node[alter, color=seagreen] at (-4ex,\textwidth*0.4-4ex) (b_3) {};
		
		\draw[majarr, color=seagreen] (b) edge (b_1);
		\draw[majarr, color=seagreen] (b) edge (b_2);
		\draw[majarr, color=seagreen] (b) edge (b_3);
		
		\fitellipsis{a}{b}{ dashed}
		\node[ yshift=-4ex] at (a) {$V_{S'_1}$};
		
		\draw[majarr, color=seagreen] (a) edge (b);
		\draw[majarr, color=seagreen] (a) edge ($(a)+(2.5ex,+4ex)$);
		\draw[majarr, color=seagreen] (b) edge ($(b)+(2.5ex,-4ex)$);
		\node[color=seagreen] at ($(b)+(3ex,-6ex)$) {$\dots$};
		
		\foreach \x in {a,b} {
			\node[alter] at ($(\x)+(8ex,+3ex)$) (g_\x_1) {};
			\node[alter] at ($(\x)+(8ex,-2ex)$) (g_\x_2) {};
			
			\foreach \y in {g_\x_1,g_\x_2} {
				
				\draw[majarr] (\x) edge (\y);
				
				\draw[majarr] (\y) edge ($(\y)+(3ex,+2ex)$);
				\draw[majarr] (\y) edge ($(\y)+(3ex,0ex)$);
				\draw[majarr] (\y) edge ($(\y)+(3ex,-2ex)$);
			} 
		}
		
		\node[alter, color=seagreen] at (-5ex,14ex) (x) {};
		\node[alter, color=seagreen] at (-5ex,\textwidth*0.4-14ex) (y) {};
		
		\draw[majarr, color=seagreen] (x) edge (y);
		\draw[majarr, color=seagreen] (x) edge (a);
		\draw[majarr, color=seagreen] (x) edge (b);
		\draw[majarr, color=seagreen] (y) edge (a);
		\draw[majarr, color=seagreen] (y) edge (b);
		
		\draw[majarr, color=seagreen] (x) edge ($(x)+(-2ex,+3ex)$);
		\draw[majarr, color=seagreen] (y) edge ($(y)+(-2ex,-3ex)$);
		\node[color=seagreen] at ($(y)+(-2ex,-4ex)$) {$\dots$};
		
		\fitellipsis{x}{y}{ dashed}
		\node[ xshift=-6ex] at ($(x)!0.5!(y)$) {$X_1$};
		
		\node[alter, color=seagreen] at (-20ex,16ex) (x_1) {};
		\node[alter, color=seagreen] at (-18ex,12.5ex) (x_2) {};
		\node[alter, color=seagreen] at (-16ex,9ex) (x_3) {};
		
		\draw[majarr, color=seagreen] (x) edge (x_1);
		\draw[majarr, color=seagreen] (x) edge (x_2);
		\draw[majarr, color=seagreen] (x) edge (x_3);
		
		\node[alter, color=seagreen] at (-20ex,\textwidth*0.4-16ex) (y_1) {};
		\node[alter, color=seagreen] at (-18ex,\textwidth*0.4-12.5ex) (y_2) {};
		\node[alter, color=seagreen] at (-16ex,\textwidth*0.4-9ex) (y_3) {};
		
		\draw[majarr, color=seagreen] (y) edge (y_1);
		\draw[majarr, color=seagreen] (y) edge (y_2);
		\draw[majarr, color=seagreen] (y) edge (y_3);
		
		\node[alter] at (\textwidth*0.5-8ex,12ex) (c) {};
		
		\node[alter, color=seagreen] at (\textwidth*0.5+10ex,8ex) (c_1) {};
		\node[alter, color=seagreen] at (\textwidth*0.5+7ex,6ex) (c_2) {};
		\node[alter, color=seagreen] at (\textwidth*0.5+4ex,4ex) (c_3) {};
		
		\draw[majarr, color=seagreen] (c) edge (c_1);
		\draw[majarr, color=seagreen] (c) edge (c_2);
		\draw[majarr, color=seagreen] (c) edge (c_3);
		
		\node[alter] at (\textwidth*0.5-8ex,\textwidth*0.4-12ex) (d) {};
		
		\node[alter, color=seagreen] at 
		(\textwidth*0.5+10ex,\textwidth*0.4-15ex) 
		(d_1) {};
		\node[alter, color=seagreen] at 
		(\textwidth*0.5+7ex,\textwidth*0.4-12ex) 
		(d_2) 
		{};
		\node[alter, color=seagreen] at 
		(\textwidth*0.5+4ex,\textwidth*0.4-9ex) 
		(d_3) 
		{};
		
		\draw[majarr, color=seagreen] (d) edge (d_1);
		\draw[majarr, color=seagreen] (d) edge (d_2);
		\draw[majarr, color=seagreen] (d) edge (d_3);
		\fitellipsis{d_1}{d_3}{ dashed}
		\node[ yshift=-4ex] at (d_1) {$|V(G')|^2$};
		
		\fitellipsis{c}{d}{ dashed}
		\node[ yshift=-4ex] at (c) {$V_{S'_2}$};
		
		\draw[majarr, color=seagreen] (c) edge (d);
		\draw[majarr, color=seagreen] (c) edge ($(c)+(-2ex,+4ex)$);
		\draw[majarr, color=seagreen] (d) edge ($(d)+(-2ex,-4ex)$);
		\node[color=seagreen] at ($(d)+(-2ex,-6ex)$) {$\dots$};
		
		\foreach \x in {c,d} {
			\node[alter] at ($(\x)+(-8ex,+3ex)$) (g_\x_1) {};
			\node[alter] at ($(\x)+(-8ex,-2ex)$) (g_\x_2) {};
			
			\foreach \y in {g_\x_1,g_\x_2} {
				
				\draw[majarr] (\x) edge (\y);
				
				\draw[majarr] (\y) edge ($(\y)+(-3ex,+2ex)$);
				\draw[majarr] (\y) edge ($(\y)+(-3ex,0ex)$);
				\draw[majarr] (\y) edge ($(\y)+(-3ex,-2ex)$);
			} 
		}
		
		\node[] at (\textwidth*0.25, \textwidth*0.2) {$\dots$};
		
		\draw [] plot [smooth cycle] coordinates { (-3.5,-0.2) 
			(-3.5,6) 
			(1.75,6) (1.75,-0.2)   };
		\draw [] plot [smooth cycle] coordinates { (5.75,-0.2) 
			(5.75,6) 
			(11,6) (11,-0.2)   };
		\node[] at (-2.5,6) {{$A$}};
		\node[] at (10, 6) {{$B$}};
		\end{tikzpicture}
	\caption{The constructed topology $G$. Modifications made to the given 
		graph 
		$G'$ are colored light blue. Note that 
		the vertices in $V_{S'_1} \cup X_1$ and $V_{S'_2}$ each form a clique.}
	\label{np:G}
\end{figure}
The graph $G$ (sketched in \Cref{np:G}) is an extended copy of 
the given graph~$G'$ and contains all vertices and edges from $G'$. We add 
three 
sets of vertices $M_1, X_1$, and $M_2$ as specified below. For every vertex $v 
\in V_{S'_2}$, we insert $|V(G')|^2$ degree-one 
vertices only adjacent to $v$ and add them to $M_2$. We connect the
vertices in $ V_{S'_2}$ to form a clique.
Let $$q \coloneqq \Delta(G')+|V(G')|^2 + |V_{S'_2}|$$ and note that~$q$ is an upper 
bound on the degree of a vertex from $V_{S'_2}$.
Let $X_1$ be a set of $s -|V_{S'_1}|$ vertices, where $$s\coloneqq q \cdot 
(|T'_2|+|M_2| + \Delta(G) )   + 1$$ (note that $s> |V_{S'_1}|$). Thus, 
$|V_{S'_1} \cup X_1|=s$ (we 
use this property to introduce the mentioned asymmetry between the two types).
We connect the vertices in $V_{S'_1} \cup X_1$ 
to form a clique. Let $$p 
\coloneqq |T_2'|+|M_2| -2 >\Delta(G')^2$$ (this choice of $p$ is 
important to ensure that vertices in $V_{S'_1}$ are occupied by 
agents from~$T_1$ in each swap-equilibrium of the constructed game). For every vertex $v \in V_{S'_1} 
\cup 
X_1$, we insert $p$ degree-one vertices only adjacent to $v$ and add them to 
$M_1$. Notably, the neighborhood of all vertices in $V(G')\setminus V_{S'}$ is 
the same in $G'$ and $G$.

The set $N =T_1 \dot \cup T_2$ of agents is defined as follows. We have 
$|T_1|=|T'_1|+|M_1|+|X_1|$ agents in $T_1$ and $|T_2|=|T'_2|+|M_2|$ agents in 
$T_2$. By the construction of $X_1$ above, we have that $|V_{S'_1} \cup X_1|=q 
\cdot (|T'_2|+|M_2| + \Delta(G') )  + 1 = q \cdot (|T_2|+ \Delta(G'))+1$. It 
also holds that $p = |T_2'|+|M_2| -2 = |T_2| -2$.

\paragraph*{Proof of Correctness.}
Let $A \coloneqq M_1 \cup 
V_{S'_1} 
\cup X_1$ and $B \coloneqq M_2 \cup V_{S'_2}$ (the vertices from $A$ should be 
occupied by agents from $T_1$, while the vertices from $B$ should be occupied by 
agents from $T_2$).
We start with showing the (easier) forward 
direction of the correctness of the reduction: 

\begin{lemma}\label{swapEqEx:For}
	If the given instance $\mathcal{I}'$ of \stubsEq{} admits a 
	swap-equilibrium, then the 
	constructed 
	instance $\mathcal{I}$ of \sEq{} admits a swap-equilibrium.
\end{lemma}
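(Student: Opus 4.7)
The plan is to take the given swap-equilibrium $\mathbf{v}'$ of $\mathcal{I}'$ and extend it to an assignment $\mathbf{v}$ of $\mathcal{I}$ by: keeping every strategic agent where it was, replacing each stubborn agent of type $T'_\ell$ on a vertex in $V_{S'_\ell}$ by a fresh strategic agent of type $T_\ell$, and populating $M_1 \cup X_1$ with new $T_1$-agents and $M_2$ with new $T_2$-agents. This is a valid assignment because $|T_1|=|T'_1|+|M_1|+|X_1|$ and $|T_2|=|T'_2|+|M_2|$. What remains is to rule out profitable swaps in $\mathbf{v}$, and since same-type swaps are never profitable it suffices to consider cross-type swaps.

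I would sort the possible swap partners into three classes and dispose of the first two quickly. First, any swap between two agents whose vertices both lie in $V(G')\setminus V_{S'}$ is settled by the fact that these vertices have identical neighborhoods in $G$ and $G'$, and the occupants of those neighborhoods (either the same strategic agents or strategic replacements of the same type as the original stubborn agents) carry the same type as in $\mathbf{v}'$; hence the two agents' utilities, and the utilities after the hypothetical swap, coincide with those in $\mathbf{v}'$, which was a swap-equilibrium. Second, any swap in which at least one partner sits in $M_1\cup X_1\cup M_2$ cannot be profitable because such an agent has all its neighbors of its own type in $\mathbf{v}$ and thus already attains utility $1$.

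The bulk of the argument treats swaps where a ``replacement'' agent $j$ on some vertex $v_j\in V_{S'_\ell}$ swaps with a cross-type agent $i$ on some vertex $v_i$. I would bound $u_j^G(\mathbf{v})$ from below using the boosters: for $\ell=1$, the agent $j$ has at least $p$ friends from its attached $M_1$-pendants and at least $s-1$ friends from the $V_{S'_1}\cup X_1$-clique, giving
\[
u_j^G(\mathbf{v})\;\geq\;\frac{p+s-1}{p+s-1+\Delta(G')},
\]
and symmetrically for $\ell=2$ via the $|V(G')|^2$ pendants in $M_2$ and the $V_{S'_2}$-clique. Then I would go through the possible targets $v_i$: if $v_i\in M_{3-\ell}$ then $j$ ends up with utility $0$; if $v_i\in V_{S'_{3-\ell}}$ the utility is at most $\Delta(G')/q$; if $v_i\in V(G')\setminus V_{S'}$, then by the structural assumption from \Cref{le:stEqEx} the vertex $v_i$ is adjacent to a stubborn representative of type $3-\ell$ (still present in $\mathbf{v}$), so $u_j^G(\mathbf{v}^{i\leftrightarrow j})\leq (\Delta(G')-1)/\Delta(G')$. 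The inequality $p+s-1>\Delta(G')^2$, which is built into the construction via $p>\Delta(G')^2$, makes the lower bound on $u_j^G(\mathbf{v})$ strictly exceed each of these upper bounds, so $j$'s side of the swap is never profitable.

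The main obstacle I expect is bookkeeping rather than a conceptual hurdle: I need to ensure that the target-vertex case split is exhaustive (covering $M_1,M_2,X_1,V_{S'_1},V_{S'_2}$, and $V(G')\setminus V_{S'}$, distinguishing whether the occupant is $T_1$ or $T_2$), and to double-check in each case that the neighborhood in $G$ of the destination vertex used for the upper bound matches the one actually inherited from $G'$ after the swap. With the booster sizes chosen as in the construction, the key inequality $p+s-1>(\Delta(G')-1)\Delta(G')$ closes every subcase uniformly, yielding that $\mathbf{v}$ is indeed a swap-equilibrium of $\mathcal{I}$.
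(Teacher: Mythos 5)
Your proposal follows essentially the same route as the paper's proof: the same extension of $\mathbf{v}'$ to $\mathbf{v}$, the same quick dismissal of swaps between two agents in $V(G')\setminus V_{S'}$ and of any agent in $M_1\cup X_1\cup M_2$, and the same booster-based lower bound on a replacement agent's utility played off against the three upper bounds for the possible target vertices. One tiny slip worth fixing: for a target $v_i\in V_{S'_{3-\ell}}$ the post-swap utility is at most $\Delta(G')/\deg_G(v_i)\le \Delta(G')/\Delta(G')^2=1/\Delta(G')$ (the degree is at least the number of attached pendants, which exceeds $\Delta(G')^2$), not ``at most $\Delta(G')/q$'' --- $q$ is an \emph{upper} bound on that degree, so dividing by $q$ goes the wrong way; the conclusion is unaffected.
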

\begin{proof}
	Assume that there exists a 
	swap-equilibrium $\mathbf{v}'$ in the given instance $\mathcal{I}'$ with 
	stubborn agents. 
	Note that in $\mathbf{v}'$, the vertices in $V_{S'_1}$ and $V_{S'_2}$ are 
	occupied by 
	stubborn 
	agents
	from 
	$T_1'$ and $T_2'$, respectively. We now define an assignment $\mathbf{v}$ 
	for the 
	Schelling 
	game $\mathcal{I}$ without stubborn agents  and prove that it is a 
	swap-equilibrium: In $\mathbf{v}$, the 
	vertices in $V(G) \cap V(G')$ are occupied by agents of the same type as 
	the 
	agents 
	in $\mathbf{v}'$. The vertices in $X_1$ are occupied by agents from $T_1$. 
	For 
	$t \in \{1,2\}$, the added degree-one vertices in $M_t$ are occupied by 
	agents 
	from $T_t$. Hence, the vertices in $A=M_1 \cup V_{S'_1} \cup X_1$ are 
	occupied 
	by 
	agents from $T_1$ and the vertices in $B=M_2 \cup V_{S'_2}$ are occupied by 
	agents 
	from $T_2$. Note that in $\mathbf{v}$, exactly $|T_1|$ agents from $T_1$ 
	and $|T_2|$ agents from $T_2$ are assigned as we have 
	$|T_1|=|T'_1|+|M_1|+|X_1|$ and $|T_2|=|T'_2|+|M_2|$.
	
	Next, we prove that $\mathbf{v}$ is  a swap-equilibrium on $G$ in 
	$\mathcal{I}$ by showing 
	that 
	no profitable swap exists. We first observe that the utility of an agent 
	$i\in T_1$ on 
	a vertex $w \in M_1 \cup X_1$ is $u_i(\mathbf{v})=1$, since 
	$N_G(w)\subseteq A$ 
	and all agents in $A$ are from $T_1$.  The same holds analogously for 
	all
	agents on a vertex in $M_2$. Therefore, the agents on vertices in $M_1 \cup 
	M_2 
	\cup X_1$ cannot be involved in a profitable swap.  Note further that 
	the 
	utility 
	of any agent on a vertex $v \in V(G') \setminus V_{S'}$  is the same in 
	$\mathbf{v}$ and $\mathbf{v}'$, as the neighborhood of $v$ is identical in 
	$G$ and $G'$ and all vertices in the neighborhood are occupied 
	by  agents of the same type in $\mathbf{v}$ and $\mathbf{v}'$. Since 
	$\mathbf{v}'$ is a swap-equilibrium, 
	a 
	profitable swap must therefore involve at least one agent $i$ with $v_i \in 
	V_{S'}$.
	
	Let $Y=V_{S'_1} \cup X_1$ if $v_i \in V_{S'_1}$ and $Y=V_{S'_2}$ otherwise. 
	Denote the 
	number of degree-one neighbors adjacent to $v_i$ that were added to $G'$ in 
	the construction of $G$ by $x$. By 
	construction 
	of $G$, it holds that $x>\Delta(G)^2$ and that the agent $i$ is, among 
	others, adjacent to the 
	vertices 
	in $Y \setminus \{v_i\}$ and the $x>\Delta(G')^2$ degree-one neighbors 
	(all these vertices are occupied by friends). As the only other neighbors of 
	$v_i$ need to come from $V(G')\setminus V_{S'}$, it holds that $\deg_G(v_i) 
	\leq x + |Y 
	\setminus 
	\{v_i\}| + \Delta(G')$. Thus, the utility of agent $i$ on $v_i$ is:
	$$
	u_i^{G}(\mathbf{v})\geq \frac{x + |Y \setminus \{v_i\}|}{x + |Y \setminus 
		\{v_i\}| + \Delta(G')} > \frac{x}{x+\Delta(G')}> 
	\frac{\Delta(G')^2}{\Delta(G')^2+\Delta(G')}=\frac{\Delta(G')}{\Delta(G')+1}.$$
	We now distinguish between swapping $i$ with an agent $j$ with $v_j\in 
	V_{S'}$ and with $v_j\in V(G')\setminus V_{S'}$ (note that this exhausts all cases 
	as 
	we have already argued above that all agents placed on newly added vertices can 
	never be part of a profitable swap).
	First, consider swapping $i$ with an agent $j$ of the other type with $v_j 
	\in 
	V_{S'}$. On vertex $v_j$, agent $i$ can at most have $\Delta(G')$ adjacent 
	friends, as all vertices that are connected to $v_j$ by edges added  
	in the construction (that are, vertices in $A$ or $B$) are 
	occupied by friends of $j$. 
	It holds that $\deg_G(v_j)\geq \Delta(G')^2$, since $v_j$ is 
	adjacent 
	to at least $\Delta(G')^2$ degree-one neighbors. Therefore, the swap can 
	not be 
	profitable, as
	\begin{align*}
	u_i^{G}(\mathbf{v})\geq \frac{\Delta(G')}{\Delta(G')+1}> 
	\frac{1}{\Delta(G')} 
	= 
	\frac{\Delta(G')}{\Delta(G')^2} \geq u_i^{G}(\mathbf{v}^{i \leftrightarrow 
		j}).
	\end{align*}
	Hence, consider swapping $i$ with an agent $j$ of the other type on $v_j 
	\in 
	V(G') \setminus V_{S'}$. Recall that by \Cref{le:stEqEx} we have assumed 
	that for every vertex $v \notin V_{S'}$ not occupied by a stubborn agent, there 
	exist two adjacent vertices $s_i, s_j \in V_{S'}$ occupied by stubborn agents 
	$i 
	\in T_1$ and $j \in T_2$. Since $v_j \in V(G') \setminus V_{S'}$, the agent 
	$j$ 
	is thus adjacent to at least one friend. More precisely, we have $
	u_j^{G'}(\mathbf{v}')\geq \frac{1}{\Delta(G')}$.
	As noted above, the neighborhood of $v_j$ is identical in $G$ and $G'$. We 
	therefore have $u_j^{G}(\mathbf{v})= u_j^{G'}(\mathbf{v}')$. Thus, by 
	swapping 
	with 
	agent $j$, agent $i$ can at most get the following~utility $
	u_i^{G}(\mathbf{v}^{i \leftrightarrow j})\leq 
	\frac{\Delta(G')-1}{\Delta(G')}$.
	It follows that swapping $i$ and $j$ cannot be profitable, as:
	\begin{align*}
	u_i^{G}(\mathbf{v})\geq 
	\frac{\Delta(G')}{\Delta(G')+1}>\frac{\Delta(G')-1}{\Delta(G')}\geq 
	u_i^{G}(\mathbf{v}^{i \leftrightarrow j})
	\end{align*}
	Summarizing, no profitable swap is possible and $\mathbf{v}$ is a 
	swap-equilibrium  for the constructed Schelling game $\mathcal{I}$.
\end{proof}

It remains to prove the backwards direction of the reduction. 
For this, we start by making some definitions. 
Afterwards, we prove in 
\Cref{swapEqEx:A} that in any swap-equilibrium $\mathbf{v}$ in $\mathcal{I}$ 
all vertices 
from $M_1 \cup 
V_{S'_1} 
\cup X_1$ are occupied by agents from $T_1$ and in \Cref{swapEqEx:B} that all 
vertices from $M_2 \cup 
V_{S'_2}$ are occupied by agents from $T_2$. Using this, we conclude the proof 
by proving the backwards 
direction of the correctness of the reduction in \Cref{swapEqEx:Back}.

Recall that $A= M_1 \cup 
V_{S'_1} 
\cup X_1$ and $B = M_2 \cup V_{S'_2}$. Observe that the subgraph $G[A]$ 
(see \Cref{np:G_A}) consists of $q \cdot (|T_2|+ \Delta(G')) +1$ stars which 
each have $|T_2|-1$ vertices. The  central vertices of these stars are 
connected such that they form a clique. As $q = 
\Delta(G')+|V(G')|^2 
+|V_{S'_2}| > |V_{S'_2}| \geq 3$, note that $G[A]$ consists of at least 
three stars.
Note that 
$V(G)\setminus A = (V(G') \setminus V_{S'}) \cup B = (V(G') \setminus V_{S'}) 
\cup 
(M_2 \cup V_{S'_2})$. The subgraph $G[V(G)\setminus A]$ (shown in 
\Cref{np:G_VA}) 
is 
connected, since every vertex in $V(G') \setminus V_{S'}$ is connected to a 
vertex in $V_{S'_2}$ (by our assumption concerning $\mathcal{I}'$ from 
\Cref{le:stEqEx}) and the vertices in $V_{S'_2}$ form a 
clique 
(by the construction of $G$). Additionally, all vertices in $M_2$ are adjacent 
to exactly one vertex in $V_{S'_2}$. We start by proving 
\Cref{swapEqEx:A,swapEqEx:B}, which state that in every swap-equilibrium, the 
vertices in $A$ and $B$ have to be occupied by agents from $T_1$ and $T_2$, 
respectively.

\begin{figure}[t!]
	\centering
	\begin{subfigure}[t]{0.475\textwidth}
		\tikzstyle{alter}=[circle, minimum size=12.5pt, draw, inner sep=1pt , semithick] 
		\tikzstyle{majarr}=[draw=black, thick]
		\centering
		\begin{tikzpicture}[auto]
		
		\node[alter, color=blue] at (2ex,0ex) (a) {};
		\node[alter, color=blue] at (13ex,0ex) (b) {};
		\node[alter, color=blue] at (-3ex,15ex) (c) {};
		\node[alter, color=blue] at (18ex,15ex) (d) {};
		\node[alter, color=blue] at (7.5ex,12ex) (e) {};
		
		\foreach \x in {a,b,c,d,e} {
			\foreach \y in {a,b,c,d,e} {
				\if\x \y\else
				\draw[majarr] (\x) edge  (\y);
				\fi
			}
		}
		
		\foreach \x in {c,d,e} {
			
			\node[alter, xshift=-4ex, yshift=6ex] at (\x) (\x_1) {};
			\node[alter, yshift=7ex] at (\x) (\x_2) {};
			\node[alter,xshift=4ex, yshift=6ex] at (\x) (\x_3) {};
			
			\draw[majarr] (\x) edge  (\x_1);
			\draw[majarr] (\x) edge  (\x_2);
			\draw[majarr] (\x) edge  (\x_3);
		}
		
		\foreach \x in {a,b} {
			
			\node[alter, xshift=-3.5ex, yshift=-5ex] at (\x) (\x_1) {};
			\node[alter, yshift=-6ex] at (\x) (\x_2) {};
			\node[alter, xshift=3.5ex, yshift=-5ex] at (\x) (\x_3) {};
			
			\draw[majarr] (\x) edge  (\x_1);
			\draw[majarr] (\x) edge  (\x_2);
			\draw[majarr] (\x) edge  (\x_3);
		}
		
		\node[color=blue, xshift=-5ex] at ($(a)!0.5!(c)$) { $S'_1 \cup X_1$};
		
		\fitellipsiss{c_1}{c_3}{ dashed}
		\node[ yshift=+6ex] at ($(c_1)!0.5!(c_3)$) {\small 
			$|T_2|-2$};
		
		\end{tikzpicture}
		
		\caption{The subgraph $G[A]$. Observe that $G[A]$ consists of $q \cdot 
			(|T_2|+ \Delta(G')) +1$ stars which each have $|T_2|-1$ vertices  and 
			that are connected such that the central vertices form a clique.  }
		\label{np:G_A}
	\end{subfigure}
	\hfill
	\begin{subfigure}[t]{0.475\textwidth}
		\tikzstyle{alter}=[circle, minimum size=12.5pt, draw, inner sep=1pt, semithick] 
		\tikzstyle{majarr}=[draw=black, thick]
		\centering
		\begin{tikzpicture}[auto]
		
		\node[alter] at (0ex,0ex) (a) {};
		\node[alter] at (0ex,10ex) (b) {};
		\node[alter] at (5ex,18ex) (c) {};
		
		\node[yshift=-5ex] at (a) {$V(G')\setminus V_{S'}$};
		
		\node[alter, color=blue] at (15ex,0ex) (x) {};
		\node[alter, color=blue] at (20ex,10ex) (y) {};
		\node[alter, color=blue] at (15ex,18ex) (z) {};
		
		\draw[ majarr,color=blue] (x) edge  (y);
		\draw[majarr,color=blue] (x) edge  (z);
		\draw[majarr,color=blue] (z) edge  (y);
		
		\node[color=blue, yshift=-5ex] at (x) {$S'_2$};
		
		\draw[majarr] (b) edge  (c);
		\draw[majarr] (a) edge  (x);
		\draw[majarr] (b) edge  (x);
		\draw[majarr] (b) edge  (y);
		\draw[majarr] (c) edge  (z);
		
		\foreach \x in {x,y,z} {
			
			\node[alter, xshift=+5ex, yshift=+4ex, color=red] at (\x) (\x_1) {};
			\node[alter, xshift=+5ex, color=red] at (\x) (\x_2) {};
			\node[alter, xshift=+5ex, yshift=-4ex, color=red] at (\x) (\x_3) {};
			
			\draw[majarr, color=red] (\x) edge  (\x_1);
			\draw[majarr, color=red] (\x) edge  (\x_2);
			\draw[majarr, color=red] (\x) edge  (\x_3);
		}
		
		\node[color=red, yshift=-2ex, xshift=4ex] at (x_3) {$M_2$};
		
		\end{tikzpicture}
		
		\caption{Visualization of the subgraph $G[V(G)\setminus A]$. Note that  
			$G[V(G)\setminus A]$ 
			is connected, since every vertex in $V(G') \setminus V_{S'}$ is 
			adjacent to at least one vertex in $S'_2$. }
		\label{np:G_VA}
	\end{subfigure}
	\caption{Schematic visualization of the induced subgraphs $G[A]$ and 
		$G[V(G) 
		\setminus 
		A]$. Recall that 
		$A = S'_1 \cup M_1 \cup X_1$. }
\end{figure}
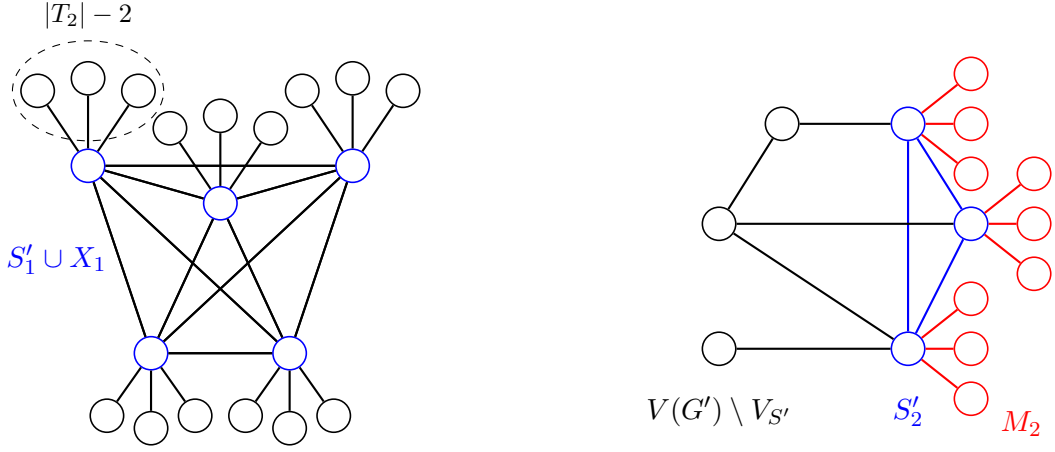

\begin{lemma}\label{swapEqEx:A}
	In any swap-equilibrium $\mathbf{v}$ in $\mathcal{I}$, all vertices in $A$ 
	are occupied by 
	agents from~$T_1$.
\end{lemma}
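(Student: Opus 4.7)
The plan is to argue by contradiction. Suppose $\mathbf{v}$ is a swap-equilibrium in which some vertex $v\in A$ is occupied by a $T_2$ agent $i$; I will exhibit a profitable swap. The construction provides four quantitative levers I plan to exploit: (i) $|T_2|=p+2$ is small and $p>\Delta(G')^2$; (ii) the center-clique $V_{S'_1}\cup X_1$ of $A$ has size $s>q\cdot(|T_2|+\Delta(G'))$, so $s$ dominates $p$ and $\Delta(G')$; (iii) the identity $|T_1|-|A|=|T'_1|-|V_{S'_1}|$ is a fixed positive quantity, hence any $T_2$ agent in $A$ forces at least one $T_1$ agent into $V(G)\setminus A=B\cup C$, where $C=V(G')\setminus V_{S'}$; and (iv) by Lemma~\ref{le:stEqEx}, every vertex of $C$ is adjacent in $G$ to at least one vertex of $V_{S'_2}$.

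Next, I would split into cases according to the location of $v$. If $v$ is a center $c\in V_{S'_1}\cup X_1$, then $i$ has at most $p+1$ friend-neighbors out of a degree of at least $s+p-1$, giving $u_i(\mathbf{v})\leq (p+1)/(s+p-1)\ll 1$. If $v$ is a leaf of $M_1$ with adjacent center $c$, then either $c$ also carries a $T_2$ agent (which reduces to the previous case applied to $c$) or $c\in T_1$ and $u_i(\mathbf{v})=0$. In either situation $i$ strictly benefits from any swap sending it to a vertex with at least one $T_2$ neighbor.

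For the swap partner I would choose a $T_1$ agent $j$ outside $A$ whose position $v_j$ borders at least one $T_2$ agent. After the swap, $j$ lands in $A$, where its neighborhood is $T_1$-dense: its new utility is $1$ if $v$ is a leaf of $M_1$ with $c\in T_1$, and at least $1-(p+2)/(s+p-1)$ if $v=c$ is a center. Either way this dominates $j$'s old utility, which is strictly below $1$ because of the $T_2$ neighbor at $v_j$, while $i$ inherits the positive $T_2$-fraction at $v_j$, strictly above its bound from the previous paragraph. To produce such a $j$, I use the forced displacement: if some displaced $T_1$ agent lies in $C$, lever~(iv) combined with the presence of a $T_2$ agent in $V_{S'_2}$ (or elsewhere adjacent to $C$) supplies a $T_2$ neighbor; if instead all displaced $T_1$ agents lie in $B$, then either a $T_1$ agent at a $V_{S'_2}$-center has many $T_2$ neighbors through the clique and its $M_2$-leaves, or a $T_1$ leaf in $M_2$ has a $T_2$ center.

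The main obstacle I foresee is the combinatorial bookkeeping needed to exhibit the required $j$ across all sub-configurations of $T_2$ placements outside $A$. The clinching observation is that $|V_{S'_1}|,|V_{S'_2}|\geq 3$, the universal $V_{S'}$-adjacency guaranteed by Lemma~\ref{le:stEqEx}, and the relation $|T_2|=p+2\leq |B|$ together preclude any assignment in which every non-$c$ $T_2$ agent sits in an all-$T_2$ neighborhood; a short pigeonhole argument then delivers a partner $j$, closing the contradiction.
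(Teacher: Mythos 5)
Your overall strategy (exhibit a low-utility $T_2$ agent in $A$ and a $T_1$ partner outside $A$ who trades into the $T_1$-dense clique, using the dominance of $s$ over $q$, $p$, $\Delta(G')$) matches the paper's treatment of two of its three cases, but it has a genuine missing case. Your swap partner is always ``a $T_1$ agent $j$ outside $A$ whose position borders at least one $T_2$ agent,'' and your concluding pigeonhole argument is supposed to guarantee such a $j$. It cannot: if \emph{all} $|T_2|$ agents sit inside $A$ (which is consistent with your hypotheses, e.g.\ all of them placed on two $X_1$-stars), then every vertex of $V(G)\setminus A$ is occupied by a $T_1$ agent, and since $X_1$-centers and $M_1$-leaves have no edges leaving $A$, no vertex outside $A$ need be adjacent to any $T_2$ agent at all. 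Your partner $j$ then does not exist, and your ``forced displacement'' and pigeonhole remarks do not produce one. The paper's Case~2 ($x=|T_2|$) resolves exactly this configuration with a swap carried out \emph{entirely inside} $A$: a utility-zero agent on a degree-one vertex of one mixed star is swapped with a suitable agent of the other type on (or adjacent to) the center of a \emph{different} star, with a short sub-case analysis over the types occupying the central clique. Without an argument of this kind your proof is incomplete.

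A second, more local gap: when $v=c$ is a center you conclude that $j$'s post-swap utility, at least $1-(p+2)/(s+p-1)$, ``dominates $j$'s old utility, which is strictly below $1$.'' That inference is invalid, since the new utility is also strictly below $1$. What is actually needed (and what the paper proves) is the quantitative bound $u_j(\mathbf{v})\le (q-1)/q$, which follows because every vertex of $V(G)\setminus A$ has degree at most $q$ and $v_j$ has a non-friend neighbor, combined with $1-(p+2)/(s+p-1)>(q-1)/q$, which follows from $s>q\cdot(|T_2|+\Delta(G'))$. You have all the ingredients in your preamble, but the step as written is a non sequitur. (Relatedly, you should ensure the $T_2$ neighbor of $v_j$ is distinct from $i$ itself, else $i$ may arrive at $v_j$ with no remaining friend; this matters when $i$ sits on a $V_{S'_1}$-center adjacent to $v_j\in V(G')\setminus V_{S'}$.)
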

\begin{proof}
	For the sake of contradiction, assume that $\mathbf{v}$ is a swap-equilibrium 
	in $\mathcal{I}$ where $x>0$ agents from $T_2$ are placed on vertices from 
	$A$. We distinguish the following three cases based on the number of agents 
	from $T_2$ in $A$ and prove that $\mathbf{v}$ cannot be stable.
	\paragraph{Case 1:} Assume that $x<|T_2|-1$. Since all stars in $G[A]$ have 
	$|T_2|-1>x$ vertices, at least one of the stars has to contain agents from both 
	types. Thus, there exists an agent $i \in T_t$ for some $t \in \{1,2\}$ on a 
	degree-one vertex in $A$ with $u_i(\mathbf{v})=0$. Since $|T_2|-x>0$ and 
	$|T_1|>|A|$, there have to be agents from both $T_1$ and $T_2$ in 
	$G[V(G)\setminus 
	A]$. As observed before, $G[V(G)\setminus A]$ is connected. Thus, there exist 
	agents $i' \in T_t$ and $j' \in T_{t'}$  with $t' \neq t$ in $G[V(G)\setminus 
	A]$ 
	that are adjacent and hence have $u_{i'}(\mathbf{v})<1$ and 
	$u_{j'}(\mathbf{v})<1$. Then, swapping $i$ and $j'$ is profitable, since we 
	have $u_i(\mathbf{v})=0<u_i(\mathbf{v}^{i \leftrightarrow j'})$ and 
	$u_{j'}(\mathbf{v})<1=u_{j'}(\mathbf{v}^{i \leftrightarrow j'})$.
	
	\paragraph{Case 2:} Assume that $x=|T_2|$. Recall that all stars in $G[A]$ 
	have $|T_2|-1<x$ vertices.  Thus, there have to be agents from $T_2$ on at 
	least two stars. Since it also holds that $x< 2\cdot (|T_2|-1)$, there are 
	agents from both types on at least one of the stars.  Let $v_j$ be the central 
	vertex of this star occupied by some agent $j$. There exists an agent $i \in 
	T_t$ for some $t \in  \{1,2\}$ on a degree-one vertex adjacent to $v_j$ with 
	$u_{i}(\mathbf{v})=0$. The agent $j$ is from type $T_{t'}$ with $t' \neq t$. As 
	noted before,  $G[A]$ contains at least $3$ stars. Since $x< 2\cdot (|T_2|-1)$, 
	there also have to be agents from $T_1$ on at least two of the stars.  We make 
	a case distinction based on the types of the agents on central vertices in $G[A]$.
	
	First, suppose that all central vertices are occupied by agents from $T_{t'}$. 
	As noted above, it holds for both types that agents of this type occupy 
	vertices from at least two stars. Hence, there exists an agent $i' \in T_t$ on 
	a degree-one vertex adjacent to a central vertex $w \neq v_j$ with 
	$u_{i'}(\mathbf{v})=0$. Let $j' \in T_{t'}$ be the agent on $w$. Then, swapping 
	$i$ and $j'$ is profitable, since $u_{j'}(\mathbf{v})<1=u_{j'}(\mathbf{v}^{i 
		\leftrightarrow j'})$ and $u_{i}(\mathbf{v})=0<u_{i}(\mathbf{v}^{i 
		\leftrightarrow j'})$.
	
	Now, consider that all central vertices different from $v_j$ are occupied by 
	agents from 
	$T_{t}$. Then, there exists an agent $j' \in T_{t'}$ with  
	$u_{j'}(\mathbf{v})=0$ on a degree-one vertex adjacent to a central 
	vertex $w \neq v_j$ that is occupied by an agent from $T_t$. Swapping $i$ and 
	$j'$ is profitable, since  $u_{j'}(\mathbf{v})=0<1=u_{j'}(\mathbf{v}^{i 
		\leftrightarrow j'})$ and $u_{i}(\mathbf{v})=0<1=u_{i}(\mathbf{v}^{i 
		\leftrightarrow j'})$.
	
	Therefore, the central vertices different from $v_j$ have to be occupied by 
	agents from 
	both types. That is, there exists an agent $j' \in T_{t'}$ on a central vertex 
	$v_{j'} \neq v_j$ and an agent $i' \in T_{t}$ on a central vertex $v_{i'} 
	\neq v_j$. We have $u_{j'}(\mathbf{v})<1$, since $v_{j'}$ is adjacent to 
	$v_{i'}$. Then, swapping $i$ and $j'$ is profitable, since 
	$u_{j'}(\mathbf{v})<1=u_{j'}(\mathbf{v}^{i \leftrightarrow j'})$ and 
	$u_{i}(\mathbf{v})=0<u_{i}(\mathbf{v}^{i \leftrightarrow j'})$.
	
	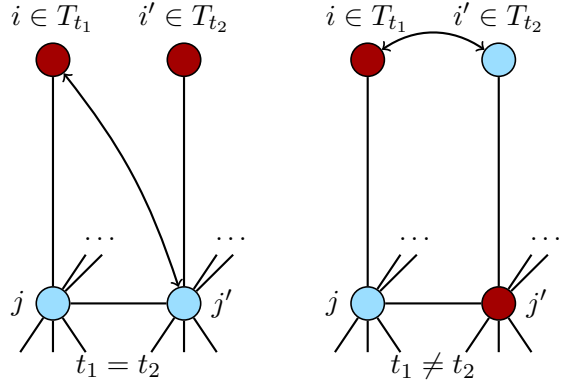
\begin{figure}[t]
		\tikzstyle{alter}=[circle, minimum size=12.5pt, draw, inner sep=1pt, semithick] 
		\tikzstyle{majarr}=[draw=black, thick]
		\centering
		\begin{tikzpicture}[auto,scale=1.3]
		
		\node[alter, fill=c_T1, label=left:$j$] at (0ex,0ex) (a) {};
		\node[alter, fill=c_T1, label=right:$j'$] at (8ex,0ex) (b) {};
		
		\node[alter, fill=c_T2, label=above:$i \in T_{t_1}$] at (0ex,15ex) (c) 
		{};
		\node[alter, fill=c_T2, label=above:$i' \in T_{t_2}$] at (8ex,15ex) (d) 
		{};
		
		\draw[majarr] (a) edge  (b);
		\draw[majarr] (a) edge  (c);
		\draw[majarr] (b) edge  (d);
		
		\draw[majarr] (a) edge  ($(a)+(2ex,-3ex)$);
		\draw[majarr] (a) edge  ($(a)+(-2ex,-3ex)$);
		\draw[majarr] (a) edge  ($(a)+(0ex,-3ex)$);
		
		\draw[majarr] (a) edge  ($(a)+(2ex,3ex)$);
		\draw[majarr] (a) edge  ($(a)+(3ex,3ex)$);
		\node[] at ($(a)+(3ex,4ex)$) {\small $\dots$};
		
		\draw[majarr] (b) edge  ($(b)+(2ex,-3ex)$);
		\draw[majarr] (b) edge  ($(b)+(-2ex,-3ex)$);
		\draw[majarr] (b) edge  ($(b)+(0ex,-3ex)$);
		
		\draw[majarr] (b) edge  ($(b)+(2ex,3ex)$);
		\draw[majarr] (b) edge  ($(b)+(3ex,3ex)$);
		\node[] at ($(b)+(3ex,4ex)$) {\small $\dots$};
		\node[yshift=-5ex] at ($(a)!0.5!(b)$) {$t_1 = t_2$};
		\draw[<->, thick , thick , bend left=10] (c) to (b);
		
		\tikzset{shift={(0.2\textwidth,0)}}
		
		\node[alter, fill=c_T1, label=left:$j$] at (0ex,0ex) (a) {};
		\node[alter, fill=c_T2, label=right:$j'$] at (8ex,0ex) (b) {};
		
		\node[alter, fill=c_T2, label=above:$i \in T_{t_1}$] at (0ex,15ex) (c) 
		{};
		\node[alter, fill=c_T1, label=above:$i' \in T_{t_2}$] at (8ex,15ex) (d) 
		{};
		
		\draw[majarr] (a) edge  (b);
		\draw[majarr] (a) edge  (c);
		\draw[majarr] (b) edge  (d);
		
		\draw[majarr] (a) edge  ($(a)+(2ex,-3ex)$);
		\draw[majarr] (a) edge  ($(a)+(-2ex,-3ex)$);
		\draw[majarr] (a) edge  ($(a)+(0ex,-3ex)$);
		
		\draw[majarr] (a) edge  ($(a)+(2ex,3ex)$);
		\draw[majarr] (a) edge  ($(a)+(3ex,3ex)$);
		\node[] at ($(a)+(3ex,4ex)$) {\small $\dots$};
		
		\draw[majarr] (b) edge  ($(b)+(2ex,-3ex)$);
		\draw[majarr] (b) edge  ($(b)+(-2ex,-3ex)$);
		\draw[majarr] (b) edge  ($(b)+(0ex,-3ex)$);
		
		\draw[majarr] (b) edge  ($(b)+(2ex,3ex)$);
		\draw[majarr] (b) edge  ($(b)+(3ex,3ex)$);
		\node[] at ($(b)+(3ex,4ex)$) {\small $\dots$};
		\node[yshift=-5ex] at ($(a)!0.5!(b)$) {$t_1 \neq t_2$};
		\draw[<->, thick , bend left=35] (c) to (d);
		
		\end{tikzpicture}
		
		\caption{\label{np:case3_1} Case 3 in the proof of \Cref{swapEqEx:A} (there 
			are 
			exactly $|T_2|-1$ agents from $T_2$ in $A$) where agents from $T_2$ occupy two 
			or more stars.}
	\end{figure}
	
	\paragraph{Case 3:} Assume that $x=|T_2|-1$. If the $x$ agents from $T_2$ 
	occupy vertices from two or more stars, then at least two stars contain agents 
	from both types (illustrated in \Cref{np:case3_1}). That is, there exists an 
	agent $i \in T_{t_1}$ for some $t_1 \in \{1,2\}$ with $u_{i}(\mathbf{v})=0$ on 
	a degree-one vertex adjacent to an agent $j \in T_{t'_1}$ with $t'_1 \neq t_1$. 
	The agent $j$ on the central vertex has $u_{j}(\mathbf{v})<1$. Without loss of 
	generality, assume that $t_1 = 1$ and  thus $t'_1 =2$. As argued above, another 
	star has to contain agents from both types. Thus, there exists another agent 
	$i' \in T_{t_2}$ for some $t_2 \in \{1,2\}$ with $u_{i}(\mathbf{v})=0$ on a 
	degree-one vertex adjacent to an agent $j' \in T_{t'_2}$ with $t'_2 \neq t_2$. 
	Again, the agent $j'$ on the central vertex has $u_{j'}(\mathbf{v})<1$. If $t_2 
	= 1$, then swapping $i$ and $j'$ is profitable, as
	$u_{i}(\mathbf{v})=0<u_{i}(\mathbf{v}^{i \leftrightarrow j'})$ and 
	$u_{j'}(\mathbf{v})<1=u_{j'}(\mathbf{v}^{i \leftrightarrow j'})$. Otherwise, if 
	$t_2 = 2$, then swapping $i$ and $i'$ is profitable, as 
	$u_{i}(\mathbf{v})=0<1=u_{i}(\mathbf{v}^{i \leftrightarrow i'})$ and 
	$u_{i'}(\mathbf{v})=0<1=u_{i'}(\mathbf{v}^{i \leftrightarrow i'})$. 
	Summarizing, if the $x$ agents from $T_2$ occupy vertices from two or more 
	stars, then $\mathbf{v}$ cannot be a swap-equilibrium.
	
	Hence, the agents from $T_2$ have to occupy all $|T_2|-1$ vertices of one of 
	the stars in~$A$. Let agent $i \in T_2$ be the agent on 
	the  central vertex $v_i$ of this star. Observe that $\deg_G(v_i)\geq 
	(|T_2|-2)+ q \cdot (|T_2|+ \Delta(G'))$, since $v_i$ is adjacent to $|T_2|-2$ 
	degree-one neighbors and the $ q \cdot (|T_2|+ \Delta(G'))$ vertices in 
	$(V_{S'_1} 
	\cup X_1) \setminus \{v_i\}$.  It follows that agent $i  \in T_2$ has utility: 
	\begin{align*}
	u_i(\mathbf{v})\leq \frac{|T_2|}{(|T_2|-2)+ q \cdot 
		(|T_2|+ \Delta(G'))}    < \frac{|T_2|}{|T_2|\cdot q} 
	= 
	\frac{1}{q}.                \end{align*}
	
	Since $x=|T_2|-1$, there is one agent $i' \in 
	T_2$ not placed on a vertex from $A$. As noted above, $G[V(G)\setminus A]$ is 
	connected. Thus, 
	agent $i'$ is adjacent to an agent $j \in T_1$ on $v_j \in V(G) \setminus A$. 
	Recall that $V(G)\setminus A = (V(G') \setminus V_{S'}) \cup (M_2 \cup 
	V_{S'_2})$. 
	If 
	$v_j \in V_{S'_2}$, then we have $\deg_G(v_j)\leq \Delta(G') + |V(G')|^2 + 
	|V_{S'_2}|$.  If $v_j \in M_2$, then we have $\deg_G(v_j)=1$. If $v_j \in  
	V(G') 
	\setminus V_{S'}$, then we have  $\deg_G(v_j)\leq \Delta(G')$. In any case, it 
	holds that $\deg_G(v_j)\leq \Delta(G') + |V(G')|^2 + |V_{S'_2}|=q$ ($q$ was 
	defined in the construction). Since $v_j$ 
	is 
	adjacent to $v_{i'}$, agent $i$ has at least one adjacent friend after swapping 
	to 
	$v_j$, and its utility is:                                                  
	\begin{align*}
	u_i(\mathbf{v}^{i  \leftrightarrow j })\geq 
	\frac{1}{q} > u_i(\mathbf{v}).                                             
	\end{align*}
	Thus, $i$ wants to swap with $j$.
	Note that at least one of the at most $q$ neighbors 
	of $j$ (specifically, agent $i'$) is not a friend of $j$.  Observe that the 
	neighborhood of the central vertex $v_i$ consists of $|T_2|-2$ degree-one 
	neighbors, the $ q 
	\cdot (|T_2|+ \Delta(G'))$ vertices in $(S_1 \cup X_1) \setminus \{v_i\}$ and 
	at most $\Delta(G')$ neighbors in $V(G') \setminus V_{S'_1}$.  Thus, we have 
	$\deg_G(v_i)\leq q \cdot (|T_2|+ \Delta(G')) + |T_2|-2 + \Delta(G') <  (q+1) 
	\cdot (|T_2|+ \Delta(G'))$.   Moreover, there are at least $q \cdot (|T_2|+ 
	\Delta(G'))$ agents of type $T_1$ adjacent to $v_i$ (all agents in $(S_1 \cup 
	X_1) \setminus \{v_i\}$). Therefore, it holds that agent $j \in T_1$ has 
	utility: 
	
	\begin{align*}
	u_j(\mathbf{v}^{i  \leftrightarrow j }) \geq \frac{q \cdot 
		(|T_2|+ \Delta(G'))}{ (q+1) \cdot (|T_2|+ 
		\Delta(G'))}=\frac{q}{q+1}> \frac{q-1}{q}\geq 
	u_j(\mathbf{v}).                                            
	\end{align*}
	Hence, swapping $i$ and $j$ is profitable and $\mathbf{v}$ cannot be a 
	swap-equilibrium. 
	
	Since we have exhausted all 
	possible cases, the lemma follows.
\end{proof}

Next, we prove that all vertices in $B$ have to be occupied by agents from 
$T_2$.
\begin{lemma} \label{swapEqEx:B}
	In any swap-equilibrium $\mathbf{v}$ in $\mathcal{I}$, all vertices in $B$ are 
	occupied by agents from~$T_2$.
\end{lemma}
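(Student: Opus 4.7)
The plan is to argue by contradiction: suppose $\mathbf{v}$ is a swap-equilibrium in which at least one vertex of $B$ is occupied by a $T_1$ agent, and let $y\geq 1$ be the number of such agents. By \Cref{swapEqEx:A} every vertex of $A$ is $T_1$, so the $|R'_2|+y$ $T_2$ agents that do not fit into $B$ must lie in $V(G')\setminus V_{S'}$; since $|R'_2|\geq 5$, at least six $T_2$ agents live there. By the condition inherited from \Cref{le:stEqEx}, each such vertex has a $V_{S'_1}$-neighbor which, by \Cref{swapEqEx:A}, carries a $T_1$ agent, so every such $T_2$ agent $i$ satisfies $u_i(\mathbf{v})\leq(\Delta(G')-1)/\Delta(G')<1$. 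This pool of ``unhappy'' $T_2$ agents supplies the pool of potential swap partners.

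The next step is to split on where the $T_1$ agent in $B$ sits. In the easy case, some $T_1$ agent $j$ occupies an $M_2$-leaf $v_j$ whose unique neighbor $v_k\in V_{S'_2}$ is $T_2$; then $u_j(\mathbf{v})=0$, and swapping $j$ with any $T_2$ agent $i\in V(G')\setminus V_{S'}$ is profitable, since after the swap $i$ lands on the leaf next to the still-$T_2$ agent at $v_k$ (yielding $u_i=1$) and $j$ lands on $v_i$ with at least one $T_1$ friend in $V_{S'_1}$ (yielding $u_j\geq 1/\Delta(G')$). Otherwise every $T_1$ agent on $M_2$ is adjacent to a $T_1$-occupied vertex of $V_{S'_2}$, so some $T_1$ agent $j$ must itself sit on a vertex $v_j\in V_{S'_2}$.

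For this remaining case, the central observation is that $v_j$ has $|V(G')|^2$ private $M_2$-leaves, of which at most $|R'_1|$ can be $T_1$ (since $|T_1|-|A|=|R'_1|$); hence the neighborhood of $v_j$ is overwhelmingly $T_2$ and $u_j(\mathbf{v})$ is of order $|R'_1|/|V(G')|^2$, essentially zero. I then sub-case split on the number $n_2$ of $T_1$ agents on $V_{S'_2}$. If $n_2\geq 2$, I pick a second $T_1$-occupied vertex $v_{j'}\in V_{S'_2}$ together with a $T_2$ leaf $v_i$ of $v_{j'}$ (which exists because $|V(G')|^2>|R'_1|$) and swap $i$ and $j$: after the swap, $i$ sits on $v_j$ with near-full utility from the many $T_2$ leaves and the $T_2$-dominant $V_{S'_2}$-clique, while $j$ sits on $v_i$ whose single neighbor $v_{j'}$ is still $T_1$, so $u_j=1>u_j(\mathbf{v})$. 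If $n_2=1$, I instead swap $j$ with a $T_2$ agent $i\in V(G')\setminus V_{S'}$, exploiting that the $|V(G')|^2$-sized $T_2$-mass at $v_j$ (minus at most $|R'_1|+\Delta(G')$ hostile vertices) drives $u_i(\mathbf{v}^{i\leftrightarrow j})$ strictly above the bound $(\Delta(G')-1)/\Delta(G')$ on $u_i(\mathbf{v})$, while $u_j$ improves from essentially zero to at least $1/\Delta(G')$.

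The hardest part of this plan will be the numerical bookkeeping in the $n_2=1$ sub-case, where the swap partner $i$ may a priori already enjoy utility as high as $(\Delta(G')-1)/\Delta(G')$ and the agent $j$ may have a small but nonzero current utility coming from its $G'$-neighbors in $V_{S'_1}$. The construction's choice of $|V(G')|^2$ $M_2$-leaves per vertex of $V_{S'_2}$ is tuned precisely to dominate all the linear-in-$|V(G')|$ quantities $|R'_1|$, $\Delta(G')$, and $|V_{S'_2}|$ that appear in these inequalities, so once the bounds are executed carefully the profitable swap emerges and contradicts the swap-equilibrium hypothesis, completing the proof.
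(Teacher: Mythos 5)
Your overall strategy (argue by contradiction, locate a $T_2$ agent $j$ in $V(G')\setminus V_{S'}$ whose $V_{S'_1}$-neighbour carries a $T_1$ agent $i$, and use these two as swap partners) matches the paper, and both your ``easy case'' and your $n_2\geq 2$ sub-case are sound. The gap is exactly where you flagged it: the $n_2=1$ sub-case does not close. There you swap the central $T_1$ agent $j$ on $v_j\in V_{S'_2}$ with a $T_2$ agent $i$ in $V(G')\setminus V_{S'}$, and you need both $u_i(\mathbf{v}^{i\leftrightarrow j})>u_i(\mathbf{v})$ and $u_j(\mathbf{v}^{i\leftrightarrow j})>u_j(\mathbf{v})$. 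With the only bounds available ($u_i(\mathbf{v})$ may be as large as $\nicefrac{(\Delta(G')-1)}{\Delta(G')}$, $u_j(\mathbf{v}^{i\leftrightarrow j})$ is only guaranteed to be at least $\nicefrac{1}{\Delta(G')}$, and $v_j$ may have up to $|T'_1|-|V_{S'_1}|+\Delta(G')$ hostile neighbours), both comparisons reduce to an inequality of the form $\Delta(G')\cdot\bigl(|T'_1|-|V_{S'_1}|+\Delta(G')\bigr)<|V(G')|^2$. The left-hand side is a product of two factors each of which can be $\Theta(|V(G')|)$ --- nothing in \Cref{le:stEqEx} bounds $\Delta(G')$ or the number of strategic $T_1'$ agents away from $|V(G')|$ --- so the $|V(G')|^2$ leaves do not dominate it: the quantity you must beat is quadratic in $|V(G')|$, not ``linear-in-$|V(G')|$'' as your bookkeeping assumes.

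The paper's proof avoids this estimate entirely. Since the number $x$ of $T_1$ agents in $B$ satisfies $0<x<|V(G')|^2+1$, some star of $G[B]$ is mixed, so there is an agent $i'$ on a degree-one vertex of $B$ with $u_{i'}(\mathbf{v})=0$. If $i'\in T_1$, swap it with $j$ (this is your easy case); if $i'\in T_2$ --- which subsumes your entire hard case, because the centre of its star is then occupied by a $T_1$ agent --- swap $i'$ with the agent $i\in T_1$ on $v_i\in V_{S'_1}$: agent $i$ goes from utility $<1$ (it is adjacent to $j\in T_2$) to utility exactly $1$ on the leaf, and $i'$ goes from $0$ to a positive utility on $v_i$ (which is adjacent to $v_j$). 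Both comparisons are against the extreme values $0$ and $1$, so no quantitative argument is needed. Replacing your $n_2=1$ swap by this one repairs the proof while keeping the rest of your structure.
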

\begin{proof}
	By \Cref{swapEqEx:A}, vertices from $A$ are only occupied by  
	agents from $T_1$ in $\mathbf{v}$. The remaining $y\coloneqq |T_1|-|A|= |T'_1|- 
	|V_{S'_1}|< 
	|V(G')\setminus V_{S'}|$ agents from $T_1$ and all agents from $T_2$ occupy 
	vertices in $G[V(G)\setminus A]$. Since it holds that $y<|V(G')\setminus 
	V_{S'}|$, 
	there  exists an agent $j \in T_2$ with $v_j \in  V(G')\setminus V_{S'}$. 
	Recall that by \Cref{le:stEqEx} we have assumed that in our input instance, 
	every vertex not occupied by 
	a stubborn agent  is adjacent to at least one stubborn agent of each type. 
	Thus, $v_j$ is adjacent to $v_i \in V_{S'_1}$ occupied by agent $i$. It holds 
	that 
	$v_i \in A$, hence we have that agent $i$ must be from  $T_1$. The agents $i$ 
	and $j$ have $u_{i}(\mathbf{v})<1$ and $u_{j}(\mathbf{v})<1$.
	
	Suppose there are $x$ agents of type $T_1$ in $B$ in $\mathbf{v}$, with 
	$|V(G')|>y\geq x>0$. We prove 
	that in this case $\mathbf{v}$ cannot be a swap-equilibrium. 
	Observe that the subgraph $G[B]$ consists of 
	$|V_{S'_2}|$ stars where the central vertices form a clique. Each star contains 
	$|V(G')|^2+1>x$ vertices. Therefore, at least one of the stars has to contain 
	agents from both types. That is, there exists an agent $i' \in T_t$ for some $t 
	\in \{1,2\}$ on a degree-one vertex in $B$ with $u_{i'}(\mathbf{v})=0$. If 
	$t=1$, then swapping agent $i'$ and agent $j$ is profitable, since 
	$u_{i'}(\mathbf{v})=0<u_{i'}(\mathbf{v}^{i' \leftrightarrow j})$ and 
	$u_{j}(\mathbf{v})<1=u_{j}(\mathbf{v}^{i' \leftrightarrow j})$. Otherwise, if 
	$t = 2$, then swapping agent $i'$ and agent $i$ is profitable, since 
	$u_{i'}(\mathbf{v})=0<u_{i'}(\mathbf{v}^{i' \leftrightarrow i})$ and 
	$u_{i}(\mathbf{v})<1=u_{i}(\mathbf{v}^{i' \leftrightarrow i})$. Therefore, 
	there exists a profitable swap and $\mathbf{v}$ cannot be a 
	swap-equilibrium.
\end{proof}

After establishing these two lemmas, we are now able to prove the backwards 
direction of the correctness of the reduction:
\begin{lemma}\label{swapEqEx:Back}
	If the constructed instance $\mathcal{I}$ of \sEq{} admits a 
	swap-equilibrium, then the given instance $\mathcal{I}'$ of 
	\stubsEq{} admits a swap-equilibrium.
\end{lemma}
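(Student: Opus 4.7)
The plan is to transform a given swap-equilibrium $\mathbf{v}$ of $\mathcal{I}$ into a swap-equilibrium $\mathbf{v}'$ of $\mathcal{I}'$ by simply restricting to $V(G')$ and identifying the agents on $V_{S'_1} \cup V_{S'_2}$ with the designated stubborn agents. The two earlier structural lemmas do most of the heavy lifting: by \Cref{swapEqEx:A}, every vertex in $A \supseteq V_{S'_1}$ is occupied by an agent from $T_1$ in $\mathbf{v}$, and by \Cref{swapEqEx:B}, every vertex in $B \supseteq V_{S'_2}$ is occupied by an agent from $T_2$. Hence we may legitimately declare the agents sitting on $V_{S'_1}$ (resp.\ $V_{S'_2}$) in $\mathbf{v}$ to play the role of the stubborn agents from $T'_1$ (resp.\ $T'_2$) in $\mathbf{v}'$, with the remaining vertices of $V(G')\setminus V_{S'}$ inheriting their occupying agents (and types) from $\mathbf{v}$.

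The first task I would address is a bookkeeping check that $\mathbf{v}'$ is a valid assignment of the agents of $\mathcal{I}'$: by \Cref{swapEqEx:A,swapEqEx:B} exactly $|V_{S'_1}|$ vertices of $V(G')$ are filled with $T_1$-agents from $A$ and exactly $|V_{S'_2}|$ with $T_2$-agents from $B$, while the remaining $|T_1|-|M_1|-|X_1|-|V_{S'_1}| = |T'_1|-|V_{S'_1}|$ agents of $T_1$ and $|T_2|-|M_2|-|V_{S'_2}| = |T'_2|-|V_{S'_2}|$ agents of $T_2$ must sit on $V(G')\setminus V_{S'}$, which is exactly the right count of strategic agents in~$\mathcal{I}'$.

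The core step is to verify that no two strategic agents have a profitable swap in $\mathbf{v}'$. Recall from the construction that the neighborhood of every vertex $v\in V(G')\setminus V_{S'}$ is identical in $G$ and $G'$; moreover, on every such vertex the occupying agent has the same type in $\mathbf{v}$ and $\mathbf{v}'$, and the vertices of $V_{S'}$ also carry the same type in both assignments. Consequently, for any strategic agent $i$ (whose position lies in $V(G')\setminus V_{S'}$) we have $u_i^{G'}(\mathbf{v}') = u_i^{G}(\mathbf{v})$, and for any pair $i,j$ of strategic agents the swapped utilities satisfy $u_i^{G'}(\mathbf{v}'^{i\leftrightarrow j}) = u_i^{G}(\mathbf{v}^{i\leftrightarrow j})$ and symmetrically for $j$, because after the swap each of $i,j$ still sits on a vertex of $V(G')\setminus V_{S'}$ whose $G$-neighborhood coincides with its $G'$-neighborhood and whose surrounding types are unchanged. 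Hence any profitable swap between strategic agents in $\mathbf{v}'$ would be a profitable swap in $\mathbf{v}$, contradicting the assumption that $\mathbf{v}$ is a swap-equilibrium.

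I do not expect any serious obstacle here: the two preparatory lemmas have already pinned down the "types on $V_{S'}$" rigidly, and the constructed $G$ was designed precisely so that vertices outside $V_{S'}$ retain their original $G'$-neighborhoods. The only mild care needed is to keep the type-to-role identification consistent and to note that swaps in $\mathcal{I}'$ are by definition between strategic agents, so one never needs to argue about the (non-existent) swaps involving stubborn ones. This, together with \Cref{swapEqEx:For} and the polynomial-time computability of the reduction, completes the proof of \Cref{swapHardness:noStubborn} modulo containment in NP, which is immediate since a swap-equilibrium can be verified by checking all $\bigO(n^2)$ pairwise swaps.
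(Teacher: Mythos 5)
Your proposal is correct and follows essentially the same route as the paper's proof: invoke \Cref{swapEqEx:A,swapEqEx:B} to fix the types on $V_{S'_1}$ and $V_{S'_2}$, check that the agent counts match the strategic agents of $\mathcal{I}'$, and observe that any profitable swap between strategic agents in $\mathbf{v}'$ would transfer to $\mathbf{v}$ because vertices in $V(G')\setminus V_{S'}$ have identical neighborhoods in $G$ and $G'$ and all types coincide. No gaps.
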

\begin{proof}
	Assume there exists a swap-equilibrium $\mathbf{v}$ for the constructed 
	instance $\mathcal{I}'$ without stubborn agents. 
	We define an assignment $\mathbf{v'}$ and prove that it is a 
	swap-equilibrium  
	for the given Schelling game $\mathcal{I}'$ with stubborn agents on $G'$.  
	In 
	$\mathbf{v'}$, a vertex $v \in V(G') \setminus V_{S'}$ is occupied by a 
	strategic agent  of the same type as the agent on $v$ in $\mathbf{v}$. The 
	vertices in $V_{S'}$ have to be occupied by the respective stubborn agents. 
	Note 
	that by \Cref{swapEqEx:A,swapEqEx:B}, in the swap-equilibrium $\mathbf{v}$ 
	in $\mathcal{I}$, the vertices in  
	$V_{S'_1} 
	\subseteq A$ and  $V_{S'_2} \subseteq B$ have to be occupied by agents from 
	$T_1$ 
	and $T_2$, respectively.  Thus, in $\mathbf{v'}$, all vertices are occupied 
	by 
	agents of the same type as in $\mathbf{v}$. Note that in $\mathbf{v'}$ 
	exactly $|T'_1|$ agents from $T'_1$ and $|T'_2|$ agents from $T'_2$ are 
	assigned, as we have 
	$|T_1|=|T'_1|+|M_1|+|X_1|$ and $|T_2|=|T'_2|+|M_2|$ and as proven in 
	\Cref{swapEqEx:A,swapEqEx:B}, in  $\mathbf{v}$, all vertices from $M_1\cup 
	X_1$ are occupied by agents from $T_1$ and all vertices from $M_2$ are 
	occupied by agents from $T_2$.
	
	Now, we will prove that  $\mathbf{v'}$ is a swap-equilibrium on the given 
	$G'$. 
	Since stubborn agents never swap position, a profitable swap has to involve 
	two 
	strategic agents $i \in T'_1$ and $j \in T'_2$ with $v_i,v_j \in V(G') 
	\setminus 
	V_{S'}$. However, by construction of $G$, the neighborhoods of $v_i$ and 
	$v_j$ 
	are identical in $G$ and $G'$. Additionally, it holds that in 
	$\mathbf{v'}$, 
	all 
	vertices are occupied by agents of the same type as in $\mathbf{v}$. Since 
	$\mathbf{v}$ is a swap-equilibrium on $G$, swapping $i$ and $j$ cannot be 
	profitable. It follows that $\mathbf{v'}$ is a swap-equilibrium, which 
	completes 
	the proof.
\end{proof}

Note that the membership of \sEq in NP is trivial, as it is possible to verify 
that a given assignment is a swap equilibrium by iterating over all pairs of agents and  checking whether swapping the two is profitable.
Thus, from \Cref{swapEqEx:For} and \Cref{swapEqEx:Back}, \Cref{swapHardness:noStubborn} follows:
\seqhard*
\subsection{Jump-Equilibria} \label{sub:jq}
Inspired by the reduction for \sEq{} described above, we can 
prove that deciding the existence of a jump-equilibrium is NP-hard as well. 
While the 
constructions behind both reductions use the same underlying general ideas, the 
proof for 
\jEq{} is more involved. The main challenge here is that in every 
assignment some vertices remain unoccupied. For instance, we do not 
only need to prove that only agents from~$T_1$ are placed on vertices from $A$ 
(which is more challenging because we have to deal with possibly unoccupied 
vertices) but also that all vertices from~$A$ are occupied. This subsection is devoted to proving the following theorem: 

\begin{restatable}{theorem}{jumhard}
\label{np:jumpEqEx}
\jEq{} is NP-complete. 
\end{restatable}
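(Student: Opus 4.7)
The plan is to adapt the reduction from the swap-equilibrium case to jumps, reducing from \stubjEq rather than \stubsEq. As a first step, I would revisit the NP-hardness construction of \citeA{AGARWAL2021103576} for \stubjEq and extract, exactly as in \Cref{le:stEqEx}, that their instances already satisfy that every non-stubborn vertex is adjacent to stubborn agents of both types and that each type contains sufficiently many stubborn and strategic agents. This restricted version of \stubjEq is then the starting point of the reduction. Membership in NP is routine: given an assignment, one checks all pairs (agent, unoccupied vertex) in polynomial time.

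Second, I would build a topology $G$ from the given $G'$ using the same blueprint as in \Cref{swapHardness:noStubborn}: attach many degree-one pendants to every vertex of $V_{S'}$, turn $V_{S'_2}$ and $V_{S'_1}\cup X_1$ into cliques, and add an auxiliary set $X_1$ breaking the symmetry between the two types. The pendant counts, the size of $X_1$, and the numbers $|T_1|,|T_2|$ need to be re-tuned (larger than in the swap case) so that (i) the total number of agents equals $|A|+|B|+(|T'_1|-|V_{S'_1}|)+(|T'_2|-|V_{S'_2}|)$, strictly less than $|V(G)|$, and (ii) any agent placed on a clique-vertex of $A$ or $B$ and surrounded by its friends achieves utility strictly larger than the maximum utility obtainable by jumping to \emph{any} other vertex of $G$. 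The forward direction then proceeds as in \Cref{swapEqEx:For}: given a jump-equilibrium $\mathbf{v}'$ in $\mathcal{I}'$, replace each stubborn agent by a strategic agent of the same type, fill $X_1$ and $M_1$ with $T_1$-agents, fill $M_2$ with $T_2$-agents, and leave exactly the vertices unoccupied in $\mathbf{v}$ that were unoccupied in $\mathbf{v}'$. Agents on pendants have utility $1$ and cannot profit; agents on clique-vertices are blocked by the pendant-boosted utility; agents on $V(G')\setminus V_{S'}$ cannot profit either since the neighborhood of every such vertex is identical in $G$ and $G'$, so any profitable jump would already have been profitable in $\mathbf{v}'$.

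The main obstacle lies in the backward direction. Beyond proving the analogues of \Cref{swapEqEx:A,swapEqEx:B}, namely that in any jump-equilibrium $\mathbf{v}$ of $\mathcal{I}$ the vertices of $A$ are occupied only by $T_1$-agents and those of $B$ only by $T_2$-agents, one must also rule out that vertices of $A\cup B$ remain \emph{unoccupied}: an empty pendant attached to an occupied clique-vertex of $A$ would attract any $T_1$-agent currently having utility below $1$, and an empty clique-vertex of $A$ would be a profitable jump for any low-utility $T_1$-agent because of the boosted utility it would gain there. I would therefore extend the case analysis of \Cref{swapEqEx:A,swapEqEx:B} with additional sub-cases tracking the number of unoccupied vertices in $A$ and in $B$; the hardest sub-case is expected to be the configuration in which both some vertex of $A$ is empty \emph{and} some $T_1$-agent sits on $B$ (or symmetrically), where a two-step argument will be needed to exhibit a profitable jump. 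Once this structural lemma is established, the argument of \Cref{swapEqEx:Back} transfers almost verbatim: restrict $\mathbf{v}$ to $V(G')$, replace the agents on $V_{S'}$ by the designated stubborn agents of the same type, and observe that any remaining profitable jump in $\mathbf{v}'$ would only use vertices whose neighborhood is identical in $G$ and $G'$, contradicting the equilibrium of $\mathbf{v}$.
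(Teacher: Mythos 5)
Your overall architecture matches the paper's: reduce from a restricted \stubjEq{}, reuse the swap-construction blueprint with an enlarged $X_1$ and re-tuned parameters, prove the forward direction exactly as in \Cref{swapEqEx:For}, and in the backward direction additionally rule out unoccupied vertices in $A\cup B$ before restricting the equilibrium to $V(G')$. You also correctly identify that handling unoccupied vertices is the main new difficulty. However, there is a genuine gap at the very first step: you assume that the same two properties extracted in \Cref{le:stEqEx} (adjacency to stubborn agents of both types, enough agents of each kind) suffice as the starting point. For jumps this is not enough. The paper's \Cref{np:jump:lemma_ass} needs a strictly stronger ``regularized'' form of \stubjEq{}, most importantly that every stubborn vertex $v$ satisfies $\deg_{G'}(v)<\lambda$, where $\lambda$ is the number of unoccupied vertices (plus $\lambda>0$, adjacency of each stubborn vertex to a non-stubborn one, and at least two stubborn agents per type). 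These properties do \emph{not} follow directly from the construction of \citeA{AGARWAL2021103576}; the paper has to modify their reduction to obtain them.

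The degree bound is not a convenience but the load-bearing step in the hardest sub-case of the backward direction --- precisely the case you flag as needing a ``two-step argument.'' When the $|T_2|-1$ agents of $T_2$ inside $A$ completely fill one star and both $A$ and $B$ are shown to be fully occupied, all $\lambda$ unoccupied vertices lie in $V(G')\setminus V_{S'}$, and one must exhibit a profitable jump for the low-utility $T_2$-agent $i$ on that star's central vertex $v_i$ (or for some $T_1$-agent). The argument hinges on finding an unoccupied vertex $w$ that is \emph{not adjacent} to $v_i$: if $w$ has a $T_2$-neighbor, $i$ jumps there and gains utility at least $\nicefrac{1}{q}$; otherwise a $T_1$-agent on a central vertex of $A$ jumps to $w$ and reaches utility $1$. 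The existence of such a $w$ is exactly what $\deg_{G'}(v_i)<\lambda$ guarantees; without it all unoccupied vertices could sit in $N_{G'}(v_i)$ and neither branch of the argument goes through. So before your plan can be executed, you must first re-prove NP-hardness of \stubjEq{} on regularized instances, which is an additional (if routine) construction step your proposal omits.
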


To prove the theorem, we reduce from a restricted version  of \stubjEq{} as defined in 
the following lemma. To prove hardness for the restricted version, the proof of  \citeA{AGARWAL2021103576} for the general version of 
\stubjEq{} needs to be slightly adapted.

\begin{restatable}{lemma}{jstEqEx}\label{np:jump:lemma_ass}
	Let $\lambda = |V(G)|-n$ be the number of unoccupied vertices in an 
	instance of \stubjEq{}.
	We call an instance of \stubjEq{} regularized, if the following 
	five 
	properties hold. 
	\begin{enumerate}
		\item For every vertex $v \notin V_S$ not occupied by a stubborn agent, 
		there 
		exist two adjacent vertices $s_i, s_j \in V_S$ occupied by stubborn 
		agents $i 
		\in T_1$ and $j \in T_2$.
		\item Every vertex $v \in V_S$ has $\deg_G(v)< \lambda$.
		\item Every vertex $v \in V_{S}$ is adjacent to a vertex $v \notin 
		V_{S}$.
		\item It holds that $\lambda>0$.
		\item There are at least two stubborn agents of each type.
	\end{enumerate}
	\stubjEq{} remains NP-hard when restricted to regularized 
	instances.
\end{restatable}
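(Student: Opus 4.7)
The plan is to start from the reduction of \citeA{AGARWAL2021103576} for \stubjEq (from \textsc{Clique}) and, as was done for the swap version in \Cref{le:stEqEx}, inspect which of the five properties are already satisfied by the constructed instance and which require a small modification of the topology. Mimicking the analysis of \Cref{le:stEqEx}, I expect properties (1) and (5) to follow directly from the original construction: the gadgets attach degree-one stubborn agents of both types to essentially every ``non-private'' vertex, and a Θ(|X|)-sized batch of stubborn agents of each type is introduced. Property (4) is easy to arrange because \jEq{} requires $|V(G)|>n$ by definition, so at worst we add a constant number of isolated (or pendant) empty vertices to the construction; the reduction's correctness is then unchanged, since agents on these pendants would behave identically to the existing analysis.

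Next, for property (3), any stubborn vertex $s\in V_S$ that is attached only to further stubborn vertices can be given a private empty pendant $w$. These pendants are unoccupied in every valid assignment (they only add $|V_S|$ to $\lambda$), do not enter the neighborhood of any strategic agent (except through $s$, which is fixed), and only reduce utilities of strategic agents that would jump onto $s$—but this cannot happen, as $s$ is stubborn. Hence, the correctness of the reduction is preserved. This step is mostly bookkeeping.

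The main obstacle is property (2), which asks that every stubborn vertex has degree strictly less than $\lambda$. In the original reduction, some stubborn vertices (in particular those belonging to the clique-encoding gadget) have degree linear in $|X|$, while the number of unoccupied vertices is only a small constant. To fix this, I would inflate $\lambda$ by padding $G$ with sufficiently many extra empty vertices attached in a way that is ``invisible'' to the reduction. Concretely, I would attach a pool of fresh empty vertices as pendants to a carefully chosen stubborn vertex (e.g., a new auxiliary stubborn agent of either type) so that no strategic agent would ever find a profitable jump onto them: any such pendant is adjacent only to a stubborn agent of the wrong type (and possibly to empty neighbors), giving utility $0$, which cannot be a strictly profitable deviation for a strategic agent currently achieving positive utility. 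By choosing the number of pendants larger than the maximum degree of any stubborn vertex in the construction, property (2) is secured.

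Finally, I would verify that the forward and backward directions of the original reduction go through unchanged. The forward direction is immediate, as any jump-equilibrium of the original construction can be extended to one of the modified instance by leaving the new pendants unoccupied and the newly-added stubborn agents fixed. The backward direction requires arguing that no strategic agent in the modified instance profitably jumps to the newly-added empty vertices; this is precisely the ``utility $0$ against a wrong-type stubborn neighbor'' argument sketched above, coupled with the observation, inherited from the original reduction, that every strategic agent already enjoys positive utility in any jump-equilibrium. Putting these pieces together yields NP-hardness on regularized instances.
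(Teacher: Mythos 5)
Your overall strategy coincides with the paper's: take the \textsc{Clique}-based reduction of \citeA{AGARWAL2021103576} for \stubjEq{} and patch it until all five properties hold. However, you misidentify where the actual work lies, and this leaves a genuine gap. You assume property~(1) ``follows directly from the original construction,'' but this is precisely the property that fails there and that the paper must enforce by changing the construction: in the modified gadgets, each set $W_v$ attached to $v\in X$ gets one vertex occupied by a stubborn $T_1$-agent and $s-1$ occupied by stubborn $T_2$-agents (and similarly the $R$-side of $G_2$ and the pendants of $x,y,z$ mix both types). Since the whole reduction hinges on the utilities that the $s$ strategic $T_1$-agents obtain on vertices of $X$ versus $L$ versus $G_3$, inserting a $T_1$-stubborn neighbor at every $v\in X$ changes exactly those utilities, and one must re-verify that the threshold inequalities separating ``clique of size $s$ exists'' from ``does not exist'' survive. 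Your proposal never addresses this, because it declares property~(1) already satisfied; that is the missing step.

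Conversely, the effort you invest in property~(2) is aimed at a non-problem: in the (modified) construction every stubborn agent sits on a degree-one pendant (the three inter-component connectors have degree $3$), while the non-stubborn vertices are $X\cup L\cup\{x,y,z\}$ and only $s$ of them are occupied, so $\lambda=|X|+1$ and property~(2) holds with no padding. Your padding gadget is also stated too loosely to be safe: you attach a pool of empty pendants to ``a new auxiliary stubborn agent of either type.'' Since all strategic agents are of type $T_1$, choosing that auxiliary agent from $T_1$ hands every strategic agent a utility-$1$ jump target and makes every instance a trivial yes-instance; the gadget only works if the auxiliary stubborn agent is of type $T_2$, so that a jump onto a pendant yields utility $0$ and is never strictly profitable. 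With that correction the padding is harmless (it is essentially the right tool for properties~(3) and~(4)), but the proof is not complete until the property-(1) modification and the resulting re-verification of the utility inequalities in $G_1$, $G_2$, and $G_3$ are carried out.
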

\begin{proof}
	We prove this statement by giving a reduction that is heavily based on the 
	reduction by \citeA{AGARWAL2021103576}, which  is modified in 
	order to ensure that the constructed instance is always regularized. Most 
	importantly, we 
	modify 
	the original construction such that the first property holds. All other 
	properties already hold or are trivial to achieve. 
	
	We reduce from \textsc{Clique}. An instance of \textsc{Clique} consist of an 
	undirected graph $H=(X,Y)$ and an integer $s$. It is a yes-instance if and only 
	if $H$ contains a clique of size~$s$. Without loss of generality, we assume 
	that 
	$s \geq 6$. We construct an instance of \stubjEq{} as follows:
	
	There are two types $T_1$ and $T_2$. There are $s$ strategic agents all part of 
	$T_1$. All other agents are stubborn and defined along with the topology. 
	
	The topology $G=(V,E)$ consists of three components $G_1, 
	G_2,$ 
	and $G_3$, which are constructed as described below. 
	\begin{itemize}
		\item To define the graph $G_1=(V_1,E_1)$, let $W_v$ be a set of $s$ 
		vertices 
		for every $v \in X$. Out of the vertices in $W_v$, one vertex is occupied by a 
		stubborn agent from $T_1$ and the remaining $s-1$ vertices are occupied by 
		stubborn agents from $T_2$. We set $V_1 = X \cup \bigcup_{v \in X} W_v$ and 
		$E_1 
		= Y \cup \bigcup_{v \in X}  \{ \{v,w\} \mid w \in W_v\}$. That is, $G_1$ is an 
		extended copy of the given $H$, where every $v \in 
		X$ is adjacent to $s$ degree-one vertices in $W_v$, which are occupied by 
		stubborn agents  from both types. 
		\item The graph $G_2$ is a bipartite graph with parts $L$ and $R$. Let $L$ 
		be 
		a set of $s-2$ vertices. For every $v \in L$, the set $R$ contains $4s$ 
		vertices 
		only connected to~$v$. Out of these $4s$ vertices, $2s+1$ are occupied by 
		stubborn agents from $T_1$ and the remaining $2s-1$ vertices are occupied by 
		stubborn agents from $T_2$. 
		\item In $G_3$, only three vertices $x,y$ and $z$ are not occupied by 
		stubborn agents. The vertices $x$ and $y$ are connected by an edge. The 
		remaining vertices are occupied by stubborn agents and defined in the 
		following. 
		First, the vertex $x$ is connected to one degree-one vertex occupied by a 
		stubborn agent from $T_1$ and two degree-one vertices occupied by stubborn 
		agents from $T_2$. The vertex $y$ is connected to $41$ degree-one vertices 
		occupied by stubborn agents from $T_1$ and $80$ degree-one vertices occupied by 
		stubborn agents from $T_2$.   Finally, $z$ is connected to $5$ degree-one 
		vertices occupied by stubborn agents from $T_1$ and $7$ degree-one vertices 
		occupied by stubborn agents from $T_2$. 
	\end{itemize}
	Lastly, we pick an arbitrary vertex occupied by a stubborn agent from each of 
	the three components $G_1$, $G_2$, $G_3$ and connect the three vertices to form 
	a clique.
	
	It is easy to verify that the constructed instance is regularized.
	The correctness of the reduction follows analogous to the proof by  \citeA{AGARWAL2021103576}. The only difference is the exact utility of 
	agents on $G_1$, however, the same inequalities still hold.
\end{proof}

In the following, we start by describing the construction of the reduction to 
prove \Cref{np:jumpEqEx} before we prove its correctness.
The construction is very similar to the one for \Cref{swapHardness:noStubborn} from \Cref{sub:sq}. 
In particular, the sketch from \Cref{np:G} still applies here.
The only major difference between the two reductions is that the number of vertices in $X_1$ is different. 
As a consequence, also the number of agents from both types is different than before.
For the sake of completeness and to avoid possible confusions, we provide the full description of the construction here: 

\paragraph{Construction.}
We are given an instance $\mathcal{I}'$ of \stubjEq consisting of a 
connected topology 
$G'$, a 
set $N'=R' \cup S'$ of agents partitioned into types $T_1'$ and $T_2'$ and 
a set 
of vertices $V_{S'} =\{s_i \in V(G') \mid i \in S'\}$. The agents in $R'$ 
are strategic and the agents from $S'$ are stubborn agents, with  stubborn 
agent $i \in S'$ occupying $s_i \in V_{S'}$ in any assignment.  
We assume that the given instance is 
regularized 
and fulfills the properties from \Cref{np:jump:lemma_ass}.
In the following, we denote the sets of vertices occupied by stubborn 
agents from $T_1'$ and 
$T_2'$ as $V_{S'_1}$ and $V_{S'_2}$, respectively. From this, we construct 
an instance $\mathcal{I}$
of 
\jEq{} consisting of a topology $G=(V,E)$ and types $T_1$ and $T_2$ as 
follows. 

The graph $G$ (sketched in \Cref{np:G}) is an extended copy of 
the given graph~$G'$ and contains all vertices and edges from $G'$. We add 
three 
sets of vertices $M_1, X_1$, and $M_2$ as specified below. For every vertex $v 
\in V_{S'_2}$, we insert $|V(G')|^2$ degree-one 
vertices only adjacent to $v$ and add them to $M_2$. We connect the
vertices in $ V_{S'_2}$ to form a clique.
	
	Next, we define $q$, which, as argued later, is an upper bound for the 
	degree of a vertex in $(V(G')\setminus V_{S'_1}) \cup M_2$:
	\begin{align*}
	q \coloneqq \Delta(G')+|V(G')|^2 + |V_{S'_2}|.
	\end{align*}
	Now, we define $s$ and $z$, which are important in the 
	proof of 
	\Cref{np:jump:claim1.1}:
	\begin{align*}
	s &\coloneqq q \cdot (|T'_2|+|M_2| + \Delta(G) ) + 1,\\
	z &\coloneqq s+|V(G')|+|M_2|.
	\end{align*}
	Let  $X_1$ be a set of $z -|V_{S'_1}|$ vertices (this is different than in the construction from \Cref{swapHardness:noStubborn}). We add the vertices $X_1$ to $G$ and connect the vertices in $V_{S'_1}\cup X_1$ to form a clique.   
	Note that it holds that $|X_1 \cup 
	V_{S'_1}|= z$, which we use to introduce an asymmetry between the two types.
	
	Finally, we define the number $p$ of added degree-one neighbors for 
	vertices 
	in $V_{S'_1} \cup X_1$ (again, the choice of $p$ is used in 
	\Cref{np:jump:claim1.1}):
	\begin{align*}
	p \coloneqq |T_2'|+|M_2| -2 > |V(G')|^2.
	\end{align*}
	For every vertex $v \in V_{S'_1} \cup X_1$, we add $p$ degree-one 
	vertices 
	only adjacent to $v$ in $G$ and add them to $M_1$. 
	
	The set of agents $N =T_1 \dot \cup T_2$ is defined as follows. 
	We have $|T_1|=|T'_1|+|M_1|+|X_1|$ agents in $T_1$ and 
	$|T_2|=|T'_2|+|M_2|$ agents in $T_2$. By the construction of $X_1$, we 
	have that:
	\begin{align*}
	s=q \cdot (|T'_2|+|M_2| + \Delta(G') )  + 1 = q \cdot (|T_2|+ 
	\Delta(G'))+1.
	\end{align*}
	It also holds that $p = |T_2'|+|M_2| -2 = |T_2| -2$. Let $\lambda':= 
	|V(G')|-|N'|$ be the number of unoccupied vertices in the given instance 
	and $\lambda := |V(G)|-|N|$ the number of unoccupied vertices 
	in the constructed instance.  Note that 
	there are equally many unoccupied vertices in the constructed and the 
	given instance, since it holds that 
	$|V(G)|-|V(G')|=|M_1|+|X_1|+|M_2|=|N|-|N'|$. That is, $\lambda= 
	\lambda' < |V(G')|$. 
	
	\paragraph{Proof of Correctness.}
	Next, we address the correctness of the reduction. We approach the 
proof 
in four steps. First, we prove in \Cref{le:jumpfd} the forward direction of the correctness. 
Next, we make some basic observations about the 
constructed graph
$G$. Then, we prove \Cref{np:jump:claim0,np:jump:claim1,np:jump:claim2}, 
which 
state useful properties of all jump-equilibria in the constructed game. 
Finally, 
using these lemmas, we prove that the constructed game admits a 
jump-equilibrium only if the original game admits a jump-equilibrium.  

We define the sets $A,B \subseteq V$ as $A \coloneqq M_1 \cup V_{S'_1} \cup 
X_1$ and $B \coloneqq M_2 \cup V_{S'_2}$. Observe that the subgraph $G[A]$ 
can 
be partitioned into $z$ stars which each have $|T_2|-1$ vertices as 
follows. The vertices in $X_1 \cup V_{S'_1}$ are the central vertices and 
form a clique. Each central vertex is adjacent to 
$|T_2|-2$ degree-one vertices in $M_1$. Recall that $z > |V(G')| \geq  
|V_{S'}|> 3$, hence we have at least $3$ stars in $G[A]$. Note that 
$V(G)\setminus A = (V(G') \setminus V_{S'}) \cup B = (V(G') \setminus 
V_{S'}) 
\cup (M_2 \cup V_{S'_2})$. The subgraph $G[V(G)\setminus A]$ is connected, 
since 
every vertex in $V(G') \setminus V_{S'}$ is connected to a vertex in 
$V_{S'_2}$ 
(by our assumption that the given instance is regularized; see 
\Cref{np:jump:lemma_ass}) and the vertices 
in $V_{S'_2}$ form a clique (by the construction of $G$). Additionally, all 
vertices in $M_2$ are adjacent to exactly one vertex in $V_{S'_2}$. 
 
We start with showing the (easier) forward direction of the correctness
of the reduction: 

\begin{lemma} \label{le:jumpfd}
	If the given instance $\mathcal{I}'$ of \stubjEq{} admits a 
	jump-equilibrium, then the 
	constructed 
	instance $\mathcal{I}$ of \jEq{} admits a swap-equilibrium.
\end{lemma}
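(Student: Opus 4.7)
The plan is to mirror the structure of \Cref{swapEqEx:For}. Given a jump-equilibrium $\mathbf{v}'$ in $\mathcal{I}'$, I build the assignment $\mathbf{v}$ of $\mathcal{I}$ by copying the type of each occupant of $V(G')$ (so each stubborn agent on $v\in V_{S'}$ becomes a strategic agent of the same type on $v$), filling $M_1\cup X_1$ with $T_1$-agents and $M_2$ with $T_2$-agents, and keeping exactly the same set of vertices unoccupied as in $\mathbf{v}'$. The counts match because $|T_1|=|T_1'|+|M_1|+|X_1|$ and $|T_2|=|T_2'|+|M_2|$, while $\lambda=\lambda'$ ensures the same number of empty vertices. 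After this, $A$ is fully populated by $T_1$-agents and $B$ by $T_2$-agents.

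To show $\mathbf{v}$ is a swap-equilibrium, I classify candidate swap partners by their location. Any agent on $M_1\cup M_2\cup X_1$ has utility~$1$ because all of its neighbors are friends, and therefore cannot strictly improve by any swap. For a strategic agent $i$ now occupying a vertex $v_i\in V_{S'}$, the $p>|V(G')|^2$ degree-one neighbors in $M_1$ (respectively the $|V(G')|^2$ vertices of $M_2$ adjacent to $v_i\in V_{S'_2}$), together with the clique on $V_{S'_1}\cup X_1$ (resp.\ on $V_{S'_2}$), push $u_i(\mathbf{v})$ strictly above $\nicefrac{\Delta(G')}{\Delta(G')+1}$. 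The same bookkeeping as in \Cref{swapEqEx:For}, together with the upper bound $q$ on relevant degrees, shows that swapping $i$ with any agent of the opposite type caps the attainable new utility at $\nicefrac{\Delta(G')-1}{\Delta(G')}$, so no such swap is profitable.

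The remaining and most delicate case concerns a potential profitable swap between two strategic agents $i\in T_1$, $j\in T_2$ with $v_i,v_j\in V(G')\setminus V_{S'}$. Because the neighborhoods of these vertices coincide in $G$ and $G'$ and all occupants of these neighborhoods agree across $\mathbf{v}$ and $\mathbf{v}'$, such a swap would also be profitable in $\mathbf{v}'$; however, $\mathbf{v}'$ is only assumed to be a jump-equilibrium, which is not immediately incompatible with swap-profitability. This is the central technical obstacle. To overcome it I would exploit the regularization conditions of \Cref{np:jump:lemma_ass}: property~4 guarantees an unoccupied vertex $v^\star$, property~3 places each stubborn vertex adjacent to such a non-stubborn vertex, property~2 bounds $\deg_{G'}(v)<\lambda$ on $V_{S'}$, and property~1 ensures that the neighborhood of every non-stubborn vertex contains stubborn agents of both types. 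Combined, these let me argue that whichever of $i,j$ gains the larger utility increase in the hypothetical swap could instead realize at least as large an increase by jumping to an appropriately chosen unoccupied vertex in $V(G')\setminus V_{S'}$, contradicting jump-equilibrium of $\mathbf{v}'$. Formalising this reduction from profitable swaps in $V(G')\setminus V_{S'}$ to profitable jumps, uniformly over all candidate pairs $(i,j)$, is where the bulk of the technical work~lies.
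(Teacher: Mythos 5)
Your construction of $\mathbf{v}$ matches the paper's exactly, but you are proving the wrong stability notion. The constructed instance $\mathcal{I}$ is an instance of \jEq{}, so what must be shown is that $\mathbf{v}$ is a \emph{jump}-equilibrium (the word ``swap-equilibrium'' in the lemma statement is a typo; compare the problem definition of \jEq{} and the backward direction in \Cref{le:jumpbd}, both of which speak of profitable jumps). The paper's proof accordingly rules out profitable jumps: agents on $M_1\cup M_2\cup X_1$ have utility $1$; the only unoccupied vertices lie in $V(G')\setminus V_{S'}$, whose neighborhoods are unchanged, so a profitable jump by an agent on $V(G')\setminus V_{S'}$ would transfer verbatim to $\mathbf{v}'$ and contradict that $\mathbf{v}'$ is a jump-equilibrium; and an agent on $v_i\in V_{S'}$ has utility above $\nicefrac{\Delta(G')}{\Delta(G')+1}$ while any unoccupied target vertex is adjacent to agents of both types (regularization property~1) and has degree at most $\Delta(G')$, capping the post-jump utility at $\nicefrac{\Delta(G')-1}{\Delta(G')}$. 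You have all of these ingredients, so the case analysis you set up for $V_{S'}$ goes through essentially unchanged once restated for jumps.

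The step you flag as ``the central technical obstacle'' --- excluding a profitable swap between two agents on $V(G')\setminus V_{S'}$ --- is not an obstacle to the actual claim, and your plan for overcoming it does not work. A jump-equilibrium need not be swap-stable: two agents $i,j$ can both strictly gain by exchanging positions even though neither gains by moving to any of the unoccupied vertices, and none of the regularization properties of \Cref{np:jump:lemma_ass} forces the larger of the two swap gains to be realizable by a jump (the unoccupied vertices can sit in neighborhoods that are useless to both $i$ and $j$). So the proposed ``reduction from profitable swaps to profitable jumps'' cannot be carried out uniformly, and the assignment $\mathbf{v}$ you build may genuinely fail to be a swap-equilibrium. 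Fortunately, nothing in the reduction requires it to be one: replace your swap analysis by the corresponding jump analysis and the proof closes.
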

\begin{proof}
	Assume that there exists a jump-equilibrium 
	$\mathbf{v}'$ for the given instance $\mathcal{I}'$ with stubborn agents. 
	Note that 
	in $\mathbf{v}'$ 
	the vertices in $V_{S'_1}$ and $V_{S'_2}$ are occupied by stubborn agents 
	from 
	$T_1'$ and $T_2'$, respectively. We define an assignment $\mathbf{v}$ for 
	the constructed
	Schelling game without stubborn agents as follows and afterwards prove that 
	it is a 
	jump-equilibrium. In $\mathbf{v}$, the occupied vertices in $V(G) \cap 
	V(G')$ are occupied 
	by 
	agents of the same type as the agents in $\mathbf{v}'$. If a vertex is 
	unoccupied in $\mathbf{v}'$, then it is also unoccupied in $\mathbf{v}$.  
	The vertices in $X_1$ are occupied by agents from $T_1$. For $t \in 
	\{1,2\}$, the added degree-one vertices in $M_t$ are occupied by agents 
	from $T_t$. Hence, the vertices in $A=M_1 \cup V_{S'_1} \cup X_1$ are all 
	occupied by agents from $T_1$ and the vertices in $B=M_2 \cup V_{S'_2}$ are 
	all 
	occupied by agents from $T_2$ (note that, in $\mathbf{v}$, we assigned 
	exactly $|T_1|$ agents from $T_1$ and $|T_2|$ agents from $T_2$, as  
	$|T_1|=|T'_1|+|M_1|+|X_1|$ and 
	$|T_2|=|T'_2|+|M_2|$).
	
	Next, we prove that $\mathbf{v}$ is  a jump-equilibrium on $G$ by showing 
	that no profitable jump exists. We first observe that the utility of an 
	agent $i$ on a vertex $w \in M_1 \cup X_1$ is $u_i(\mathbf{v})=1$, since 
	$N_G(w)\subseteq A$ and all agents in $A$ are friends of $i$.  The same 
	holds analogously for any agent on a vertex in $M_2$. Therefore, the agents 
	on vertices in $M_1 \cup M_2 \cup X_1$ do not want to jump to an unoccupied 
	vertex. Furthermore, note that the only unoccupied vertices in $\mathbf{v}$ 
	are in $V(G') 
	\setminus V_{S'}$ and that the neighborhood of all vertices in $V(G') 
	\setminus V_{S'}$ is identical in $G$ and $G'$. 
	Therefore, no agent on a vertex in $V(G') \setminus V_{S'}$ can have a 
	profitable jump in $\mathbf{v}$, since this jump would then also be 
	profitable in  
	$\mathbf{v}'$. Thus, in $\mathbf{v}$, a profitable 
	jump can only exist for an agent $i$ on a vertex  $v_i \in V_{S'}$ to an 
	unoccupied vertex $v \in V(G') \setminus V_{S'}$.
	
	Let $Y=V_{S'_1} \cup X_1$ if $v_i \in V_{S'_1}$ and $Y=V_{S'_2}$ if $v_i 
	\in V_{S'_2}$. 
	Denote the 
	number of degree-one neighbors of $v_i$ that were added to $G'$ in the 
	construction of $G$ by $x$. By 
	construction of $G$, it holds that $x>\Delta(G)^2$. The agent $i$ is, among 
	others,
	adjacent to the vertices in $Y \setminus \{v_i\}$ and the $x>\Delta(G')^2$ 
	degree-one neighbors, which, by construction of $\mathbf{v}$, are all 
	occupied by friends. It holds that 
	$\deg_G(v_i) \leq x + |Y \setminus \{v_i\}| + \Delta(G')$. Thus, the 
	utility of agent $i$ on $v_i$ is at least:
	$$u_i^{G}(\mathbf{v})\geq \frac{x + |Y \setminus \{v_i\}|}{x + |Y \setminus 
		\{v_i\}| + \Delta(G')} > \frac{x}{x+\Delta(G')}> 
	\frac{\Delta(G')^2}{\Delta(G')^2+\Delta(G')}=\frac{\Delta(G')}{\Delta(G')+1}.$$
	Now consider an unoccupied vertex $v \in V(G') \setminus V_{S'}$. Recall 
	that as we assume that the given instance is regularized (see 
	\Cref{np:jump:lemma_ass}), for every vertex $v \notin V_{S'}$ not occupied 
	by a stubborn agent, it holds that
	there 
	exist two adjacent vertices $s_i, s_j \in V_{S'}$ occupied by stubborn 
	agents $i 
	\in T'_1$ and $j \in T'_2$.  Since $v \in 
	V(G') \setminus V_{S'}$, the vertex $v$ is thus adjacent to agents from 
	both $T_1$ 
	and $T_2$ in $A$ and $B$, respectively. By construction of $G$, the 
	neighborhood of $v$ is identical in $G$ and $G'$. We thus have that 
	$\deg_{G}(v)=\deg_{G'}(v)\leq \Delta(G')$. Since at least one agent 
	adjacent to $v$ is not a friend of $i$, the jump of $i$ to $v$ cannot be 
	profitable in $\mathbf{v}$:
	\begin{align*}
	u_i^{G}(\mathbf{v}^{i \rightarrow v}) \leq 
	\frac{\Delta(G')-1}{\Delta(G')}<\frac{\Delta(G')}{\Delta(G')+1}\leq  
	u_i^{G}(\mathbf{v}).
	\end{align*}
	To sum up, no profitable jump exists and $\mathbf{v}$ is a 
	jump-equilibrium  for the constructed Schelling game.
\end{proof}
 
 To show the correction of the backwards direction, we start by showing the 
following lemma, 
which states that no agent on a 
degree-one 
vertex in $A$ or $B$ can have no adjacent friends and all central vertices 
have 
to be occupied. This property of jump-equilibria is then later used in the 
proof of \Cref{np:jump:claim1,np:jump:claim2}, where we prove that all 
vertices in $A$ are occupied by agents from $T_1$ and all vertices from $B$ 
are occupied by agents from $B$.
\begin{lemma}\label{np:jump:claim0}
	In a jump-equilibrium $\mathbf{v}$ in $\mathcal{I}$, all agents on a 
	degree-one vertex $v \in M_1 \cup M_2$ are adjacent to a friend and all 
	central vertices $w \in V_{S'_1} \cup X_1 \cup V_{S'_2}$ are occupied.
\end{lemma}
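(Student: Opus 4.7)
The plan is to establish the two assertions separately by contradiction: first that every central vertex is occupied, and then that every agent on a degree-one vertex of $M_1\cup M_2$ sees a friend on the attached central vertex.

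For the first assertion, I would suppose some central vertex~$w$ is unoccupied; without loss of generality take $w \in V_{S'_1} \cup X_1$, as the $V_{S'_2}$ case is entirely analogous (with the bound $|V(G')|^2$ in place of $p$). The vertex $w$ has $p = |T_2|-2$ degree-one neighbors in $M_1$. Since $p$ is huge (it contains $|M_2|$, which scales with $|V(G')|^2$) while $\lambda < |V(G')|$, well more than two of these degree-one neighbors are occupied. A pigeonhole step then yields two occupied degree-one neighbors of $w$ holding agents of the same type~$T_t$. Either of these agents has utility~$0$ (its unique neighbor $w$ is empty), and jumping to $w$ strictly improves its utility since at least one co-typed agent remains on another degree-one neighbor of $w$. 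This profitable jump contradicts the equilibrium.

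For the second assertion, suppose for contradiction some agent $i$ on a degree-one vertex $v \in M_1$ has no friend (the $M_2$ case is symmetric). By the first part, the attached central vertex $w$ is occupied by an agent $j$ of the opposite type; say $i \in T_1$ and $j \in T_2$. My first subgoal is to rule out any unoccupied degree-one vertex in $M_1 \cup M_2$: for any such $v^*$ attached to a central vertex~$w^*$, if $w^*$ holds a $T_1$-agent then $i$ profitably jumps to $v^*$ for utility~$1$, and if $w^*$ holds a $T_2$-agent then $j$, whose current utility is strictly below~$1$ because the non-friend $i$ lies in $N(w)$, profitably jumps to $v^*$ for utility~$1$. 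Both options contradict the equilibrium.

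Consequently all $\lambda > 0$ unoccupied vertices must lie in $V(G')\setminus V_{S'}$, and by the regularization property from \Cref{np:jump:lemma_ass}, any such $v^*$ has at least one occupied neighbor in $V_{S'_1}$ and at least one in $V_{S'_2}$ (both occupied by the first assertion). I would conclude by a dichotomy on the types of the occupied neighbors of $v^*$: if any of them is of type $T_1$, then $i$ strictly improves from~$0$ by jumping to $v^*$; otherwise every occupied neighbor of $v^*$ is of type $T_2$, so $j$ attains utility~$1$ at $v^*$ and strictly improves from its current value $<1$. In all cases we obtain a profitable jump, contradicting the equilibrium. The main obstacle is to exploit carefully that $\lambda > 0$ and that regularization guarantees enough occupied neighbors of $v^*$ for the utility comparisons to be meaningful; once this is set up, the symmetric case analysis between $i$-jumps and $j$-jumps closes the argument cleanly.
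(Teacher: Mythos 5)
Your first assertion (all central vertices are occupied) is argued exactly as in the paper: each star has far more leaves than there are unoccupied vertices, so an empty centre forces at least three occupied leaves, hence by pigeonhole two same-type leaf agents of utility $0$, either of whom profitably jumps to the centre. Your final dichotomy is also sound and is in fact a slightly more direct route than the paper's: once all unoccupied vertices are confined to $V(G')\setminus V_{S'}$, Property~1 of \Cref{np:jump:lemma_ass} guarantees at least two occupied central neighbours for any empty $v^*$, and the case split on their types (some $T_1$-neighbour: $i$ jumps and improves from $0$; all neighbours $T_2$: $j$ jumps and reaches $1$) works, including when $w$ itself is adjacent to $v^*$.

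The gap is in the confinement step, i.e., in ruling out unoccupied degree-one vertices of $M_1\cup M_2$. For an empty leaf $v^*$ whose centre $w^*$ holds a $T_2$-agent you let $j$ jump to $v^*$ ``for utility $1$''. This fails precisely when $w^*=w$, i.e., when the empty leaf lies in the very star containing $i$ and $j$: after $j$ jumps from $w$ to $v^*$, the vertex $w$ --- which is the \emph{only} neighbour of $v^*$ --- becomes unoccupied, so $u_j(\mathbf{v}^{j\rightarrow v^*})=0$ and the jump is not profitable. Nothing proved so far excludes this configuration (recall $\lambda>0$, and the unoccupied vertices may well sit on leaves of $w$'s star), and it is not trivially repaired: a $T_1$-agent gains nothing by moving to $v^*$, a $T_2$-leaf of $w$ already has utility $1$, so one must exhibit a $T_2$-agent \emph{outside} the star of $w$ with utility strictly below $1$ who can move onto $v^*$. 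Establishing that such an agent exists is exactly where the paper's proof invokes the connectivity of $G[V(G)\setminus S]$ (for $S$ the vertex set of $w$'s star) together with $|T_1|>|S|$ and $|T_2|>|S|$, and then splits on whether any unoccupied vertex lies outside $S$. Your proposal has no substitute for that argument, so as written the second assertion is not proved.
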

\begin{proof}
	Suppose for the sake of a contradiction that there exists an agent $i 
	\in T_t$ for some $t \in \{1,2\}$ on a 
	degree-one vertex $v$ with no adjacent friend in $\mathbf{v}$. Let $S$ be 
	the set of 
	vertices of 
	the star which contains $v$. Recall that every star contains at least 
	$|V(G')|^2+1\geq |V(G')|+3\geq \lambda +3$ vertices. Thus, at least three 
	vertices in $S$ have to be occupied. Since $i$ has no adjacent friends, 
	the 
	central vertex $w\in S$ is either unoccupied or occupied by an agent of 
	the 
	other 
	type. We distinguish these two cases and prove 
	that 
	there exists a profitable jump in both cases.
	
	\paragraph{Case 1:}First, assume that $w$ is unoccupied. As mentioned 
	above, at least three vertices in $S$ have to be occupied by agents. 
	Since $w$ is unoccupied, all occupied vertices are degree-one vertices. 
	By the pigeonhole principle, there exist two agents of the same type on 
	degree-one vertices in $S$. Both agents have no adjacent friends in 
	$\mathbf{v}$ and can increase their utility by jumping to $w$. 
	This concludes the first case and furthermore proves that no central 
	vertex can be unoccupied.
	\paragraph{Case 2:} Second, we consider the case where $w$ is occupied 
	by an agent $j \in T_{t'}$ with $t' \neq t$. Note that both $|T_1|>|S|$ 
	and $|T_2|>|S|$, thus there are agents from both types in 
	$G[V(G)\setminus S]$. Next, 
	we argue that $G[V(G)\setminus S]$ is connected. Let $Y=A$ if $S 
	\subseteq A$ and $Y=B$ otherwise. It is easy to see that 
	$G[V(G)\setminus Y]$ is 
	connected (as argued for $G[V(G)\setminus A]$ above and analogous for 
	$G[V(G)\setminus B]$).  The connected subgraph $G[Y \setminus S]$ 
	consists of 
	the remaining stars in $Y \setminus S$, where the central vertices form 
	a clique. Note that there is at least one vertex $x \in V_{S'}$ in $Y 
	\setminus S$, since by \Cref{np:jump:lemma_ass}, we have assumed that 
	there 
	are at least two stubborn agents of each type. 
	Furthermore, by \Cref{np:jump:lemma_ass}), we have assumed that $x \in 
	V_{S'}$ is 
	adjacent to a vertex $y \in V(G') \setminus V_{S'}$ in $G[V(G)\setminus 
	Y]$. Thus, as $G[V(G)\setminus 
	Y]$ and $G[Y \setminus S]$  are each connected and connected by an 
	edge, $G[V(G)\setminus S]$ is connected.
	
	First, suppose that there is no unoccupied vertex in $V(G)\setminus 
	S$, which implies that there need to be unoccupied vertices in $S$. 
	Then, there 
	are two adjacent agents $i' \in T_t$ and $j' \in T_{t'}$  in 
	$G[V(G)\setminus S]$. It 
	holds that $u_{j'}(\mathbf{v})<1$. Agent $j'$ can increase its utility 
	to $1$ by jumping to any unoccupied vertex in $S$.
	
	Therefore, there has to exist at least one unoccupied vertex in 
	$G[V(G)\setminus S]$. 
	If one of the unoccupied vertices in $G[V(G)\setminus S]$ is adjacent 
	to an agent in 
	$T_t$, then agent $i \in T_t$ can increase its utility by jumping to 
	this vertex. Thus, the unoccupied vertices in $G[V(G)\setminus S]$ can 
	only be 
	adjacent to agents in $T_{t'}$ and unoccupied vertices. Since 
	$G[V(G)\setminus S]$ is 
	connected, there exists 
	an unoccupied vertex $w'$ that is adjacent to at least one agent in 
	$T_{t'}$ on a vertex in $G[V(G)\setminus S]$. Note that agent $j$ on 
	the central 
	vertex in $S$ has $u_j(\mathbf{v})<1$, since $j \in T_{t'}$ is adjacent 
	to $i \in T_t$. Hence, agent $j$ can increase its utility by jumping to 
	$w'$, where $j$ is only adjacent to friends.
\end{proof}

Next, we prove that all vertices in $A$ are occupied by agents from $T_1$.
\begin{lemma}\label{np:jump:claim1}
	In every jump-equilibrium $\mathbf{v}$ in $\mathcal{I}$, there is no 
	unoccupied vertex in $A$ and every vertex in $A$ is 
	occupied by an agent 
	from $T_1$. 
\end{lemma}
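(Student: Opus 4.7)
My plan is to argue by contradiction, exhibiting a profitable jump whenever the conclusion fails, in analogy with the proof of Lemma \ref{swapEqEx:A}. Suppose $\mathbf{v}$ is a jump-equilibrium in $\mathcal{I}$ but some vertex of $A$ is either unoccupied or occupied by a $T_2$-agent. The first step is to combine Lemma \ref{np:jump:claim0} with the star-clique structure of $G[A]$: all $z$ central vertices in $V_{S'_1}\cup X_1$ are occupied, and any agent on a degree-one vertex in $M_1$ must share its type with the adjacent central vertex, since otherwise it would have no adjacent friend in contradiction to Lemma \ref{np:jump:claim0}. Consequently, each star in $G[A]$ is type-monochromatic on its occupied vertices, and I parametrise the failure by $(y,u)$, where $y\geq 0$ counts the stars whose central vertex is occupied by a $T_2$-agent and $u\geq 0$ counts the unoccupied vertices in $A$ (all lying in $M_1$); the hypothesis then reads $y+u\geq 1$.

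For the case $y\geq 1$, let $i\in T_2$ occupy the central vertex $v_i\in V_{S'_1}\cup X_1$ of a $T_2$-star. The clique on $V_{S'_1}\cup X_1$ contributes at least $z-y$ non-friends to $v_i$'s neighbourhood, while the $p=|T_2|-2$ degree-one $M_1$-neighbours of $v_i$ and the at most $\Delta(G')$ neighbours in $V(G')\setminus V_{S'_1}$ contribute at most $p+(y-1)+\Delta(G')$ friends. Substituting the designed value $z\geq q\cdot(|T_2|+\Delta(G'))+1$ pins $u_i(\mathbf{v})<1/q$. For the jump target, note that if any unoccupied $M_1$-vertex is adjacent to a $T_2$-star centre then $i$'s jump to it yields utility $1$; otherwise either $u=0$ or every unoccupied $M_1$-vertex is pinned to a $T_1$-star, and in both sub-scenarios $\lambda>0$ forces at least one unoccupied vertex $v\in V(G)\setminus A$ to exist. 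Using the regularisation property of Lemma \ref{np:jump:lemma_ass} together with the connectedness of $G[V(G)\setminus A]$, I locate an unoccupied $v\in V(G)\setminus A$ with a $T_2$-neighbour and $\deg_G(v)\leq q$, yielding $u_i(\mathbf{v}^{i\to v})\geq 1/q>u_i(\mathbf{v})$, a profitable jump. For the complementary case $y=0$ and $u\geq 1$, a counting identity gives $|T_1'|-|V_{S'_1}|+u\geq 1$ agents of type $T_1$ outside $A$; the monochromatic-star structure of $B$ (which follows for $B$ from Lemma \ref{np:jump:claim0} alone) together with Lemma \ref{np:jump:lemma_ass} forces at least one such $T_1$-agent to have a $T_2$-neighbour, hence utility strictly below $1$, and this agent profitably jumps into an unoccupied $M_1$-vertex of a $T_1$-star to attain utility exactly $1$.

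The main obstacle is securing a suitable jump target in the subcase $y\geq 1$ where every unoccupied $M_1$-vertex (if any) is pinned to a $T_1$-star: the $T_2$-agent on a $T_2$-star centre then needs an unoccupied vertex in $V(G)\setminus A$ with at least one $T_2$-neighbour and degree at most $q$. Ruling out the pathological configuration in which every unoccupied vertex has only $T_1$-neighbours requires careful bookkeeping using $\lambda=\lambda'<|V(G')|$, the $M_2$-monochromaticity from Lemma \ref{np:jump:claim0}, and the regularisation guarantee that every $V_{S'}$-vertex has a non-stubborn neighbour. Once such a target is guaranteed, the utility arithmetic built into the construction (the carefully chosen $p$ and $z$) immediately closes the argument.
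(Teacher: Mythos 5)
Your overall strategy matches the paper's in outline: argue by contradiction, use \Cref{np:jump:claim0} to get that every star of $G[A]$ is type-monochromatic on its occupied vertices with all central vertices occupied, bound the utility of a $T_2$-agent on a central vertex of $A$ by $\nicefrac{1}{q}$, and then exhibit a profitable jump. Your case $y=0$, $u\geq 1$ essentially reproduces the paper's \Cref{np:jump:claim1.2} and is fine. (One arithmetic slip: your friend count $p+(y-1)+\Delta(G')$ can be as large as $2|T_2|-3+\Delta(G')$, which only gives $u_i(\mathbf{v})<\nicefrac{2}{q}$; you need to bound the number of adjacent friends by $|T_2|-1$ to get the $<\nicefrac{1}{q}$ that the comparison with the jump target requires.)

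The case $y\geq 1$, however, has genuine gaps. Two of your auxiliary claims are false as stated: if the only unoccupied $M_1$-vertex adjacent to a $T_2$-star centre is adjacent to $v_i$ itself, then $i$'s jump to it yields utility $0$, not $1$ (its sole neighbour becomes empty after the jump); and $\lambda>0$ does \emph{not} force an unoccupied vertex outside $A$ when all $\lambda$ unoccupied vertices lie in $M_1$ (in that sub-case the profitable jump is instead performed by a $T_1$-agent on a central vertex of $A$, who has utility less than $1$ because of the $T_2$-occupied centre and can reach utility $1$ on an unoccupied $M_1$-vertex of another star --- an argument you do not give). More importantly, the step you yourself defer as ``the main obstacle'' --- locating an unoccupied vertex in $V(G)\setminus A$ with a $T_2$-neighbour --- is exactly the hard core of the lemma, corresponding to Case~3 of the paper's \Cref{np:jump:claim1.1}, where the $|T_2|-1$ agents of $T_2$ in $A$ fill one star exactly. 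There, every unoccupied vertex may a priori have only $T_1$-neighbours. The paper resolves this by first proving that $A$ and $B$ are then fully occupied, invoking property~2 of \Cref{np:jump:lemma_ass} ($\deg_{G'}(v)<\lambda$ for all $v\in V_{S'}$) --- not property~3, which you cite --- to produce an unoccupied $w\in V(G')\setminus V_{S'}$ that is \emph{not adjacent to} $v_i$, and then, when $w$ has no $T_2$-neighbour, switching to a different profitable jump: $w$ is adjacent to a central vertex of $B$ occupied by a $T_1$-agent, so a $T_1$-agent on a central vertex of $A$ jumps to $w$ and attains utility $1$. Your proposal contains neither the full-occupancy argument, nor the use of $\deg_{G'}(v)<\lambda$, nor this fallback jump, so the contradiction is not actually derived in the hardest configuration.
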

\begin{proof}
	We prove this lemma by splitting it into Claims
	\ref{np:jump:claim1.1} and \ref{np:jump:claim1.2}. First, we prove that 
	there 
	are no 
	agents from $T_2$ in $A$. Then, we prove that all vertices in $A$ are 
	occupied. 
	\begin{claimS}\label{np:jump:claim1.1}
		In every jump-equilibrium $\mathbf{v}$ in $\mathcal{I}$, no vertex 
		from  $A$ is occupied by an agent from~$T_2$. 
	\end{claimS}
	\begin{proof}[Proof of Claim]\renewcommand{\qedsymbol}{$\blacklozenge$}
		Recall that by the construction, we have that 
		$|T_1|\geq |A|= 
		|X_1 \cup V_{S'_1}|\cdot (|T_2|-1) =z \cdot (|T_2|-1)= 
		(s+|V(G')|+|M_2|) \cdot 
		(|T_2|-1)$. Thus, even if all vertices in $V(G) \setminus A=(V(G') 
		\setminus 
		V_{S'_1}) \cup M_2$ are occupied by agents from $T_1$, there are at 
		least $s \cdot 
		(|T_2|-1)$~agents from $T_1$ in $A$. Since each star in $A$ 
		contains $|T_2|-1$ 
		vertices, there have to be agents from $T_1$ on at least $s$ stars. 
		By 
		\Cref{np:jump:claim0}, it holds that at least $s$ central vertices 
		are occupied 
		by agents from~$T_1$.   
		
		Now suppose there are $x>0$ agents from $T_2$ in $A$. We 
		distinguish the 
		following three cases based on the value of $x$ (see 
		\Cref{fig:claim2_1}) and prove in each case that  $\mathbf{v}$ 
		cannot be a jump-equilibrium.
		
		\begin{figure}[t]
			\centering
			\begin{subfigure}[t]{0.3\textwidth}
				\tikzstyle{alter}=[circle, minimum size=12.5pt, draw, inner 
				sep=1pt, semithick] 
				\tikzstyle{majarr}=[draw=black, thick]
				\centering
				\begin{tikzpicture}[auto]
				
				\draw[rounded corners] (0, 0) rectangle (\textwidth*0.4, 
				\textwidth); 
				\node[] at (\textwidth*0.2, \textwidth+2ex) { $A$};
				
				\node[alter, fill=c_T2] at (\textwidth*0.3,\textwidth*0.7) 
				(a) 
				{};
				
				\node[alter] at (\textwidth*0.07,\textwidth*0.6) (a_1) {};
				\node[alter, fill=c_T2] at 
				(\textwidth*0.06,\textwidth*0.775) 
				(a_2) {};
				\node[alter, fill=c_T2] at (\textwidth*0.15,\textwidth*0.9) 
				(a_3) {};
				
				\draw[majarr] (a) edge (a_1);
				\draw[majarr] (a) edge (a_2);
				\draw[majarr] (a) edge (a_3);
				\draw[majarr] (a) edge  ($(a)+(0ex, 2ex)$);
				\draw[majarr] (a) edge  ($(a)+(1ex, 2ex)$);
				\draw[majarr] (a) edge  ($(a)+(2ex, 2ex)$);
				
				\node[alter, fill=c_T1] at 
				(\textwidth*0.075,\textwidth*0.3) 
				(x) {};
				\node[alter, fill=c_T1] at (\textwidth*0.2,\textwidth*0.2) 
				(y) 
				{};
				\node[alter, fill=c_T1] at 
				(\textwidth*0.325,\textwidth*0.3) 
				(z) {};
				
				\draw[thick,decorate,decoration={brace,mirror, 
					amplitude=5pt}]
				($(x)+(-7pt,-5ex)$) --  ($(z)+(7pt,-5ex)$) node[midway, 
				below=2pt] {\tiny s $T_1$-neighb.};
				
				\draw[majarr] (a) edge (x);
				\draw[majarr] (a) edge (y);
				\draw[majarr] (a) edge (z);
				
				\draw[majarr] (x) edge (y);
				\draw[majarr] (x) edge (z);
				\draw[majarr] (y) edge (z);
				\foreach \x in {x,y,z} {
					\draw[majarr] (\x) edge  ($(\x)-(1ex, 2ex)$);
					\draw[majarr] (\x) edge  ($(\x)-(0, 2ex)$);
					\draw[majarr] (\x) edge  ($(\x)-(-1ex, 2ex)$);
				}
				
				\tikzset{shift={(0.5\textwidth,0)}}
				
				\draw[rounded corners] (0, 0) rectangle (\textwidth*0.45, 
				\textwidth); 
				\node[] at (\textwidth*0.2, \textwidth+2ex) { 
					$G[V(G)\setminus 
					A]$};
				
				\node[alter, label=below:$w$] at 
				(\textwidth*0.15,\textwidth*0.45) (b) {};
				\node[alter, fill=c_T2] at (\textwidth*0.3,\textwidth*0.65) 
				(c) 
				{};
				\draw[majarr] (b) edge (c);
				
				\draw[->, thick , bend left=10] (a) to (b);
				
				\end{tikzpicture}
				\caption{ $x<|T_2|-1$. }
			\end{subfigure}
			\hspace{0.5cm}   
			\begin{subfigure}[t]{0.2\textwidth}
				\tikzstyle{alter}=[circle, minimum size=12.5pt, draw, inner 
				sep=1pt, semithick] 
				\tikzstyle{majarr}=[draw=black, thick]
				\centering
				\begin{tikzpicture}[auto]
				\draw[rounded corners] (0, 0) rectangle (\textwidth*0.9, 
				\textwidth*1.5); 
				\node[] at (\textwidth*0.45, 1.5*\textwidth+2ex) { $A$};
				
				\node[alter, fill=c_T2] at (\textwidth*0.3,\textwidth*1.1) 
				(a) 
				{};
				\node[alter, fill=c_T2] at (\textwidth*0.3,\textwidth*0.4) 
				(b) 
				{};
				\node[alter, fill=c_T1] at (\textwidth*0.7,\textwidth*0.75) 
				(c) 
				{};
				
				\draw[majarr] (a) edge (b);
				\draw[majarr] (a) edge (c);
				\draw[majarr] (b) edge (c);
				
				\draw[majarr] (c) edge ($(c)+(2ex, -2ex)$);
				\draw[majarr] (c) edge ($(c)+(3ex, 0ex)$);
				\draw[majarr] (c) edge ($(c)+(2ex, 2ex)$);
				
				\draw[majarr] (a) edge ($(a)+(1ex, 2ex)$);
				\draw[majarr] (a) edge ($(a)+(2ex, 2ex)$);
				\draw[majarr] (a) edge ($(a)+(3ex, 2ex)$);
				
				\draw[majarr] (b) edge ($(b)+(1ex, -2ex)$);
				\draw[majarr] (b) edge ($(b)+(2ex, -2ex)$);
				\draw[majarr] (b) edge ($(b)+(3ex, -2ex)$);
				
				\node[alter, fill=c_T2] at (\textwidth*0.1,\textwidth*0.9) 
				(a_1) {};
				\node[alter, fill=c_T2] at (\textwidth*0.1,\textwidth*1.1) 
				(a_2) {};
				\node[alter, fill=c_T2] at 
				(\textwidth*0.225,\textwidth*1.35) 
				(a_3) {};
				\draw[majarr] (a) edge (a_1);
				\draw[majarr] (a) edge (a_2);
				\draw[majarr] (a) edge (a_3);
				
				\node[alter] at (\textwidth*0.1,\textwidth*0.55) (b_1) {};
				\node[alter, fill=c_T2] at (\textwidth*0.1,\textwidth*0.3) 
				(b_2) {};
				\node[alter, fill=c_T2] at 
				(\textwidth*0.225,\textwidth*0.1) 
				(b_3) {};
				\draw[majarr] (b) edge (b_1);
				\draw[majarr] (b) edge (b_2);
				\draw[majarr] (b) edge (b_3);
				
				\draw[->, thick , bend left=10] (a) to (b_1);
				
				\end{tikzpicture}
				\caption{ $x=|T_2|$. }
			\end{subfigure}
			\hspace{0.5cm}
			\begin{subfigure}[t]{0.3\textwidth}
				\tikzstyle{alter}=[circle, minimum size=12.5pt, draw, inner 
				sep=1pt, semithick] 
				\tikzstyle{majarr}=[draw=black, thick]
				\centering
				\begin{tikzpicture}[auto]
				\draw[rounded corners] (0, 0) rectangle (\textwidth*0.4, 
				\textwidth); 
				\node[] at (\textwidth*0.2, \textwidth+2ex) { $A$};
				
				\node[alter, fill=c_T2] at (\textwidth*0.1,\textwidth*0.7) 
				(a) 
				{};
				\node[alter, fill=c_T1] at (\textwidth*0.1,\textwidth*0.3) 
				(b) 
				{};
				\node[alter, fill=c_T1] at (\textwidth*0.3,\textwidth*0.5) 
				(c) 
				{};
				
				\draw[majarr] (a) edge (b);
				\draw[majarr] (a) edge (c);
				\draw[majarr] (b) edge (c);
				
				\draw[majarr] (c) edge ($(c)+(1ex, -2ex)$);
				\draw[majarr] (c) edge ($(c)+(2ex, 0ex)$);
				\draw[majarr] (c) edge ($(c)+(1ex, 2ex)$);
				
				\draw[majarr] (a) edge ($(a)+(-1ex, -2ex)$);
				\draw[majarr] (a) edge ($(a)+(-2ex, -2ex)$);
				\draw[majarr] (a) edge ($(a)+(-2.5ex, -1ex)$);
				
				\draw[majarr] (b) edge ($(b)+(1ex, -2ex)$);
				\draw[majarr] (b) edge ($(b)+(2ex, -2ex)$);
				\draw[majarr] (b) edge ($(b)+(3ex, -2ex)$);
				
				\node[alter, fill=c_T2] at (\textwidth*0.07,\textwidth*0.9) 
				(a_1) {};
				\node[alter, fill=c_T2] at (\textwidth*0.2,\textwidth*0.9) 
				(a_2) {};
				\node[alter, fill=c_T2] at (\textwidth*0.33,\textwidth*0.9) 
				(a_3) {};
				\draw[majarr] (a) edge (a_1);
				\draw[majarr] (a) edge (a_2);
				\draw[majarr] (a) edge (a_3);
				
				\tikzset{shift={(0.45\textwidth,0)}}
				
				\draw[rounded corners] (0, 0) rectangle (\textwidth*0.3, 
				\textwidth); 
				\node[] at (\textwidth*0.15, \textwidth+2ex) { $G' 
					\setminus 
					V_{S'}$};
				
				\node[alter, label=below:$w$] at 
				(\textwidth*0.15,\textwidth*0.5) (x) {};
				
				\tikzset{shift={(0.35\textwidth,0)}}
				
				\draw[rounded corners] (0, 0) rectangle (\textwidth*0.4, 
				\textwidth); 
				\node[] at (\textwidth*0.2, \textwidth+2ex) { $B$};
				
				\node[alter, fill=c_T1] at (\textwidth*0.1,\textwidth*0.5) 
				(u) 
				{};
				\node[alter, fill=c_T1] at 
				(\textwidth*0.275,\textwidth*0.75) 
				(v) {};  
				\node[alter, fill=c_T1] at 
				(\textwidth*0.275,\textwidth*0.25) 
				(w) {};
				
				\draw[majarr] (x) edge (u);
				\draw[majarr] (u) edge (v);
				\draw[majarr] (u) edge (w);
				\draw[majarr] (v) edge (w);
				
				\foreach \x in {u,v,w} {
					
					\draw[majarr] (\x) edge  ($(\x)+(2ex, 1ex)$);
					\draw[majarr] (\x) edge  ($(\x)+(2ex, 0ex)$);
					\draw[majarr] (\x) edge  ($(\x)+(2ex, -1ex)$);
					
				}
				
				\draw[->, thick , bend left=25, thick] (c) to (x);
				
				\end{tikzpicture}
				\caption{$x=|T_2|-1$. }
			\end{subfigure}
			
			\caption{\Cref{np:jump:claim1.1}: If there are $x>0$ agents from 
				$T_2$ in 
				$A$, then a profitable jump exists. }
			\label{fig:claim2_1}
		\end{figure}
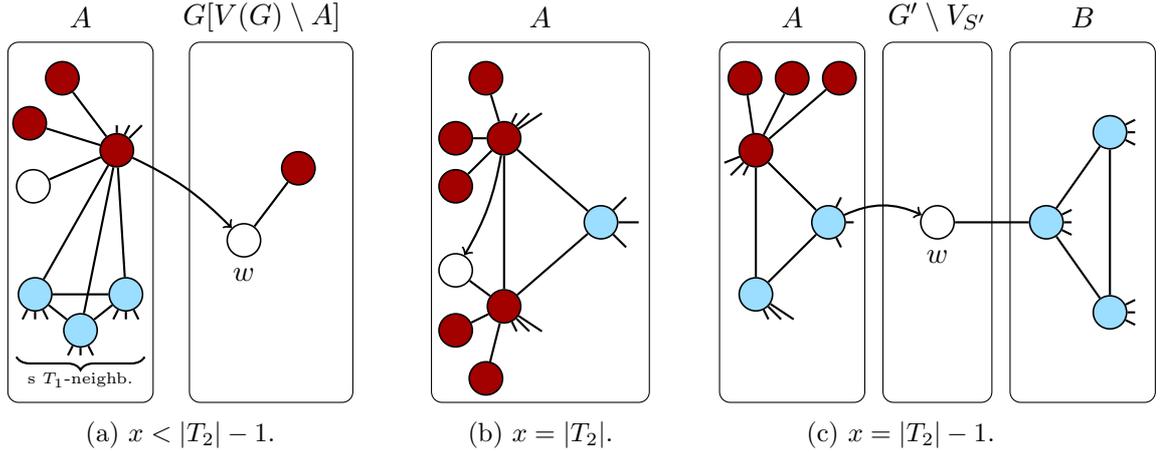
		
		\paragraph{Case 1:} First, assume there are $x<|T_2|-1$ agents from 
		$T_2$ in 
		$A$. Recall \Cref{np:jump:claim0}, which states that no agent on a 
		degree-one 
		vertex in $A$ has no adjacent friend in a jump-equilibrium. 
		Thus, 
		there exists an agent $i \in T_2$ on a central vertex  in $A$. Let 
		$S$ be the 
		set of vertices of the star which contains $v_i$. Since it holds 
		that 
		$x<|T_2|-1=|S|$, not all vertices in $S$ can be occupied by agents 
		from $T_2$. 
		By \Cref{np:jump:claim0}, these vertices have to be unoccupied. 
		Summarizing, 
		there exists an unoccupied degree-one vertex $w\in S$ adjacent to 
		the 
		central 
		vertex occupied by $i\in T_2$ (note that an agent from 
		$T_2\setminus \{i\}$ would get utility $1$ from jumping to $w$. 
		However, as there exist unoccupied vertices, we cannot be sure that 
		there exists an agent with utility smaller than $1$ on a vertex from 
		$V(G)\setminus N_G(w)$.
		
		Next, we upper-bound the utility of agent $i$. Note that $i \in 
		T_2$ is adjacent to all other central vertices in $A$, of which at 
		least $s=q \cdot (|T_2| + \Delta(G') ) + 1$ are occupied by agents 
		from~$T_1$. It therefore holds that:
		\begin{align*} 
		u_i(\mathbf{v})\leq \frac{|T_2|}{q \cdot (|T_2| + \Delta(G') ) + 1 
			+|T_2|} < \frac{1}{q}.
		\end{align*}
		Since $x<|T_2|$ and $|T_1| \geq |A|$, there are agents from both 
		types in $V(G)\setminus A$. Furthermore, recall that 
		$G[V(G)\setminus A]$ is connected. Now 
		consider a path between two arbitrary agents from $T_1$ and $T_2$ 
		in $G[V(G)\setminus A]$. If there are two adjacent agents $i' \in 
		T_1$ and $j' \in 
		T_2$ on this path, then it holds that $u_{j'}(\mathbf{v})<1$ and 
		jumping to $w$ is profitable for $j'$. Thus, no such two agents can 
		exist. However then, there exists a unoccupied vertex $w'$ on the 
		path that is adjacent to an agent in $T_2$.
		Note that $\deg_G(w')\leq  \Delta(G')+|V(G')|^2 + |V_{S'_2}|=q$.
		Jumping to $w'$ is profitable for agent $i$:
		\begin{align*}
		u_i(\mathbf{v}^{i \rightarrow w'}) \geq \frac{1}{q}>u_i(\mathbf{v}).
		\end{align*}
		
		\paragraph{Case 2:} Next, we consider the case where $x=|T_2|$. 
		Since $2 \cdot 
		(|T_2|-1)>x>|T_2|-1$, the agents from $T_2$ occupy vertices on at 
		least two 
		stars in $A$, but cannot occupy all vertices of these stars. With 
		\Cref{np:jump:claim0}, there exists an unoccupied degree-one vertex 
		$w$ 
		adjacent to a central vertex occupied by an agent from $T_2$. Now 
		consider an 
		agent $i \in T_2$ on another central vertex in $A$ not adjacent to 
		$w$. We have 
		that $u_i(\mathbf{v})<1$, since $i$ is adjacent to agents from 
		$T_1$ on other 
		central vertices. Then, agent $i$ can increase its utility to $1$ 
		by jumping to~$w$.
		
		\paragraph{Case 3:} Finally, we address the case where $x=|T_2|-1$. 
		Again, by 
		\Cref{np:jump:claim0}, at least one of the central vertices in $A$ 
		has to be 
		occupied by an agent $i \in T_2$. Furthermore, if there are agents 
		from $T_2$ 
		on two or more stars, then there exists an unoccupied degree-one 
		vertex 
		adjacent to a central vertex occupied by an agent from $T_2$ and an 
		agent from 
		$T_2$ on another central vertex with utility less than $1$. 
		Analogous to Case 
		2, such an assignment cannot be a jump-equilibrium. Therefore, the 
		$|T_2|-1$ 
		agents from $T_2$ occupy all vertices of one star in $A$. We denote 
		the set of 
		vertices of this star by $S$. 
		
		All central vertices in $A$ and $B$ have to be occupied by 
		\Cref{np:jump:claim0}. Note that there is only one agent from $T_2$ 
		outside of 
		$A$. This agent cannot occupy a degree-one vertex in $B$, since 
		it would have 
		no adjacent friends on such a vertex. If it occupies a central 
		vertex in $B$, then 
		there are agents from $T_1$ with no adjacent friends on degree-one 
		vertices in 
		this star (recall that at least three vertices of each star have to 
		be 
		occupied). Both possibilities contradict \Cref{np:jump:claim0}. 
		Hence, this 
		agent occupies a vertex in $V(G')\setminus V_{S'}$ and all central 
		vertices in 
		$B$ are 
		occupied by agents from $T_1$.
		
		Next, we argue that both $A$ and $B$ are fully occupied in this 
		case. First, 
		suppose there 
		exits an unoccupied vertex $w$ in $A$. By \Cref{np:jump:claim0}, 
		$w$ has to be 
		a degree-one vertex. Furthermore, since all $|T_2|-1$ agents from 
		$T_2$ in $A$ 
		fully occupy one star, $w$ is adjacent to a central vertex occupied 
		by an agent 
		from $T_1$. Then, an agent from $T_1$ on a central vertex in $A$ 
		not adjacent 
		to $w$ can increase its utility to $1$ by jumping to $w$. If there 
		is an 
		unoccupied degree-one vertex $w$ in $B$, then this vertex is 
		adjacent to a 
		central vertex occupied by an agent from $T_1$ (recall that all 
		central 
		vertices in $B$ are occupied by agents from $T_1$). Then again, any 
		agent from 
		$T_1$ on a central vertex in $A$ has a profitable jump to $w$. 
		Thus, both $A$ 
		and $B$ are fully occupied and all $\lambda$ unoccupied vertices 
		are in 
		$V(G')\setminus V_{S'}$.
		
		Now consider the central vertex $v_i\in S$ of the star in $A$ 
		occupied 
		by the agents from~$T_2$. Recall that as our given instance is 
		regularized (see \Cref{np:jump:lemma_ass}) we can assume that 
		$\deg_{G'}(v)<\lambda$ for all $v \in V_{S'}$ in our input 
		instance. In our construction, we only add edges within $A$ to a 
		vertex in $A$. Hence, as all $\lambda$ unoccupied vertices need to 
		be from $V(G')\setminus V_{S'}$,  there exists an 
		unoccupied vertex $w$ in 
		$G'-V_{S'}$ which is not adjacent to $v_i$. If $w$ is adjacent to 
		an agent in $T_2$, jumping to $w$ is profitable for agent $i$, as, 
		with 
		an argument analogous to Case 1, we get that: 
		\begin{align*}
		u_i(\mathbf{v}^{i \rightarrow w}) \geq \frac{1}{q}>u_i(\mathbf{v}).
		\end{align*}
		Hence, $w$ is not adjacent to any agent from $T_2$. Recall that as 
		we assume that our input instance is regularized (see 
		\Cref{np:jump:lemma_ass}),  
		in the input instance, for every vertex $v \notin V_{S'}$ not 
		occupied by a stubborn agent, 
		there 
		exist two adjacent vertices $s_i, s_j \in V_{S'}$ occupied by 
		stubborn 
		agents $i 
		\in T'_1$ and $j \in T'_2$. Thus, since $w \in V(G') 
		\setminus 
		V_{S'}$, the vertex $w$ is adjacent to a central vertex in $B$, 
		which is 
		occupied by an agent from $T_1$. Then again, any agent from $T_1$ 
		on a central 
		vertex in $A$ can increase its utility to $1$ by jumping to $w$. 
		This concludes 
		Case 3. Since we have exhausted all possible cases, Claim 
		\ref{np:jump:claim1.1} 
		follows.
	\end{proof}
	Next, we prove the second part of \Cref{np:jump:claim1} by proving that 
	all 
	vertices in $A$ are occupied.
	\begin{claimS}\label{np:jump:claim1.2}
		In every jump-equilibrium $\mathbf{v}$ in $\mathcal{I}$, no vertex 
		in $A$ is unoccupied. 
	\end{claimS}
	\begin{proof}[Proof of Claim]\renewcommand{\qedsymbol}{$\blacklozenge$}
		Suppose there exists an unoccupied vertex $v \in A$. Again, by 
		\Cref{np:jump:claim0}, $v$ has 
		to be a 
		degree-one vertex and by \Cref{np:jump:claim0} and 
		\Cref{np:jump:claim1.1} $v$ needs to be adjacent to a vertex 
		occupied by an agent from $T_1$. Note that $|T_2|>|V(G')|$, 
		therefore by \Cref{np:jump:claim1.1} there have 
		to be 
		agents from $T_2$ in $B$ (as there are no agents from $T_2$ in $A$). By \Cref{np:jump:claim0}, this implies 
		that a 
		central vertex in $B$ is occupied by an agent from $T_2$. Since we 
		also have 
		that $|T_1|\geq |A|$, there are agents from $T_1$ outside of $A$. 
		If there is an agent from $T_1$ in $B$, there 
		also 
		exists an agent $i \in T_1$ on a central vertex in $B$. We have 
		that 
		$u_i(\mathbf{v})<1$, since $i$ is adjacent to the central vertex 
		occupied by 
		an agent from $T_2$. However, agent $i$ can then increase its 
		utility to $1$ 
		by jumping to $v$. Hence, all remaining agents from $T_1$ not in 
		$A$ are in 
		$V(G') \setminus V_{S'}$. Since all central vertices in $B$ have to 
		be 
		occupied, there are agents from $T_2$ on all central vertices in 
		$B$.
		
		Now consider an arbitrary agent $i \in  T_1$ in $V(G') \setminus 
		V_{S'}$.  Since we assume the given instance to be regularized (see 
		\Cref{np:jump:lemma_ass}), $i$ is adjacent to a central vertex in 
		$B$, which is occupied by an agent from $T_2$. We thus have 
		$u_i(\mathbf{v})<1$. Agent $i$ can increase its utility to~$1$ by 
		jumping to $v$, which completes the proof.
	\end{proof}
	With Claim \ref{np:jump:claim1.1} and Claim \ref{np:jump:claim1.2}, 
	\Cref{np:jump:claim1} 
	follows.
\end{proof}

Next, we prove the following analogous claim for $B$, where we show that 
all vertices in~$B$ are occupied by agents from $T_2$.

\begin{lemma}\label{np:jump:claim2}
	In every jump-equilibrium $\mathbf{v}$ in $\mathcal{I}$, there is no 
	unoccupied vertex in $B$ and every vertex in $B$ is 
	occupied by an agent 
	from $T_2$. 
\end{lemma}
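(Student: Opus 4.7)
The plan is to mirror Lemma~\ref{np:jump:claim1} and split the statement into two sub-claims: (A) no vertex in $B$ is occupied by an agent from $T_1$, and (B) no vertex in $B$ is unoccupied. I would establish~(A) first and then derive~(B) using~(A).

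For~(A), I would use that, by Lemma~\ref{np:jump:claim0}, each star of $B$ is effectively monochromatic on its occupied vertices: a $T_1$ agent on a degree-one vertex in $M_2$ forces the adjacent central vertex in $V_{S'_2}$ to be $T_1$, and symmetrically for $T_2$. So if~(A) fails, some central vertex $w\in V_{S'_2}$ carries a $T_1$ agent $i$. I then distinguish two subcases. If every central vertex of $V_{S'_2}$ is $T_1$, Lemma~\ref{np:jump:claim0} forces all $T_2$ agents to lie in $V(G')\setminus V_{S'}$, contradicting the numerical bound $|T_2|\geq|M_2|=|V_{S'_2}|\cdot|V(G')|^2>|V(G')\setminus V_{S'}|$. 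If instead both $T_1$- and $T_2$-central vertices appear in $V_{S'_2}$, then $i$ is adjacent (through the $V_{S'_2}$-clique) to a $T_2$-central agent $j$, so $u_i(\mathbf{v}),u_j(\mathbf{v})<1$. Mimicking Case~3 of Claim~\ref{np:jump:claim1.1}, I would locate a profitable jump: either an unoccupied degree-one vertex in a same-type star of $B$ (yielding utility~$1$) or, using the regularity of the instance (\Cref{np:jump:lemma_ass}) and the fact that Lemma~\ref{np:jump:claim1} forces $V_{S'_1}$ to be $T_1$-occupied, a vertex of $V(G')\setminus V_{S'}$ strictly improving the clique-diluted utility of $i$ or $j$.

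For~(B), once~(A) holds, $B$ contains only $T_2$ agents and unoccupied degree-one vertices, and every central vertex of $V_{S'_2}$ is $T_2$ by Lemma~\ref{np:jump:claim0}. Suppose for contradiction some $v\in M_2$ is unoccupied. Since $A$ is fully $T_1$-occupied (Lemma~\ref{np:jump:claim1}), an unoccupied slot in $B$ combined with the overall agent count forces at least one $T_2$ agent $j$ into $V(G')\setminus V_{S'}$. By regularity of the instance, $j$ is adjacent to some $w_1\in V_{S'_1}$, which Lemma~\ref{np:jump:claim1} guarantees is $T_1$-occupied, so $u_j(\mathbf{v})<1$; but $v$'s unique neighbor is a $T_2$-occupied central vertex in $V_{S'_2}$, so jumping to $v$ gives $j$ utility~$1$, the desired profitable jump.

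The hard part will be the mixed-centrals subcase of~(A): the naive comparison of $|T_2|$ against $|B|$ is inconclusive since $|T_2|-|B|=|T'_2|-|V_{S'_2}|$ can have either sign. I would instead argue locally, showing by a pigeonhole argument over the monochromatic stars of $B$ that a coexistence of $T_1$- and $T_2$-central vertices in $V_{S'_2}$ already entails an unoccupied degree-one vertex in some star, from which a profitable jump for $i$ or $j$ is constructed exactly as in Case~3 of Claim~\ref{np:jump:claim1.1}.
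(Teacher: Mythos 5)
Your decomposition into (A) and (B) mirrors the paper's split into \Cref{np:jump:claim2.1} and \Cref{np:jump:claim2.2}, and your part (B) is essentially the paper's argument. The gap is in the mixed-centrals subcase of (A), and it starts with a false premise: your reason for abandoning the comparison of $|T_2|$ with $|B|$ is mistaken. The set $V_{S'_2}$ consists of vertices occupied by stubborn agents of type $T'_2$, so $|V_{S'_2}|\le |T'_2|$ always, hence $|T_2|=|T'_2|+|M_2|\ge |V_{S'_2}|+|M_2|=|B|$ and the difference never has ``either sign'' (your own part (B) silently relies on this same count to push a $T_2$ agent into $V(G')\setminus V_{S'}$). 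Having discarded that inequality, the jumps you propose for $i$ or $j$ do not go through. The pigeonhole does produce an unoccupied degree-one vertex, but only in the star of a $T_1$-occupied central vertex of $B$ (only $y=|T'_1|-|V_{S'_1}|<|V(G')|<|V(G')|^2$ agents of $T_1$ exist outside $A$); if $i$ is the central agent of that very star, jumping to its own leaf leaves $i$ with no occupied neighbor and utility $0$, and $j\in T_2$ gains nothing there either. The fallback jump into $V(G')\setminus V_{S'}$ is also not provably profitable: a $T_1$ agent on a central vertex of $V_{S'_2}$ has at most about $y+\Delta(G')<2|V(G')|$ friends against degree at least $|V(G')|^2$, so utility at most roughly $2/|V(G')|$, whereas an unoccupied vertex of $V(G')\setminus V_{S'}$ only guarantees utility $1/\Delta(G')$, and $1/\Delta(G')>2/|V(G')|$ fails for dense $G'$. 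The $1/q$-arithmetic of Case~3 of Claim~\ref{np:jump:claim1.1} hinges on the enormous clique $V_{S'_1}\cup X_1$ and does not transfer to $B$, whose clique $V_{S'_2}$ is small.

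The repair is exactly the count you rejected, and it is what the paper does: since $|T_2|\ge |B|$ and some vertex of $B$ is occupied by an agent from $T_1$, some agent $j'\in T_2$ must sit in $V(G')\setminus V_{S'}$; by regularity, $v_{j'}$ is adjacent to a vertex of $V_{S'_1}$, whose occupant $i'\in T_1$ (guaranteed by \Cref{np:jump:claim1}) then has utility strictly less than $1$ and profitably jumps to the unoccupied leaf of the $T_1$-star in $B$, obtaining utility exactly $1$. With this jumper, neither of your two subcases is needed.
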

\begin{proof}
	Again, we prove this claim by dividing it into 
	\Cref{np:jump:claim2.1} and \Cref{np:jump:claim2.2}.
	\begin{claimS}\label{np:jump:claim2.1}
		In every jump-equilibrium $\mathbf{v}$ in $\mathcal{I}$, no vertex 
		from  $B$ is occupied by an agent from~$T_1$. 
	\end{claimS}
	\begin{proof}[Proof of Claim]\renewcommand{\qedsymbol}{$\blacklozenge$}
		Since by \Cref{np:jump:claim1} $A$ is fully occupied by agents 
		from $T_1$, there are $y 
		\coloneqq 
		|T_1|-|A|<|V(G')|$ agents from $T_1$ outside of $A$. Suppose there 
		are agents 
		from $T_1$ in $B$.  By  \Cref{np:jump:claim0}, there exists a 
		central vertex 
		$v \in B$ which is occupied by an agent from $T_1$. Furthermore, 
		since 
		$y<|V(G')|^2$, there exists an unoccupied degree-one vertex 
		adjacent to $v$. 
		Note that $|T_2| \geq |B|$. As not all vertices in $B$ are 
		occupied by 
		agents from $T_2$, there exists an agent $j \in T_2$ on a vertex in 
		$V(G') 
		\setminus V_{S'}$. Since the given instance is regularized (see 
		\Cref{np:jump:lemma_ass}), this agent is adjacent to a central 
		vertex in 
		$A$. Let $i' \in T_1$ be the agent on 
		this 
		central 
		vertex. We 
		have $u_{i'}(\mathbf{v})<1$, since $i' \in T_1$ is adjacent to $j 
		\in T_2$. 
		Then, jumping to $v$ is profitable for agent $i'$, since 
		$u_{i'}(\mathbf{v})<1=u_{i'}(\mathbf{v}^{i' \rightarrow v})$.
	\end{proof}
	
	\begin{claimS}\label{np:jump:claim2.2}
		In every jump-equilibrium $\mathbf{v}$ in $\mathcal{I}$, no vertex 
		in $B$ is unoccupied. 
	\end{claimS}
	\begin{proof}[Proof of Claim]\renewcommand{\qedsymbol}{$\blacklozenge$}
		Suppose there exists an unoccupied vertex $v \in B$. By 
		\Cref{np:jump:claim0}, the vertex $v$ 
		is
		a degree-one vertex. Since all central vertices have to be occupied 
		and by \Cref{np:jump:claim2.1}, $B$ 
		only contains agents from $T_2$, the vertex $v$ is adjacent to a 
		central 
		vertex occupied by an agent from $T_2$. It holds that $|T_2| \geq 
		|B|$. Since 
		at least one vertex in $B$ is unoccupied, there exists an agent $i 
		\in T_2$ on 
		a vertex in $V(G') \setminus V_{S'}$. We have that 
		$u_{i}(\mathbf{v})<1$, 
		since $i$ is adjacent to a central vertex in $A$ (by our assumption 
		that the given instance is regularized; see 
		\Cref{np:jump:lemma_ass}), which is occupied 
		by an 
		agent from $T_1$ by \Cref{np:jump:claim1}. Agent $i$ can increase 
		her utility 
		to $1$ by jumping to $v$.
	\end{proof}
	
	\Cref{np:jump:claim2} follows from 
	\Cref{np:jump:claim2.1} and \Cref{np:jump:claim2.2}.
\end{proof}
Finally, we prove the correction of the backwards direction of the construction, i.e., that the input Schelling game with stubborn agents admits 
a 
jump-equilibrium only if the constructed Schelling game without 
stubborn agents admits a 
jump-equilibrium:

\begin{lemma} \label{le:jumpbd}
	If the constructed instance $\mathcal{I}$ of \jEq{} admits a 
	jump-equilibrium, then the given instance $\mathcal{I}'$ of 
	\stubjEq{} admits a jump-equilibrium.
\end{lemma}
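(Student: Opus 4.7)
The plan is to mirror the structure of the analogous backwards direction for swap-equilibria (\Cref{swapEqEx:Back}), defining an assignment $\mathbf{v}'$ from the jump-equilibrium $\mathbf{v}$ on $G$ and verifying that it is a jump-equilibrium in the input instance. First, I would define $\mathbf{v}'$ as follows: the stubborn agents occupy their designated vertices $V_{S'}$; each vertex $v \in V(G') \setminus V_{S'}$ that is occupied in $\mathbf{v}$ is occupied in $\mathbf{v}'$ by a strategic agent of the same type as the agent at $v$ in $\mathbf{v}$; and each vertex of $V(G') \setminus V_{S'}$ that is unoccupied in $\mathbf{v}$ stays unoccupied in $\mathbf{v}'$.

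Next, I would verify that $\mathbf{v}'$ is well-defined and places the correct multiset of agents. By \Cref{np:jump:claim1}, all vertices of $A \supseteq V_{S'_1}$ are occupied by $T_1$-agents in $\mathbf{v}$, which matches the type of the stubborn agents we place on $V_{S'_1}$ in $\mathbf{v}'$; symmetrically for $V_{S'_2} \subseteq B$ via \Cref{np:jump:claim2}. Since $|T_1| = |T_1'| + |M_1| + |X_1|$ and $A = M_1 \cup V_{S'_1} \cup X_1$ is fully filled by $T_1$-agents in $\mathbf{v}$, exactly $|T_1'| - |V_{S'_1}|$ agents of type $T_1$ sit on $V(G') \setminus V_{S'}$ in $\mathbf{v}$, precisely the number of strategic $T_1'$-agents to be placed there in $\mathbf{v}'$; the count for $T_2$ works analogously. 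Moreover, by \Cref{np:jump:claim1,np:jump:claim2} all $\lambda = \lambda'$ unoccupied vertices of $\mathbf{v}$ lie in $V(G') \setminus V_{S'}$, so the unoccupied vertices of $\mathbf{v}'$ are exactly the right number of vertices in $V(G')\setminus V_{S'}$.

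To show that $\mathbf{v}'$ is a jump-equilibrium, I would argue by contradiction: assume a strategic agent $i$ has a profitable jump in $\mathbf{v}'$ to some unoccupied vertex $v$. Then both $v_i$ and $v$ lie in $V(G') \setminus V_{S'}$ (strategic agents are only there, and so are unoccupied vertices). By construction of $G$, the neighborhoods of $v_i$ and of $v$ are identical in $G'$ and in $G$. Every neighbor of $v_i$ or of $v$ in $G'$ is either a vertex in $V_{S'}$ (occupied in both assignments by agents of the same type, by \Cref{np:jump:claim1,np:jump:claim2}) or a vertex in $V(G')\setminus V_{S'}$ (where $\mathbf{v}$ and $\mathbf{v}'$ agree both on occupancy and on agent type by construction). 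Hence $u_i^{G}(\mathbf{v}) = u_i^{G'}(\mathbf{v}')$ and $u_i^{G}(\mathbf{v}^{i \rightarrow v}) = u_i^{G'}((\mathbf{v}')^{i \rightarrow v})$, so the same jump would be profitable for $i$ in $\mathbf{v}$, contradicting that $\mathbf{v}$ is a jump-equilibrium on $G$.

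The task is essentially a careful bookkeeping exercise rather than a deep argument; the one substantive risk is that if any vertex in $A$ or $B$ were unoccupied in $\mathbf{v}$ or occupied by the wrong type, the type and count correspondence above would break. The heavy lifting is therefore done entirely by the structural lemmas \Cref{np:jump:claim1,np:jump:claim2}, which pin down $A$ as entirely $T_1$-occupied and $B$ as entirely $T_2$-occupied in every jump-equilibrium of $\mathcal{I}$; once these are in hand, the translation of both assignment and (potential) profitable jump between $G$ and $G'$ is transparent.
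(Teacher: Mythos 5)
Your proposal is correct and follows essentially the same route as the paper's proof: define $\mathbf{v}'$ by restricting $\mathbf{v}$ to $G'$ with stubborn agents on $V_{S'}$, use \Cref{np:jump:claim1,np:jump:claim2} to establish the type and count correspondence, and then transfer any profitable jump in $\mathbf{v}'$ back to $\mathbf{v}$ via the identical neighborhoods of vertices in $V(G')\setminus V_{S'}$. Your treatment of the counting and of the unoccupied vertices is, if anything, slightly more explicit than the paper's, but the argument is the same.
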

\begin{proof}
	Assume that there exists a jump-equilibrium 
	$\mathbf{v}$ in the constructed instance $\mathcal{I}$. We define an 
	assignment 
	$\mathbf{v'}$ 
	on $G'$ for the given instance $\mathcal{I}'$  with stubborn agents as 
	follows and afterwards 
	prove that 
	it is a jump-equilibrium.  In $\mathbf{v'}$, an occupied vertex $v \in 
	V(G') 
	\setminus V_{S'}$ is occupied by a strategic agent  of the same type as the 
	agent on $v$ in $\mathbf{v}$. If a vertex is unoccupied in $\mathbf{v}$, 
	then 
	it is also unoccupied in $\mathbf{v}'$. The vertices in $V_{S'}$ have to be 
	occupied by the respective stubborn agents. Note that by 
	\Cref{np:jump:claim1,np:jump:claim2}, in $\mathbf{v}$, the vertices in  
	$V_{S'_1} 
	\subseteq A$ and  $V_{S'_2} \subseteq B$ have to be occupied by agents from 
	$T_1$ 
	and $T_2$, respectively.  Thus, in $\mathbf{v'}$, all occupied vertices are 
	occupied by agents of the same type as in $\mathbf{v}$. Note that in 
	$\mathbf{v'}$ we have assigned exactly $|T'_1|$ agents from $T'_1$ and 
	$|T'_2|$ agents from $T'_2$ as it holds that $|T_1|=|T'_1|+|M_1|+|X_1|$  
	and 
	$|T_2|=|T'_2|+|M_2|$ and by \Cref{np:jump:claim1,np:jump:claim2} we know 
	that in $\mathbf{v}$ all vertices from $M_1\cup X_1$ are occupied by agents 
	from $T_1$ and all vertices from $M_2$ are occupied by agents from $T_2$.
	
	Next, we prove that $\mathbf{v}'$ is a jump-equilibrium in the given 
	Schelling game. Since the 
	agents on vertices in $V_{S'}$ are stubborn, only agents on vertices in 
	$V(G') 
	\setminus V_{S'}$ can have a profitable jump. However, all 
	unoccupied vertices are in $V(G') \setminus V_{S'}$ in both $\mathbf{v}$ 
	and  $\mathbf{v}'$ and the neighborhood of vertices from $V(G') \setminus 
	V_{S'}$ is the 
	same in $G$ and $G'$. A 
	profitable jump for an agent on a vertex in $V(G') \setminus V_{S'}$ to an unoccupied 
	vertex from $V(G') \setminus V_{S'}$ in $\mathbf{v}'$ would 
	therefore also be a profitable jump in $\mathbf{v}$, thereby contradicting 
	that $\mathbf{v}$ is a jump-equilibrium. Thus, $\mathbf{v'}$ is a 
	jump-equilibrium.
\end{proof}

Note that the membership of \jEq in NP is trivial, as it is possible to verify 
that a given assignment is a jump equilibrium by iterating over pairs of 
occupied vertices and unoccupied vertices and checking whether the jump of the 
agent from the occupied vertex to the unoccupied vertex is profitable.
Thus, from \Cref{le:jumpfd} and \Cref{le:jumpbd}, \Cref{np:jumpEqEx} follows: 
\jumhard*

\section{Robustness of Equilibria} \label{ch:rob}
Having established in \Cref{ch:np_nostubborn} that deciding the existence 
of an equilibrium is NP-hard, we now introduce the concept of 
\textit{robustness} of an 
equilibrium. We consider both the robustness of an equilibrium with respect 
to the deletion of edges and with respect to the deletion of vertices, where 
deleting a vertex 
also implies deleting the agent occupying the vertex from the Schelling game.

\begin{definition}\label{r-robust} 
	For a Schelling game $I$ on a graph $G$, an equilibrium $\mathbf{v}$ in $I$ 
	is \emph{$r$-edge-robust} (\emph{$r$-vertex-robust})
	for some $r \in \NN_0$ if $\mathbf{v}$ is an
	equilibrium in $I$ on the topology~$G-S$ for all subsets of edges $S \subseteq 
	E(G)$ (for all subsets of vertices $S\subseteq V(G)$)  with $|S| \leq 
	r$. The \emph{edge-robustness} (\emph{vertex-robustness}) 
	of 
	$\mathbf{v}$ is the largest~$r\in[0,|E(G)|]$ ($r\in[0,|V(G)|]$) for which  
	$\mathbf{v}$ is $r$-edge-robust ($r$-vertex-robust).
\end{definition} 
Note that given an equilibrium $\mathbf{v}$ that becomes 
unstable after deleting~$r$ edges/vertices, deleting further edges/vertices can 
make 
$\mathbf{v}$ 
stable again, as any assignment is stable if we delete all edges/vertices. That is why in \Cref{r-robust} we require that $\mathbf{v}$ is stable for all $S$ with 
$|S| \leq r$ and not only for all~$S$ with $|S|=r$. 

We focus on the robustness of swap-equilibria. While for jump-equilibria the introduced concepts are 
also meaningful, already obtaining lower and upper bounds on the robustness 
of a jump-equilibrium on a single fixed graph is problematic, as the robustness of a
jump-equilibrium significantly depends on the number of unoccupied vertices. 
For instance, if the number of agents of both types is small, on most graphs the agents 
can be placed such that all agents are only adjacent to friends making the 
equilibrium~quite~robust.

Note that as we restrict our attention to swap-equilibria, deleting agents and 
vertices is equivalent, as in this context a once unoccupied vertex can never 
become occupied again and is also irrelevant for computing utilities. Thus, all 
our results on vertex-robustness also apply to the robustness with 
respect to the deletion of agents.
However, when analyzing jump-equilibria one would have to distinguish the two 
and consider the robustness of a jump-equilibrium with respect to the deletion 
of 
agents (but not vertices) and the deletion of vertices 
(and the agents occupying them).
\subsection{First Observations}

We start by observing that if for one type there exists only a single agent,
then there never is a profitable swap.
Hence, in this case, every assignment trivially is a swap-equilibrium of 
edge-robustness~$|E(G)|$ and vertex-robustness~$|V(G)|$. That is why, in the 
following, we assume 
that~$\min\{|T_1|,|T_2|\}\ge 2$. Further, if we consider vertex-robustness, 
then in case that $|T_1|=2$ and $|T_2|=2$, after deleting an agent from 
one type the other agent from this type will always have zero utility and 
cannot be part of a profitable swap, implying that any swap-equilibrium 
has vertex-robustness~$|V(G)|$.  Thus, in  the following, considering 
vertex-robustness, we assume that $|T_1|\geq 3$ and $|T_2|\geq 2$. 

Focusing on edge-robustness for a moment, only the deletion of 
edges between agents of the same type has 
an influence on the stability of a swap-equilibrium.
This is stated more precisely in the following proposition.

\begin{proposition}\label{swap:corollaryEdgeSameType}
	Let $\mathbf{v}$ be a swap-equilibrium  for a Schelling game on 
	topology~$G$. Let $S \subseteq E(G)$ be a set of edges such that 
	$\mathbf{v}$ 
	is not a swap-equilibrium on~$G-S$. Then,
	\begin{enumerate}[label=(\roman*)]
		\item $S$ contains at least one edge between agents of the same type.
		\item  $\mathbf{v}$ is also not a swap-equilibrium on 
		$G-S'$, where $S'\subseteq S$ is the subset of edges from $S$ that 
		connect agents of the same type.
		\item For every set $A\subseteq \{ \{v_i,v_j\} \in 
		E(G) \mid i,j \in T_1  \vee i,j \in T_2 \}$ of 
		edges 
		between 
		agents of the same type, $\mathbf{v}$ is also not a swap-equilibrium on
		$G-(S 
		\cup A)$.
	\end{enumerate}
\end{proposition}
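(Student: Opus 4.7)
The plan is to prove (iii) in detail and then obtain (ii) from (i), while (i) itself is handled by a parallel case analysis. Throughout, I fix a profitable swap $i \leftrightarrow j$ in $G-S$ that witnesses the failure of $\mathbf{v}$ to be a swap-equilibrium there, with $i \in T_1$ and $j \in T_2$ (same-type swaps are never profitable). The crucial observation is that each of the four utilities $u_i(\mathbf{v})$, $u_i(\mathbf{v}^{i \leftrightarrow j})$, $u_j(\mathbf{v})$, $u_j(\mathbf{v}^{i \leftrightarrow j})$ depends only on the edges incident to $v_i$ or $v_j$ in the current graph, so the effect of any edge modification on the swap's profitability reduces to a local case analysis at the pair $\{v_i, v_j\}$.

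For (iii), I would process the edges of $A$ one at a time and show that each deletion can only \emph{widen} both profit margins $u_i(\mathbf{v}^{i \leftrightarrow j}) - u_i(\mathbf{v})$ and $u_j(\mathbf{v}^{i \leftrightarrow j}) - u_j(\mathbf{v})$. The subcases are: the edge is not incident to $\{v_i,v_j\}$, in which case none of the four utilities change; the edge has the form $\{v_i,v_k\}$ with $v_k$ occupied by a $T_1$-agent (necessarily $v_k \neq v_j$, since $v_j$ carries a $T_2$-agent), so removing it weakly lowers $u_i(\mathbf{v})$ while leaving $u_i(\mathbf{v}^{i \leftrightarrow j})$ untouched (widening $i$'s margin) and moves $u_j(\mathbf{v}^{i \leftrightarrow j})$ from $p/q$ to $p/(q-1) \geq p/q$ while leaving $u_j(\mathbf{v})$ untouched (widening $j$'s margin); and the symmetric case at $v_j$. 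Both margins stay positive throughout, so the same swap witnesses that $\mathbf{v}$ is not a swap-equilibrium on $G - (S \cup A)$.

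Part (ii) follows quickly from (i): write $G - S = (G - S') \setminus (S \setminus S')$ and note that every edge in $S \setminus S'$ is a cross-edge of $\mathbf{v}$. If $\mathbf{v}$ were a swap-equilibrium on $G - S'$, applying (i) with $G - S'$ in place of $G$ and $S \setminus S'$ as the (all-cross) deletion set would force $\mathbf{v}$ to remain a swap-equilibrium on $G - S$, contradicting the hypothesis.

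For (i), I argue the contrapositive: assuming $S$ contains only cross-edges of $\mathbf{v}$, any profitable swap $i \leftrightarrow j$ in $G - S$ must already be profitable in $G$, contradicting that $\mathbf{v}$ is a swap-equilibrium on $G$. I add the edges of $S$ back to $G - S$ one at a time. Edges not incident to $\{v_i,v_j\}$ cause no change; for an edge $\{v_i,v_k\}$ with $v_k$ a $T_2$-agent and $v_k \neq v_j$, adding it inserts a non-friend for $i$ at $v_i$ (lowering $u_i(\mathbf{v})$) and a friend for $j$ at $v_i$ under $\mathbf{v}^{i \leftrightarrow j}$ (moving $u_j(\mathbf{v}^{i \leftrightarrow j})$ from $p/q$ to $(p+1)/(q+1) \geq p/q$), so both margins weakly grow; the case $\{v_j,v_k\}$ is symmetric. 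The main obstacle, which I expect to require an explicit computation, is the case $e = \{v_i,v_j\}$ itself: restoring this edge changes the type-counts at both endpoints of the swap at once and modifies both after-swap utilities, so neither margin changes monotonically, and one has to verify the two swap-profitability inequalities survive the insertion by reducing them to an algebraic condition in the type-count variables at $v_i$ and $v_j$.
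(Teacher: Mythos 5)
Your handling of parts (ii) and (iii) is correct and essentially the paper's own argument: for (iii) you check locally at $v_i$ and $v_j$ that deleting a same-type edge can only widen both profit margins (and, as you note, a same-type edge incident to $v_i$ can never be $\{v_i,v_j\}$ itself, so no case mixes the two endpoints), and (ii) follows from (i) by the same short contradiction the paper uses.

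The genuine gap is exactly the case you flag in part (i), namely a cross-edge $e=\{v_i,v_j\}$ joining the two swapping agents, and the ``algebraic condition'' you defer to does not hold. Let $\alpha,\beta$ be the numbers of $T_1$- and $T_2$-agents adjacent to $v_i$ other than $j$, and $\gamma,\delta$ the numbers of $T_1$- and $T_2$-agents adjacent to $v_j$ other than $i$. With the edge present the swap is profitable iff $\alpha\delta+\alpha<\beta\gamma+\gamma$ and $\alpha\delta+\delta<\beta\gamma+\beta$; with the edge deleted it is profitable iff $\alpha\delta<\beta\gamma$. For $\alpha=2$, $\beta=3$, $\gamma=\delta=1$ the swap is profitable after deleting $e$ (since $2<3$) but not before (agent $i$ has utility $2/6=1/3$ both before and after the swap, so $i$ does not strictly improve). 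This local configuration embeds into a genuine swap-equilibrium --- attach two same-type pendant leaves to each of the seven other vertices around $v_i$ and $v_j$, and one checks that no other pair has a profitable swap --- so statement (i) as written fails for $S=\{\{v_i,v_j\}\}$; your monotone edge-reinsertion argument cannot be completed for this edge. Note that the paper's own proof has the same blind spot: it asserts that every deleted edge incident to $v_j$ leads to a $T_1$-agent and hence to a friend of the incoming agent $i$, but for the edge $\{v_i,v_j\}$ the vertex $v_i$ is occupied by $j\in T_2$ in $\mathbf{v}^{i\leftrightarrow j}$, so deleting that edge raises rather than lowers $u_i^{G-S}(\mathbf{v}^{i\leftrightarrow j})$. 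Your instinct that this case is the crux was right; parts (i) and (ii) only go through if one additionally excludes, or separately treats, cross-edges whose endpoints are the two vertices of the candidate swap.
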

\begin{proof}
	We start by proving statement~(i). Let $S$ be a set of edges between 
	agents of different types. Consider the game $I$ on topology 
	$G-S$ and assume for the sake of contradiction that $\mathbf{v}$ is not a 
	swap-equilibrium, that is, there exist agents $i \in 
	T_1$ and $j \in T_2$ that want to swap. We therefore have 
	$u_i^{G-S}(\mathbf{v})<u_i^{G-S}(\mathbf{v}^{i \leftrightarrow j })$ and 
	$u_j^{G-S}(\mathbf{v})<u_j^{G-S}(\mathbf{v}^{i \leftrightarrow j })$.
	
	On the original topology $G$, it holds that $u_i^{G}(\mathbf{v})\geq 
	u_i^{G}(\mathbf{v}^{i \leftrightarrow j })$, because $\mathbf{v}$ is a 
	swap-equilibrium on $G$. Since all of the edges in $S$ are between agents 
	of different types, no edges to friends of $i$ are deleted in $G-S$ and it 
	therefore holds that $u_i^{G}(\mathbf{v})\leq u_i^{G-S}(\mathbf{v})$. 
	Hence, we have:
	\begin{equation} \label{swap:lemmaEdgeSameType:ineq}
	u_i^{G}(\mathbf{v}^{i \leftrightarrow j })\leq u_i^{G}(\mathbf{v})\leq 
	u_i^{G-S}(\mathbf{v}) < u_i^{G-S}(\mathbf{v}^{i \leftrightarrow j 
	}).                                                              
	\end{equation}
	Consider the vertex $v_j$ that is occupied by agent $j\in T_2$ in 
	assignment $\mathbf{v}$.  Since all of the edges in $S$ are between agents 
	of different types, the only edges incident to $v_j$ that have been deleted 
	in $G-S$ are edges to agents in $T_1$. Therefore,
	$u_i^{G-S}(\mathbf{v}^{i \leftrightarrow j }) \leq u_i^{G}(\mathbf{v}^{i 
		\leftrightarrow j })$. This contradicts \Cref{swap:lemmaEdgeSameType:ineq} 
	and completes the proof of the first statement.
	
	We now turn to statement~(ii).  Assume for the sake of contradiction, 
	that $\mathbf{v}$ is a swap-equilibrium on $G-S'$. Notice that 
	$X=S\setminus S'$ only contains edges between agents of different types.  
	Hence, we can apply the first statement to the fact that $\mathbf{v}$ is a 
	swap-equilibrium on $G-S'$ and get that $\mathbf{v}$ is a 
	swap-equilibrium on $G-(S'\cup X)=G-S$, leading to a contradiction.
	
	Lastly, we consider statement~(iii). Let $i\in T_1$ and $j 
	\in T_2$ be a pair of agents that has a profitable swap on $G-S$ in 
	$\mathbf{v}$. We will 
	now argue that the swap is also profitable on $G-(S \cup A)$. Consider the 
	vertex $v_i$ that is occupied by agent~$i$. Since $A$ only contains edges 
	between agents of the same type, we only delete edges to friends of $i$ in 
	the neighborhood of $v_i$. Hence, the utility of agent $i$ in $\mathbf{v}$ 
	on topology $G-(S \cup A)$ is at most as high as the utility on 
	$G-S$. In the neighborhood of $v_j$, we only delete edges to agents in 
	$T_2$. Therefore, the utility of $i$ after swapping to $v_j$ on $G-(S \cup 
	A)$ has to be at least as high as on $G-S$. By symmetry, the same holds for 
	agent $j$. Hence, the swap is profitable and $\mathbf{v}$ is also not a 
	swap-equilibrium on $G-(S \cup A)$.
\end{proof}

For vertex-robustness, one can similarly observe that only deleting 
a vertex occupied by an agent $a$
adjacent to at least one vertex occupied by a friend of $a$ can 
make a swap-equilibrium unstable. 

Next, note that the utility of an agent only depends on its neighborhood. 
Thus, whether two agents $i$ and $j$ have a profitable swap in $G-S$ only 
depends on the edges/vertices incident/adjacent to $v_i$ and $v_j$ in $S$. Combining 
this with 
the observation that no profitable swap can involve an agent on an 
isolated vertex, it follows that if 
a swap-equilibrium cannot be made unstable by deleting $2\cdot (\Delta(G)-1)$ 
edges/vertices, then it cannot be made 
unstable by deleting an arbitrary number of edges/vertices:
\begin{observation} 
	Let $\mathbf{v}$ be a swap-equilibrium for a Schelling game on
	$G$. If $\mathbf{v}$ is $2\cdot( \Delta(G)-1)$-edge-robust,  
	$\mathbf{v}$ has edge-robustness $|E(G)|$ and if $\mathbf{v}$ is~$2\cdot( 
	\Delta(G)-1)$-vertex-robust,  
	$\mathbf{v}$ has vertex-robustness $|V(G)|$.
\end{observation}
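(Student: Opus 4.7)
The plan is to prove both halves of the observation by contraposition: assume $\mathbf{v}$ is \emph{not} $|E(G)|$-edge-robust (respectively, not $|V(G)|$-vertex-robust) and exhibit a destabilizing edge set (respectively vertex set) of size at most $2(\Delta(G)-1)$. Focusing on edges first, suppose some set $S \subseteq E(G)$ destabilizes $\mathbf{v}$, so that on $G-S$ there is a profitable swap between some agents $i \in T_1$ and $j \in T_2$. I would then define $S'' := \{e \in S : e \text{ is incident to } v_i \text{ or } v_j\}$ and show (a) that the same swap is still profitable on $G-S''$, and (b) that $|S''| \leq 2(\Delta(G)-1)$, which contradicts $2(\Delta(G)-1)$-edge-robustness.

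Point (a) follows from a purely local observation: the four utility values $u_i(\mathbf{v})$, $u_i(\mathbf{v}^{i \leftrightarrow j})$, $u_j(\mathbf{v})$, $u_j(\mathbf{v}^{i \leftrightarrow j})$ on any topology $G-T$ are determined entirely by which of the edges of $G$ incident to $v_i$ or $v_j$ lie in $T$; edges disjoint from $\{v_i,v_j\}$ are irrelevant. Removing from $S$ every edge that touches neither $v_i$ nor $v_j$ therefore preserves all four relevant utilities, so the swap remains profitable on $G-S''$. For point (b), I invoke the preceding remark that a profitable swap cannot involve an agent on an isolated vertex: since the swap between $i$ and $j$ is profitable on $G-S''$, neither $v_i$ nor $v_j$ is isolated in $G-S''$. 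Hence at least one edge incident to $v_i$ survives, giving $|\{e \in S'' : v_i \in e\}| \leq \deg_G(v_i) - 1 \leq \Delta(G)-1$, and symmetrically for $v_j$. A union bound over these two edge sets yields $|S''| \leq 2(\Delta(G)-1)$, as required.

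The argument for vertex-robustness is essentially the same, with edges replaced by neighbors. Neither $v_i$ nor $v_j$ can itself belong to any destabilizing vertex set (otherwise the corresponding agent is removed and no swap involving $i,j$ can occur), so the utilities $u_i$ and $u_j$ at both $\mathbf{v}$ and $\mathbf{v}^{i \leftrightarrow j}$ on $G-T$ depend only on $T \cap (N_G(v_i) \cup N_G(v_j))$. Restricting $S$ to $S'' := S \cap (N_G(v_i) \cup N_G(v_j))$ therefore preserves profitability, and non-isolation of $v_i$ and $v_j$ in $G-S''$ again saves one vertex in each of $N_G(v_i)$ and $N_G(v_j)$, so the union bound yields $|S''| \leq 2(\Delta(G)-1)$. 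I do not anticipate any substantive obstacle; the only subtlety worth flagging is the need to apply the isolated-vertex observation separately at $v_i$ and $v_j$, which is exactly what sharpens the naive bound of $2\Delta(G)$ to the claimed $2(\Delta(G)-1)$.
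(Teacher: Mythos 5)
Your proposal is correct and follows essentially the same route as the paper, which derives the observation from exactly the two facts you use: utilities (and hence profitability of a swap between $i$ and $j$) depend only on the deleted edges/vertices incident/adjacent to $v_i$ and $v_j$, and no profitable swap can involve an agent on an isolated vertex. The paper states this only as a brief remark without a written-out proof; your contrapositive formalization, including the sharpening from $2\Delta(G)$ to $2(\Delta(G)-1)$ via non-isolation of both $v_i$ and $v_j$, is a faithful elaboration of that argument.
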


The simple fact that the utility of an agent only depends on 
its neighborhood leads to a polynomial-time algorithm to 
determine whether a given swap-equilibrium $\mathbf{v}$ has edge-robustness $r 
\in 
\NN_0$: We simply iterate
over all pairs of agents~$i$ and $j$ and check whether we can delete at most 
$r$ edges between~$v_i$ and adjacent vertices occupied by friends of $i$ and 
between $v_j$ and adjacent vertices occupied by friends of $j$
such that the swap of~$i$ and~$j$ becomes profitable (note that the stability 
of $\mathbf{v}$ only depends on the number of such deleted edges in the 
neighborhood of each agent, not 
the exact subset of edges).\footnote{A very similar algorithm can also be used for determining the robustness of jump-equilibria. Instead of 
iterating 
over pairs of vertices, we need to iterate over all pairs of agents and unoccupied 
vertices.}
\begin{proposition}\label{comp:rRobust}
	Given a Schelling game with $n$ agents, a swap-equilibrium~$\mathbf{v}$, 
	and 
	an integer $r \in \NN_0$, one can decide in $\bigO(n^2\cdot r)$ time 
	whether $\mathbf{v}$ is $r$-edge/vertex-robust.
\end{proposition}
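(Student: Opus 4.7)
The plan is to design an algorithm that iterates over all $\mathcal{O}(n^2)$ pairs $\{i,j\}$ with $i\in T_1$ and $j\in T_2$---the only pairs that could possibly engage in a profitable swap---and that, for each such pair, decides in $\mathcal{O}(r)$ time whether some set $S$ with $|S|\le r$ can render the swap of $i$ and $j$ profitable on $G-S$. Combined with an $\mathcal{O}(n^2)$ preprocessing step that, for each agent, tabulates the numbers of friends and non-friends in its neighborhood and sets up adjacency lookups, this yields the claimed $\mathcal{O}(n^2\cdot r)$ bound.

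For edge-robustness, the per-pair subroutine exploits \Cref{swap:corollaryEdgeSameType}: only same-type edges can contribute to breaking the equilibrium, and among these only ones incident to $v_i$ or $v_j$ influence the utilities appearing in the swap condition. Hence the post-deletion utilities $u_i(\mathbf{v})$, $u_j(\mathbf{v})$, $u_i(\mathbf{v}^{i\leftrightarrow j})$ and $u_j(\mathbf{v}^{i\leftrightarrow j})$ depend only on two integers $k_1\le|F_i|$ and $k_2\le|F_j|$ (the numbers of deleted edges from $v_i$, respectively $v_j$, to their friends), and not on which specific edges are chosen. By \Cref{swap:corollaryEdgeSameType}(iii), raising $k_1$ or $k_2$ never destroys a profitable swap, so it suffices to enumerate the $\mathcal{O}(r)$ pairs on the frontier $k_1+k_2=\min(r,|F_i|+|F_j|)$ and check each in $\mathcal{O}(1)$ by evaluating the two profitability inequalities.

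For vertex-robustness the strategy is analogous, but now deleting a vertex in $N_G(v_i)\cap N_G(v_j)$ simultaneously removes edges incident to both $v_i$ and $v_j$. The plan is to partition the relevant vertices of $(N_G(v_i)\cup N_G(v_j))\setminus\{v_i,v_j\}$ by adjacency pattern (adjacent to $v_i$ only, $v_j$ only, or both) and by occupant type, a constant number of categories, and to argue that the reduced utilities depend only on how many vertices from each category are deleted. An exchange argument analogous to \Cref{swap:corollaryEdgeSameType}(iii) will then show that in an optimal deletion set a ``shared'' vertex of a given type is removed only once the corresponding category of ``isolated'' friends has been exhausted; combined with monotonicity of the profitability inequalities along the remaining coordinates, this prunes the four-dimensional search to $\mathcal{O}(r)$ candidate profiles on the budget frontier, each again checkable in $\mathcal{O}(1)$.

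The main obstacle will be the vertex case: establishing the exchange argument that restricts the number of candidate profiles to $\mathcal{O}(r)$ despite the fact that shared vertices couple the two neighborhoods and can have mixed (and not purely monotone) effects on the two profitability inequalities. For edge-robustness the bound follows almost directly from \Cref{swap:corollaryEdgeSameType}, so the edge case serves mainly as a template for the more delicate vertex analysis.
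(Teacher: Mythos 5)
Your edge-robustness algorithm is essentially the paper's proof: iterate over all $\bigO(n^2)$ pairs $i\in T_1$, $j\in T_2$, use \Cref{swap:corollaryEdgeSameType} to argue that only deletions of friend-edges at $v_i$ and $v_j$ matter and that deleting more such edges never restores stability, and then check only the $\bigO(r)$ profiles on the budget frontier, each in constant time. That half is correct and matches the paper exactly.

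The gap is in the vertex case. The paper itself disposes of it with a single sentence (``an analogous approach can be used''), so you are in fact more careful than the paper in flagging the coupling caused by vertices in $N_G(v_i)\cap N_G(v_j)$ --- but the resolution you promise is not delivered, and the missing step is genuinely problematic. Your exchange argument is fine as far as it goes: deleting a non-shared friend of $i$ adjacent only to $v_i$ has the same effect on the two utilities determined by $N_G(v_i)$ as deleting a shared one, while avoiding the adverse effects on the two utilities determined by $N_G(v_j)$, so shared vertices need only be deleted once the non-shared friends on the corresponding side are exhausted. After this reduction the deletion profile is governed by two budget shares $(\alpha,\gamma)$ (friends of $i$ deleted around $v_i$, friends of $j$ deleted around $v_j$). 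The trouble is that once $\alpha$ exceeds the number of non-shared friends of $i$, each further unit of $\alpha$ deletes a shared vertex, which lowers $u_i(\mathbf{v})$ and raises $u_j(\mathbf{v}^{i\leftrightarrow j})$ but also raises $u_j(\mathbf{v})$ and lowers $u_i(\mathbf{v}^{i\leftrightarrow j})$; symmetrically for $\gamma$. Hence neither profitability inequality is monotone in either coordinate, the frontier $\alpha+\gamma=\min\{r,\cdot\}$ no longer suffices, and without a further idea the per-pair search ranges over all $\bigO(r^2)$ feasible pairs $(\alpha,\gamma)$, yielding $\bigO(n^2\cdot r^2)$ rather than the claimed $\bigO(n^2\cdot r)$. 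To close the gap you would need an additional structural argument taming the shared deletions (or you must settle for the weaker, still polynomial, running time); as written, the vertex half of the proposition remains unproven.
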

\begin{proof}
	We describe in detail how to solve the problem for edge-robustness. 
	For vertex-robustness, an analogous approach can be used. Recall the definition 
	of $a_i(\mathbf{v})$ as the number of 
	friends adjacent to  
	agent $i$ in $\mathbf{v}$ and let  
	$b_i(\mathbf{v})\coloneqq|N_{i}(\mathbf{v})|-a_i(\mathbf{v})$ be the 
	number of 
	agents of the other type in the neighborhood of $i$. We define 
	$\mathbbm{1}_{i,j}=1$ if the agents $i$ and $j$ are neighbors and 
	$\mathbbm{1}_{i,j}=0$ otherwise.
	
	We solve the problem using \Cref{alg:equirobustness} for which we prove the 
	correctness and running time in the following:
	First, we prove that \Cref{alg:equirobustness} outputs \textit{yes} if 
	$\mathbf{v}$ is $r$-edge-robust and \textit{no} otherwise. Assume that 
	$\mathbf{v}$ 
	is $r$-edge-robust. It therefore holds for all $i \in T_1$ and $j \in T_2$ and 
	$S 
	\subseteq E(G)$ with $|S|\leq r$ that swapping $i$ and $j$ is not 
	profitable in 
	$G-S$. Hence, it holds that the swap of $i$ and $j$ is not profitable if we 
	delete $x$ 
	edges between $v_i$ and vertices that are occupied by friends of $i$ and 
	$y$ edges between $v_j$ and vertices that are occupied by friends of $j$  
	for all $x,y \in 
	\NN_0$ with 
	$x+y \leq r$ and  $x\leq a_i(\mathbf{v})$ 
	and 
	$y\leq a_j(\mathbf{v})$. 
	After deleting these edges, the utilities of $i$ and $j$ are given by  
	$u'_i(\mathbf{v}) = \frac{a_i(\mathbf{v})-x}{|N_i(\mathbf{v})|-x}$ and 
	$u'_j(\mathbf{v}) = \frac{a_j(\mathbf{v})-y}{|N_j(\mathbf{v})|-y}$. 
	Swapping 
	positions results in utility $u'_i(\mathbf{v}^{i \leftrightarrow j}) = 
	\frac{b_j(\mathbf{v})-\mathbbm{1}_{i,j}}{|N_j(\mathbf{v})|-y}$ for $i$ 
	and 
	$u'_j(\mathbf{v}^{i \leftrightarrow j}) = 
	\frac{b_i(\mathbf{v})-\mathbbm{1}_{i,j}}{|N_i(\mathbf{v})|-x}$ for $j$. 
	Notice 
	that we subtract $\mathbbm{1}_{i,j}=1$ in the numerator if $i$ and $j$ 
	are 
	adjacent, since the vertex previously occupied by $i$ or $j$ in 
	$\mathbf{v}$ is 
	occupied by the other agent of a different type in $\mathbf{v}^{i 
		\leftrightarrow j}$. Since, by our assumption, $\mathbf{v}$ is 
	$r$-edge-robust, it 
	has to hold that $u'_i(\mathbf{v}) \geq u'_i(\mathbf{v}^{i 
		\leftrightarrow j})$ 
	or $u'_j(\mathbf{v}) \geq u'_j(\mathbf{v}^{i \leftrightarrow j})$. 
	Hence, \Cref{alg:equirobustness} outputs \textit{yes}.
	
	Assume that $\mathbf{v}$ is not $r$-edge-robust. Then, there exists a pair 
	of agents 
	$i,j \in N$ and a set of edges $S \subseteq E(G)$ with $|S|\leq r$ such 
	that 
	the swap involving $i$ and $j$ is profitable on $G-S$. Note that, as argued before, 
	we only 
	have to consider deleting edges to friends in the neighborhoods of $i$ 
	and $j$. 
	Therefore, there exist $w,z \in \NN_0$ with $w+z \leq r$, $w\leq 
	a_i(\mathbf{v})$ and $z\leq a_j(\mathbf{v})$ such that swapping $i$ and 
	$j$ is 
	profitable after deleting $w$ edges to adjacent friends of $i$ and $z$ edges to 
	adjacent friends 
	of $j$. As proven in \Cref{swap:corollaryEdgeSameType}, deleting additional 
	edges between $i$ and friends of $i$ and $j$ and friends of $j$ cannot make 
	$\mathbf{v}$ stable again. Thus, it holds for all $w',z' \in \NN_0$ with 
	$w\leq w'\leq a_i(\mathbf{v})$ and $z\leq z' \leq a_j(\mathbf{v})$ that 
	swapping $i$ and 
	$j$ is 
	profitable after deleting $w'$ edges to adjacent friends of $i$ and $z'$ 
	edges to 
	adjacent friends 
	of $j$. Thus, in \Cref{alg:equirobustness} with $x=w$ and 
	$y=\min\{r-x,a_j(\mathbf{v})\}\geq z$, we have $u'_i(\mathbf{v}) < 
	u'_i(\mathbf{v}^{i \leftrightarrow 
		j})$ and 
	$u'_j(\mathbf{v}) < u'_j(\mathbf{v}^{i \leftrightarrow j})$ and 
	therefore 
	return \textit{no}.
	
	Next, we analyze the running time. We first iterate over all pairs of 
	agents $i 
	\in T_1$ and $j \in T_2$ that can potentially be involved in a 
	profitable swap, 
	the number of pairs is upper-bounded by $n^2$. For each pair, we 
	iterate 
	over at most $r$ possible values for $x$ from $\min\{r,a\}$ to zero. 
	All other 
	operations are simple arithmetic operations that can be computed in 
	constant 
	time (assuming we precomputed all $|N_i(\mathbf{v})|$ and $a_i(\mathbf{v})$ in linear time), hence our algorithm runs in $\bigO(n^2\cdot r)$ time.
	\begin{algorithm}[t]
		\caption{Robustness of a Swap-Equilibrium}\label{alg:equirobustness}
		\begin{algorithmic}[1]
			\Input{Topology $G$, swap-equilibrium $\mathbf{v}$, sets of agents 
				$T_1$ and $T_2$ and $r 
				\in \NN_0$}
			\Output{Yes if $\mathbf{v}$ is $r$-edge-robust and No otherwise}
			\Function{Robustness}{$G,\mathbf{v},T_1,T_2,r$}
			\For{each pair $i \in T_1$ and $j \in T_2$} \Comment{Iterate over 
				all possible swaps}
			\For{$x=\min\{r,a_i(\mathbf{v})\}$ \textbf{to} 0} \Comment{Number 
				of edges we delete in $N_G(v_i)$}
			\State $y \gets \min\{r-x,a_j(\mathbf{v})\}$ \Comment{Delete 
				remaining edges in $N_G(v_j)$}
			\State $u'_i(\mathbf{v}) \gets 
			\frac{a_i(\mathbf{v})-x}{|N_i(\mathbf{v})|-x}$ \Comment{Utility of 
				$i$ after deleting edges to $x$ friends}
			\State $u'_i(\mathbf{v}^{i \leftrightarrow j}) \gets 
			\frac{b_j(\mathbf{v})-\mathbbm{1}_{i,j}}{|N_j(\mathbf{v})|-y}$ 
			\Comment{\parbox[t]{.5\linewidth}{Utility of $i$ on $v_j$ after deleting $y$ edges in 
					$N_G(v_j)$}}
			\State $u'_j(\mathbf{v}) \gets 
			\frac{a_j(\mathbf{v})-y}{|N_j(\mathbf{v})|-y}$
			\State $u'_j(\mathbf{v}^{i \leftrightarrow j}) \gets 
			\frac{b_i(\mathbf{v})-\mathbbm{1}_{i,j}}{|N_i(\mathbf{v})|-x}$
			\If {$u'_i(\mathbf{v}) < u'_i(\mathbf{v}^{i \leftrightarrow j})$ 
				\textbf{and} $u'_j(\mathbf{v}) < u'_j(\mathbf{v}^{i 
					\leftrightarrow 
					j})$}
			\State \Return no \Comment{Swapping $i,j$ is a profitable swap}
			\EndIf
			\EndFor
			\EndFor
			\State \Return yes \Comment{No profitable swap is possible}
			\EndFunction
		\end{algorithmic}
	\end{algorithm}
\end{proof}

Note, however, that finding a swap-equilibrium whose vertex- or edge-robustness 
is as high as possible is NP-hard, as we have proven in 
\Cref{swapHardness:noStubborn} that already deciding whether a Schelling game 
admits some swap-equilibrium is NP-hard.

\subsection{Robustness of Equilibria on Different Graph 
	Classes}\label{sec:swap:topInfluence}
In this subsection, we analyze the influence of the topology on the robustness 
of 
swap-equilibria. We first analyze cliques where each swap-equilibrium has 
edge-robustness zero and vertex-robustness $|V(G)|$.
Subsequently, we turn to  
cycles, paths, and grids and find 
that there exists a swap-equilibrium on all these graphs with edge-robustness 
and vertex-robustness zero. 
For paths, we observe that 
the difference between the edge/vertex-robustness of the most and least robust 
equilibrium  can be arbitrarily large. Finally, with 
$\alpha$-star-constellation 
graphs for $\alpha\in \mathbb{N}_0$, we present a class of graphs on which all 
swap-equilibria 
have at least edge/vertex-robustness~$\alpha$.
For $\alpha$-star-constellation graphs we also obtain a precise characterization of swap equilibria and a polynomial-time algorithm for checking whether a swap equilibrium exists. 

\paragraph{Cliques.} We start by observing that on a clique every assignment is a swap-equilibrium. 
From this it directly follows that every swap-equilibrium has 
vertex-robustness $|V(G)|$, as deleting a vertex from a clique results in 
another clique. 
In contrast, each swap-equilibrium can be made unstable by deleting one
edge.
Thereby, the following observation also proves that the difference between the 
edge- and vertex-robustness of a swap-equilibrium can be arbitrarily large:

\begin{observation}\label{edgedel:clique}
	In a Schelling game on a clique $G$ with~$|T_1|\ge 2$ and $|T_2|\ge 2$, every swap-equilibrium $\mathbf{v}$ 
	has 
	edge-robustness zero and vertex-robustness $|V(G)|$.
\end{observation}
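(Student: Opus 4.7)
The plan is to first establish a structural fact about cliques that makes every assignment a swap-equilibrium, and then derive both robustness values from it. In a clique $G$, each agent $i$ of type $T_t$ has all other $n-1$ agents as neighbors, so its utility equals $(|T_t|-1)/(n-1)$ independently of the chosen position. Since swapping any two agents leaves the neighborhood (and hence the utility) of every agent unchanged, no swap can ever be profitable, so every assignment on a clique is a swap-equilibrium. This reduces the statement to proving the two bounds for an arbitrary assignment $\mathbf{v}$.

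For the vertex-robustness bound, I would note that for any $S\subseteq V(G)$ the induced subgraph $G[V(G)\setminus S]$ is again a clique. Applying the observation above to $\mathbf{v}$ restricted to $V(G)\setminus S$ shows that this restricted assignment is still a swap-equilibrium on $G-S$. Hence $\mathbf{v}$ is $|V(G)|$-vertex-robust.

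For the edge-robustness bound, the plan is to exhibit, for the arbitrary assignment $\mathbf{v}$, a single edge whose deletion unlocks a profitable swap. I would pick two friends $i,j\in T_1$ (which exist because $|T_1|\geq 2$) and delete the edge $\{v_i,v_j\}$, and then claim that swapping $i$ with any $k\in T_2$ (which exists because $|T_2|\geq 2$) is profitable on $G-\{v_i,v_j\}$. The heart of the argument is a short arithmetic comparison: at $v_i$ on $G-\{v_i,v_j\}$ agent $i$ has utility $(|T_1|-2)/(n-2)$, while at $v_k$ (whose neighborhood is unaffected by the deletion) its utility becomes $(|T_1|-1)/(n-1)$; using $n>|T_1|$ (which holds because $|T_2|\geq 2$), an elementary cross-multiplication shows the latter is strictly larger. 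A symmetric computation gives $(|T_2|-1)/(n-2)>(|T_2|-1)/(n-1)$ for $k$. Both agents strictly improve, so the swap is profitable and $\mathbf{v}$ is not $1$-edge-robust.

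There is essentially no obstacle — the argument is mostly a structural observation about cliques together with one elementary inequality. The only subtlety is to make sure that after the single-edge deletion the \emph{swap partner} $k$ sits at a vertex whose neighborhood is untouched, so that $k$'s gain comes purely from landing at $v_i$ (which loses an unfriendly-to-$k$ neighbor, namely $j$); this is what makes the swap profitable for both parties simultaneously and is guaranteed by picking $j$ from the same type as $i$.
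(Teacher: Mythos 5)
Your proof is correct and follows essentially the same route as the paper: the paper likewise observes that every assignment on a clique is a swap-equilibrium, derives vertex-robustness $|V(G)|$ from the fact that deleting vertices from a clique yields another clique, and refutes $1$-edge-robustness by deleting a single edge between two same-type agents and comparing exactly the same utilities, $\frac{|T_1|-2}{n-2}$ versus $\frac{|T_1|-1}{n-1}$ for the moving agent and $\frac{|T_2|-1}{n-1}$ versus $\frac{|T_2|-1}{n-2}$ for its swap partner. (One notational nit: write $G-\{\{v_i,v_j\}\}$ for the edge deletion, since in the paper's conventions $G-\{v_i,v_j\}$ would denote deleting the two vertices.)
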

\begin{proof}
	It remains to prove that the edge-robustness is always zero.
	Let $i\neq j \in T_1$, $e:=\{v_i,v_j\}\in E(G)$, and $l\in T_2$. As $G$ is 
	a clique, it holds that 
	$u^G_i(\mathbf{v})= 
	\frac{|T_1|-1}{|T_1|+|T_2|-1}$ and $u^G_l(\mathbf{v})= 
	\frac{|T_2|-1}{|T_1|+|T_2|-1}$. Swapping $i$ and $l$ is 
	profitable 
	in $\mathbf{v}$ on $G-\{e\}$ for 
	both $i$ and~$l$, as
	\begin{align*}&u^{G-\{e\}}_i(\mathbf{v}^{i 
		\leftrightarrow
		l})=\frac{|T_1|-1}{|T_1|+|T_2|-1}>\frac{|T_1|-2}{|T_1|+|T_2|-2}=u^{G-\{e\}}
	_i(\mathbf{v}) \text{ and }\\
	&u_l^{G-\{e\}}(\mathbf{v}^{i 
		\leftrightarrow
		l})=\frac{|T_2|-1}{|T_1|+|T_2|-2}>\frac{|T_2|-1}{|T_1|+|T_2|-1}=u^{G-\{e\}}
	_l(\mathbf{v}).
	\end{align*}
\end{proof}

\paragraph{Cycles.} For a cycle $G$, we can show that in a swap-equilibrium $\mathbf{v}$, every 
agent is adjacent to at least one friend. Then, picking an arbitrary agent 
$i\in 
T_1$ that has utility $\nicefrac{1}{2}$ in~$\mathbf{v}$ and deleting $i$'s 
neighbor from~$T_1$ 
or the edge between $i$ and 
its neighbor from~$T_1$ makes $\mathbf{v}$ 
unstable.
\begin{proposition}\label{swap:cycle}
	In a Schelling game on a cycle $G$ with~$|T_1|\ge 2$ and $|T_2|\ge 2$, every swap-equilibrium~$\mathbf{v}$ 
	has 
	edge-robustness zero. For $|T_1|\geq 3$ and 
	$|T_2|\geq 2$, every swap-equilibrium~$\mathbf{v}$ has vertex-robustness~zero.
\end{proposition}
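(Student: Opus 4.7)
The plan is to follow the hint sketched in the text: first establish that every agent has utility at least $\nicefrac{1}{2}$ in any swap-equilibrium $\mathbf{v}$ on a cycle, and then exhibit a single edge (respectively vertex) whose deletion creates a profitable swap.

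For the utility lower bound I would argue by contradiction. Suppose some $i\in T_1$ has $u_i(\mathbf{v})=0$, so both neighbors of $i$ lie in $T_2$. Pick a maximal $T_2$-block $B$ containing one of these neighbors, and let $j^\dagger\in T_2$ be its far boundary vertex; the other neighbor $c$ of $j^\dagger$ lies in $T_1$, which exists because $|T_1|\ge 2$ ensures the cycle is not entirely $T_2$. By construction $u_{j^\dagger}(\mathbf{v})\le\nicefrac{1}{2}$. I would then show that swapping $i$ and $j^\dagger$ is profitable: at $v_i$ agent $j^\dagger$ has only $T_2$-neighbors (namely the two former neighbors of $i$, except that one of them becomes $i$ itself if $i$ and $j^\dagger$ happen to be adjacent), yielding utility $\ge\nicefrac{1}{2}>u_{j^\dagger}(\mathbf{v})$; at $v_{j^\dagger}$ agent $i$ is adjacent to $c\in T_1$, yielding utility $\ge\nicefrac{1}{2}>0$. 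A short case distinction on whether $|B|=1$ or $|B|\ge 2$ (the former being the only case in which $i$ and $j^\dagger$ are adjacent) handles the technicalities cleanly.

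Given this bound, every same-type block in $\mathbf{v}$ has size at least $2$, so the cycle contains a boundary configuration with $i,i'\in T_1$ in one block (with $i$ on the boundary), $j\in T_2$ the neighbor of $i$ on the other side, $j^\dagger\in T_2$ the far boundary of the $T_2$-block containing $j$, and $i'''\in T_1$ the external neighbor of $j^\dagger$; in particular $u_i(\mathbf{v})=u_{j^\dagger}(\mathbf{v})=\nicefrac{1}{2}$. For the edge-robustness claim I would delete $e=\{v_i,v_{i'}\}$: then $v_i$'s sole remaining neighbor is $v_j\in T_2$, so $u_i$ drops to $0$, while $v_{j^\dagger}$'s neighborhood is untouched. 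The swap $i\leftrightarrow j^\dagger$ now gives $j^\dagger$ utility $1$ at $v_i$ and $i$ utility $\nicefrac{1}{2}$ at $v_{j^\dagger}$ (via the $T_1$-friend $i'''$), so both agents strictly improve.

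For the vertex-robustness claim I would delete $v_{i'}$ and apply the same swap. The main obstacle, and the reason the statement requires $|T_1|\ge 3$, is ensuring $i'''\neq i'$, so that removing $v_{i'}$ does not also remove $v_{j^\dagger}$'s external $T_1$-neighbor and artificially push $u_{j^\dagger}$ to $1$ already before the swap. The equality $i'''=i'$ can only occur when the $T_1$-block containing $i,i'$ has size exactly $2$ and is the only $T_1$-block, i.e.\ when $|T_1|=2$; the assumption $|T_1|\ge 3$ combined with the block-size-at-least-$2$ conclusion of the first step rules this out (either there are at least two $T_1$-blocks, or the unique $T_1$-block has size $\ge 3$), and the remaining utility calculation is then identical to the edge case.
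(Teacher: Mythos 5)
Your proof is correct and follows essentially the same route as the paper: the same key lemma (in a swap-equilibrium on a cycle every agent is adjacent to a friend, proved by swapping a zero-utility agent with a suitably chosen boundary agent of the other type), the same deletion target (the edge joining a boundary $T_1$-agent $i$ to its in-block friend $i'$, respectively the vertex $v_{i'}$), and a profitable swap between $i$ and a non-adjacent boundary $T_2$-agent. The only divergence is the choice of swap partner --- you take the far end $j^\dagger$ of the $T_2$-block adjacent to $i$, whereas the paper takes the $T_2$-agent beyond the far end of $i$'s own $T_1$-block --- which lets you reuse the identical swap in the vertex case by arguing $i'''\neq i'$ from $|T_1|\ge 3$, where the paper instead splits on the block length $l$ and exhibits a fresh swapping pair when $l=2$; both variants are valid.
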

\begin{proof}
	First, we show that, if $|T_1|\geq 2$ and $|T_2|\geq 2$, then each agent is adjacent to at least one friend in a 
	swap-equilibrium~$\mathbf{v}$. For the sake 
	of contradiction, assume without loss of generality that there exists an agent 
	$i\in T_1$ that has $u_i(\mathbf{v})=0$. 
	Let $i'\neq i\in T_1$ be a different agent from $T_1$ with 
	$u_{i'}(\mathbf{v})\leq \frac{1}{2}$ (such an agent needs to exist, as we 
	assume that $|T_1|\geq 2$, $|T_2|\geq 2$, and that $G$ is a cycle). 
	We now distinguish two cases. If $i'$ is adjacent to an agent $j\in T_2$ who 
	is also adjacent to $i$, then swapping $i$ and $j$ is profitable, as 
	$u_i(\mathbf{v}^{i 
		\leftrightarrow j})=\frac{1}{2}>0=u_i(\mathbf{v})$ and $u_j(\mathbf{v}^{i 
		\leftrightarrow j})=\frac{1}{2}>0=u_j(\mathbf{v})$. 
	Otherwise, let $j\in T_2$ be an agent from $T_2$ adjacent to $i'$. Then, 
	swapping $i$ and $j$ is profitable, as $u_i(\mathbf{v}^{i 
		\leftrightarrow j})\geq\frac{1}{2}>0=u_i(\mathbf{v})$ and 
	$u_j(\mathbf{v}^{i 
		\leftrightarrow j})=1>\frac{1}{2}\geq u_j(\mathbf{v})$. 
	
	Let $w_1, \dots, w_l$ be a maximal subpath 
	of vertices occupied by agents from $T_1$ and let $i\in T_1$ be the agent 
	occupying $w_1$ and $j\in T_1$ be 
	the agent occupying $w_l$ in $\mathbf{v}$. By 
	the above argument it holds that $l\geq 2$. Since we have at least two agents 
	from $T_2$, agent $i$ on $w_1$ and 
	agent $j$ on $w_l$ 
	need to be adjacent to agents $i' \in 
	T_2$ and $j' \in T_2$, respectively, with $i' \neq j'$.
	
	We first consider edge-robustness. Let $S:=\{\{w_1,w_2\}\}$.
	On $G-S$, agent $i$ and agent $j'$ have a profitable swap, as 
	they are not adjacent and thus it holds that $u^{G-S}_i(\mathbf{v}^{i 
		\leftrightarrow j'})\ge\frac{1}{2}>0=u^{G-S}_i(\mathbf{v})$ and 
	$u^{G-S}_{j'}(\mathbf{v}^{i 
		\leftrightarrow j'})=1>\frac{1}{2}\geq u_{j'}(\mathbf{v})$. 
	
	We now turn to vertex-robustness. Let $S:=\{w_2\}$. If $l\ge 3$, then $j$ has 
	not been deleted and again swapping 
	$i$ and $j'$ is profitable, as  $u^{G-S}_i(\mathbf{v}^{i 
		\leftrightarrow j'})\ge\frac{1}{2}>0=u^{G-S}_i(\mathbf{v})$ and 
	$u^{G-S}_{j'}(\mathbf{v}^{i 
		\leftrightarrow j'})=1>\frac{1}{2}\geq u_{j'}(\mathbf{v})$. Otherwise, if 
	$l=2$, as $G$ is a cycle and $|T_1|\geq 3$, there needs to exists an agent 
	$p\in T_1\setminus \{i,j\}$ that is adjacent to an agent $q\in T_2$. By our 
	observation from above that each agent in $\mathbf{v}$ needs to be adjacent to 
	at least one friend, it follows that $q\neq i'$ and $q\neq j'$. This implies 
	that $q$ and $i$ have a profitable swap, as $u^{G-S}_i(\mathbf{v}^{i 
		\leftrightarrow q})\ge\frac{1}{2}>0=u^{G-S}_i(\mathbf{v})$ and 
	$u^{G-S}_q(\mathbf{v}^{i 
		\leftrightarrow q})=1>\frac{1}{2}\geq u_q(\mathbf{v})$.
\end{proof}

\paragraph{Paths.} Next, we turn to paths and prove that every Schelling game on a path with 
sufficiently many agents from both types has an equilibrium with 
edge-/vertex-robustness zero and one with edge-robustness 
$|E(G)|$ and vertex-robustness $|V(G)|$. This puts paths in 
a surprisingly sharp contrast to 
cycles. The reason for this is that on a path, we can 
always position the agents such that there exists only one edge between agents of 
different types, yielding a swap-equilibrium with edge-robustness $|E(G)|$ and 
vertex-robustness~$|V(G)|$.
This is not possible on a cycle.

\begin{theorem}\label{path}
	For a Schelling game on a path $G$ with~$|T_1|\ge 4$ and~$|T_2|\ge 2$, 
	there exists a swap-equilibrium 
	$\mathbf{v}$ that has 
	edge-robustness and vertex-robustness zero and a 
	swap-equilibrium~$\mathbf{v}'$ that has edge-robustness 
	$|E(G)|$ and vertex-robustness $|V(G)|$.
\end{theorem}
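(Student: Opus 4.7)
The plan is to exhibit two explicit assignments on the path $v_1,\ldots,v_n$ with $n=|T_1|+|T_2|$, one with maximum robustness and one with zero robustness.

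For the high-robustness equilibrium $\mathbf{v}'$, I would place all $T_1$ agents on $v_1,\ldots,v_{|T_1|}$ and all $T_2$ agents on $v_{|T_1|+1},\ldots,v_n$. To show $\mathbf{v}'$ remains a swap-equilibrium in every subgraph $G-S$ obtained by deleting any set of edges or vertices, I would consider any candidate swap of a $T_1$ agent $a$ at $v_i$ ($i\leq|T_1|$) with a $T_2$ agent $b$ at $v_j$ ($j\geq|T_1|+1$). After the swap, agent $a$ sits at $v_j$ whose only potential neighbors (if present in $G-S$) are $v_{j-1}$ and $v_{j+1}$; each of these is either still in the original $T_2$ block or equals $v_i=v_{|T_1|}$, which after the swap is occupied by $b$ of type $T_2$. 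Thus $a$ has no friend at $v_j$ and receives utility $0$, which cannot strictly exceed $a$'s original nonnegative utility; by symmetry the same holds for $b$, so the swap is never profitable and $\mathbf{v}'$ has edge-robustness $|E(G)|$ and vertex-robustness $|V(G)|$.

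For the low-robustness equilibrium $\mathbf{v}$, I would place $T_1$ agents on $v_1, v_2$, then $T_2$ agents on $v_3,\ldots,v_{|T_2|+2}$, then the remaining $T_1$ agents on $v_{|T_2|+3},\ldots,v_n$; the rightmost $T_1$ block has size $|T_1|-2\geq 2$ thanks to $|T_1|\geq 4$. A direct computation shows that exactly the four positions $v_2, v_3, v_{|T_2|+2}, v_{|T_2|+3}$ have utility $1/2$, while all remaining agents have utility $1$. Since utility-$1$ agents cannot strictly improve, it suffices to inspect the four opposite-type pairs among the boundary agents: the two adjacent pairs $(v_2, v_3)$ and $(v_{|T_2|+3}, v_{|T_2|+2})$ each force one agent to end up with only non-friend neighbors (utility $0$), while the two non-adjacent pairs $(v_2, v_{|T_2|+2})$ and $(v_{|T_2|+3}, v_3)$ leave both utilities exactly at $1/2$; hence $\mathbf{v}$ is a swap-equilibrium.

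To show that $\mathbf{v}$ has edge- and vertex-robustness zero, I would delete either the edge $\{v_1, v_2\}$ or the vertex $v_1$. The $T_1$ agent at $v_2$ then has only the $T_2$ neighbor $v_3$, so its utility drops to $0$; every other utility remains unchanged. The swap between this $T_1$ agent and the $T_2$ agent at $v_{|T_2|+2}$ is now strictly profitable for both: the $T_1$ agent moves to $v_{|T_2|+2}$ and gains utility $1/2$ (friend at $v_{|T_2|+3}$, non-friend at $v_{|T_2|+1}$), and the $T_2$ agent moves to $v_2$ whose sole remaining neighbor $v_3$ is a friend, giving utility $1>1/2$. The main care-point is the case analysis certifying that $\mathbf{v}$ is a swap-equilibrium; in the degenerate regimes $|T_2|=2$ (where the $T_2$ block consists exactly of its two boundary agents) and $|T_1|=4$ (where the right $T_1$ block has only two vertices) the utility values and the four pair checks still go through as stated, but each needs to be verified explicitly.
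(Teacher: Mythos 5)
Your two assignments are exactly the ones the paper uses, and your treatment of the low-robustness equilibrium $\mathbf{v}$ (the check via the four boundary agents with utility $1/2$, and the destabilizing deletion of $\{v_1,v_2\}$ or of $v_1$ followed by the swap with the agent on $v_{|T_2|+2}$) is correct and matches the paper's argument.

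The problem lies in your argument that $\mathbf{v}'$ survives every deletion set $S$. You claim that after an arbitrary cross-type swap of $a\in T_1$ at $v_i$ with $b\in T_2$ at $v_j$, every remaining neighbor of $v_j$ is ``either still in the original $T_2$ block or equals $v_i=v_{|T_1|}$, which after the swap is occupied by $b$,'' so that $a$ ends with utility $0$, and that ``by symmetry the same holds for $b$.'' This step is false as stated: if $j=|T_1|+1$ but $i<|T_1|$, the neighbor $v_{|T_1|}$ of $v_j$ is \emph{not} $v_i$ and is still occupied by a $T_1$-agent, i.e.\ a friend of $a$, so $a$ can end with positive utility (indeed with a strict improvement if $S$ happens to isolate $v_i$). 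Symmetrically, if $i=|T_1|$ but $j>|T_1|+1$, then $b$ lands next to the friend on $v_{|T_1|+1}$. So it is not true that both swapped agents always end with utility $0$. What is true---and what the proof actually needs---is that in every case \emph{at least one} of the two does, since the two exceptional cases above are mutually exclusive, and an agent whose post-swap utility is $0$ never strictly improves. The paper sidesteps this case analysis: every agent on a vertex with index in $[|T_1|-1]\cup[|T_1|+2,n]$ is, in $G-S$, deleted, isolated, or adjacent only to friends, and hence cannot be part of a profitable swap; this leaves only the swap between the agents on $v_{|T_1|}$ and $v_{|T_1|+1}$, after which neither is adjacent to a friend. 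Your argument is repairable with the ``at least one of the two gets utility $0$'' observation, but as written the key step does not hold for all swaps.
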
     
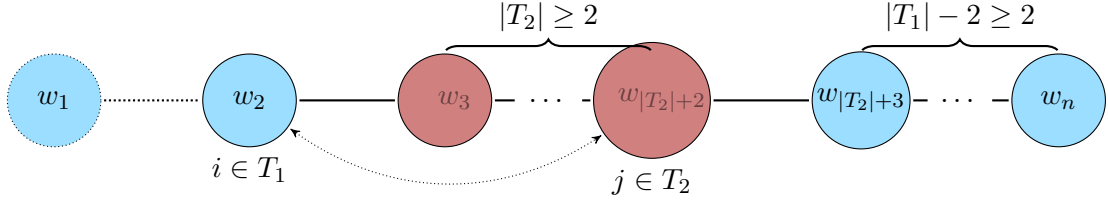
\begin{figure}[t]
	\tikzstyle{alter}=[circle, minimum size=35pt, draw, inner sep=1pt] 
	\tikzstyle{majarr}=[draw=black, thick]
	\centering
	\begin{tikzpicture}[auto, >=stealth',shorten <=1pt, shorten >=1pt]
		\node[alter, densely dotted , fill=c_T1] at (0,0) (a) {$w_1$};
		\node[alter, right = 8ex of a, fill=c_T1, label=below:$i \in T_1$] (b) 
		{$w_2$};
		\node[alter, right = 8ex of b, fill=c_T2, fill opacity=0.5, text opacity=1] (c) {$w_3$};
		\node[alter, right = 8ex of c, fill=c_T2, fill opacity=0.5, text opacity=1, label=below:$j \in T_2$] (d) 
		{$w_{|T_2|+2}$};
		\node[alter, right = 8ex of d, fill=c_T1] (e) {$w_{|T_2|+3}$};
		\node[alter, right = 8ex of e, fill=c_T1] (f) {$w_{n}$};
		
		\draw[majarr,  densely dotted   ] (a) edge (b);
		\draw[majarr] (b) edge (c);
		\edgeDotted(c,d)
		\draw[majarr] (d) edge (e);
		\edgeDotted(e,f)
		
		\draw[thick,decorate,decoration={brace, amplitude=5pt}]
		($(c)+(0,18pt)$) --  ($(d)+(0,18pt)$) node[midway, above=4pt] 
		{$|T_2| \geq 2$};
		\draw[thick,decorate,decoration={brace, amplitude=5pt}]
		($(e)+(0,18pt)$) --  ($(f)+(0,18pt)$) node[midway, above=4pt] 
		{$|T_1|-2 \geq 2$};
		\draw[<->, bend right=35, densely dotted] (b) to (d);
		
		\end{tikzpicture}
	\caption{The swap-equilibrium with robustness zero from \Cref{path}. After 
		deleting $\{w_1, w_2\} \in E(G)$ or $w_1\in V(G)$, swapping $i$ and $j$ is 
		profitable.}
	\label{fig:top:paths:0}
\end{figure}
\begin{proof}
	Let~$V(G)=\{w_1,\ldots,w_n\}$ and $E(G)=\{\{w_i,w_{i+1}\}\mid i\in [n-1]\}$.
	In $\mathbf{v}$, vertices~$w_1$ and~$w_2$ are occupied by agents from~$T_1$, 
	vertices $w_3$ to $w_{|T_2|+2}$ are occupied by agents from $T_2$, and the 
	remaining $|T_1|-2\geq2$ vertices  are occupied by agents from~$T_1$ (see 
	\Cref{fig:top:paths:0} for a visualization). As all 
	agents have at 
	most  one 
	neighbor of the other type and at least one neighbor of the same type, for each 
	pair $i$, $j$ of agents of different types it holds that~$u_i(\mathbf{v}^{i 
		\leftrightarrow j})\le\nicefrac{1}{2}\leq u_i(\mathbf{v})$.
	Thus, $\mathbf{v}$ is a swap-equilibrium.
	Further, after deleting the edge between $w_1$ and $w_2$ or deleting the 
	vertex $w_1$, swapping the agent 
	on~$w_2$ with the agent on $w_{|T_2|+2}$ is profitable. It follows that 
	$\mathbf{v}$ has edge-robustness and vertex-robustness zero. 
	
	In $\mathbf{v}'$, the agents from $T_1$ occupy the first $|T_1|$ vertices and 
	agents from $T_2$ the remaining vertices. 
	Let $S \subseteq E(G)$ or $S\subseteq V(G)$ and consider $G - S$.
	As for $j\in [|T_1|-1] \cup  [|T_1|+2,n]$, in $G-S$, the agent on~$w_j$ got 
	deleted,
	has no neighbor, or is only adjacent to friends, it can never be involved 
	in a profitable swap. Further, swapping the agent on 
	$w_{|T_1|}$ and the agent on $w_{|T_1|+1}$ can also never be profitable, since 
	after the swap none of the two is adjacent to a friend. Thus, $\mathbf{v}'$ is a 
	swap-equilibrium on $G-S$.
\end{proof}
If $\max\{|T_1|,|T_2|\}\le 3$, which is not covered by \Cref{path}, then in every swap-equilibrium the path is split into two 
subpaths and agents from $T_1$ occupy one subpath and agents from $T_2$ occupy 
the 
other subpath. As argued in the proof of \Cref{path}, such an assignment  has 
edge-robustness $|E(G)|$ and vertex-robustness~$|V(G)|$. 

\paragraph{Grids.} Turning to grids, which besides paths form the class which has been 
most 
often 
considered in the context of Schelling's segregation model, for both vertex- 
and edge-robustness, we show using 
some more involved arguments that every 
swap-equilibrium has either robustness one or zero and that there exists an 
infinite class of Schelling games on grids admitting a swap-equilibrium  with 
robustness zero and one with robustness one.

To prove that every swap-equilibrium on a grid has edge-robustness and vertex-robustness at most one, we 
first define the concept of \textit{frames} of a grid (this notion also appears 
in the analysis of the price of anarchy of swap-equilibria on grids in the work of
\citeA{bilo2020topological}). Let $G$ be an $(x\times y)$-grid with
$V(G)=\{ (a,b) \in \NN \times \NN 
\mid a\leq x, b\leq y\}$. We refer to the set $B(G)$ of
vertices in the top row, bottom row, left column and right column as 
\emph{border vertices} (formally,  $B(G)=\{ (a,b) \in V(G) \mid 
a \in \{1,x\} \text{ or } b  \in \{1,y\} \}$). The first frame $F_1$ of~$G$ 
is the set of border vertices. The second frame $F_2$ of~$G$ is the set of 
border vertices of the grid that results from deleting the first frame from 
$G$. Further, for all $i>1$, the frame $F_i$ of~$G$ is the set of border 
vertices of the grid that results from deleting the frames $F_1,\dots , 
F_{i-1}$ from $G$.

\begin{theorem}\label{grid}
	\begin{enumerate}
		\item 	In a Schelling game with ~$|T_1|\ge 5$ and~$|T_2|\ge 
		5$ on an $(x\times y)$-grid with $x\geq 4$, 
		$y\geq 4$, the edge-robustness of a 
		swap-equilibrium is at 
		most one.
		\item In a Schelling game with $|T_1|\ge 4$ and~$|T_2|\ge 
		4$ on an $(x\times y)$-grid with  $x\geq 4$ and~$y\geq 4$, the 
		vertex-robustness of a 
		swap-equilibrium is at 
		most one. 
		\item   In a Schelling game with $|T_1|=|T_2|$ on an $(x\times y)$-grid 
		with  
		even $x\geq 4$ and~$y\geq 2$, there exists a 
		swap-equilibrium $\mathbf{v}$ with 
		edge- and vertex-robustness zero and a swap-equilibrium 
		$\mathbf{v}'$ with edge- and vertex-robustness one.
	\end{enumerate}
\end{theorem}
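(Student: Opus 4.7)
The plan is to prove the three parts separately: parts (1) and (2) are upper bounds proven by structural arguments, while part (3) is an existence result that I would establish by explicit construction.

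For parts (1) and (2), the approach is to exploit the low degree of corner and border vertices. A corner vertex in an $(x\times y)$-grid has degree only $2$, and border vertices have degree only $3$. I would first establish local necessary conditions: for example, in any swap-equilibrium, a corner agent $i$ must have at least one adjacent friend, for otherwise $u_i(\mathbf{v})=0$ and any agent~$j$ of the other type with $u_j(\mathbf{v})<1$ would form a profitable swap with $i$ (since at $i$'s corner, $j$ gains two friend neighbors and $i$ gains at least one friend neighbor at $j$'s former position). Combining such constraints with the observation that both types must occupy some region of the outermost frame (otherwise all agents of one type have utility~$1$ and no deletion can destabilize, contradicting the construction of a relevant equilibrium under the stated bounds on $|T_1|,|T_2|$), I would locate an interface between the two types along the border. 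At this interface, I would identify two specific edges (respectively, two specific vertices, for part (2)) whose removal strictly decreases the utility of one agent of each type adjacent to the interface while strictly increasing the utilities they would obtain after swapping. Deleting a vertex removes multiple incident edges simultaneously, which is why the argument for part (2) requires only $|T_1|,|T_2|\ge 4$ rather than $\ge 5$ as in part (1).

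For part (3), I would give two explicit assignments on the $(x\times y)$-grid with even $x\ge 4$ and $y\ge 2$ and $|T_1|=|T_2|$. For the robustness-one equilibrium $\mathbf{v}'$, I would use the clean half-half split: place all $T_1$ agents in the left $x/2$ columns and all $T_2$ agents in the right $x/2$ columns. Each agent has at most one non-friend neighbor (the one directly across the central interface), so any swap between a $T_1$ and a $T_2$ agent would leave both with many non-friend neighbors and is not profitable, yielding a swap-equilibrium. To show $1$-edge-robustness and $1$-vertex-robustness, I would do a short case analysis on the location of the deleted edge/vertex, showing that in each case the local neighborhoods of any potential swap partners remain such that no swap strictly improves both agents. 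For the robustness-zero equilibrium $\mathbf{v}$, I would perturb the half-half configuration by relocating a small number of agents near the boundary so that a pair of adjacent agents of the same type is positioned such that deleting the single edge between them (or the vertex of one of them) strips a friend neighbor from an agent whose utility drops below the threshold needed to remain indifferent to swapping with an available counterpart on the other side of the interface.

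The main obstacle I expect is the case analysis for parts (1) and (2): grids admit many qualitatively different swap-equilibria depending on how the two types are distributed across the frames, and I will need to handle each configuration (both types on the outer frame, one type only in the interior, one type dominating a corner region, etc.) uniformly. A secondary difficulty lies in constructing the robustness-zero assignment in part (3) so that the ``defect'' is delicate enough to be broken by a single deletion yet robust enough that the original assignment is itself a swap-equilibrium; balancing these two requirements will likely force the construction to depend slightly on the parity of $y$.
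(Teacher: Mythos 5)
There is a genuine gap in your plan for parts (1) and (2). Your strategy is to locate an ``interface between the two types along the border'' and destabilize it with two deletions, but the interface need not lie on the border at all: one type may occupy the entire outer frame (or several outer frames) with the other type confined to the interior, and the justification you give for excluding this (``otherwise all agents of one type have utility $1$ and no deletion can destabilize'') is incorrect --- the interior agents adjacent to the outer type do not have utility $1$, and in any case the claim to be proved is that \emph{every} swap-equilibrium can be broken by two deletions, so you cannot dismiss a configuration merely because it looks robust. The real difficulty, which you flag as your ``main obstacle'' but do not resolve, is that an agent at an interior interface has degree $4$ and may have three adjacent friends, so deleting two edges or vertices leaves it with positive utility and no evident profitable swap. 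The paper's proof supplies the missing idea: an induction over the frames $F_1, F_2, \dots$ of the grid. One first shows $F_1$ must be monochromatic in any $2$-robust equilibrium, and then that each successive frame must be monochromatic \emph{of the same type}, using the observation that the corner vertex of frame $F_{\ell+1}$ has two neighbors in $F_\ell$ (and hence any minority agent there has at most two friends and can be isolated by two deletions); iterating forces the whole grid to be one type, contradicting $|T_1|,|T_2|\geq 5$ (resp.\ $4$). A secondary point you omit is that after isolating an agent $i$ you still need a swap partner $j$ of the other type with $u_j(\mathbf{v})<1$ that is \emph{not adjacent} to $v_i$; the paper obtains one via a path between two far-away agents of different types that avoids $N(v_i)\cup\{v_i\}$, which is where the lower bounds on $|T_1|,|T_2|$ and on $x,y$ are actually used.

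For part (3) your $\mathbf{v}'$ coincides with the paper's half-half split, but your robustness-zero assignment is only a sketch (``relocate a small number of agents near the boundary''), and your worry about the parity of $y$ is unnecessary: the paper simply takes the first column occupied by $T_1$, columns $2$ through $x/2+1$ by $T_2$, and the remaining columns by $T_1$; the agent at the top of column~$1$ then has exactly one friend neighbor, and deleting that single edge (or that friend's vertex) drops its utility to zero and creates a profitable swap with the agent at the bottom of column $x/2+1$.
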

\begin{proof}
	In the following, let~$G$ be an~$(x\times y)$-grid. We start by proving the first statement.
	
	\paragraph*{\textbf{Edge-Robustness at Most~1.}} For the sake of 
	contradiction, assume that there exists a swap-equilibrium 
	$\mathbf{v}$ which is $2$-edge-robust. By induction over the frames of $G$, we 
	show that all vertices have to be occupied by agents from one type in 
	$\mathbf{v}$, which contradicts that we have at least five agents from both 
	types.
	
	\paragraph*{Base Case:} First, consider the frame $F_1$ and assume for the sake of contradiction  
	that there are agents from both types on vertices from $F_1$. Then, there exist two 
	adjacent agents $i \in T_1$ and $j\in T_2$ on vertices from $F_1$. 
	Note that $i$ has at most three neighbors and is thus adjacent to at most two friends. 
	Let $i'\in T_1\setminus \{i\}$ and let $j'\in T_2\setminus \{j\}$ (such agents exist by the assumption that we have at least five agents per type). 
	Since $G$ is an $(x\times y)$-grid with $x,y\geq 4$, there 
	exists a path from $v_{i'}$ to $v_{j'}$ that does not visit the two adjacent vertices $v_i$ and $v_j$ that are part of $F_1$. 
	Given that the two endpoints of the path are occupied by agents of different types, on this path, there need to be agents $i''\in T_1\setminus \{i\}$ and $j''\in T_2\setminus \{j\}$ that are placed on adjacent vertices on this path. 
	Now, if we remove the edge(s) between $v_i$ and its at most two neighbors different from $v_j$, then swapping $i$ and $j''$ is profitable, as $u_i(\mathbf{v})=0<u_i(\mathbf{v}^{i 
		\leftrightarrow j''})$ and $u_{j''}(\mathbf{v})<1=u_{j''}(\mathbf{v}^{i 
		\leftrightarrow j''})$. 
	
	\paragraph*{Induction Step:}
    Assume that it holds for some $\ell \in 
	\NN$,  that all $F_k$ with $k\leq \ell$ are occupied only by agents from 
	$T_t$, $t\in\{1,2\}$.
	
	We now show that then $F_{\ell+1}$ also needs to be only occupied by agents 
	from~$T_t$. We do so by 
	showing that if there are agents from the other type $T_{t'}$, $t'\in\{1,2\}\setminus\{t\}$ in $F_{\ell+1}$, 
	then there exists a profitable swap after deleting at most two edges. Without loss of generality, assume that $|F_{\ell+1}|\geq 2$. If $|F_{\ell+1}|< 
	2$, then we have already reached a contradiction, since all other frames are 
	occupied by agents 
	from $T_t$ only, but there are at least five agents from~$T_{t'}$ that have to be positioned. 
	
    First, suppose that there are only agents from $T_{t'}$ in $F_{\ell+1}$. Let 
	$i \in T_{t'}$ be the agent in the bottom-left corner of $F_{\ell+1}$.
	Formally, let $i \in T_{t'}$ with $v_i = (a,b)  \in F_{\ell+1}$ such that it 
	holds for all other $(a',b') \in F_{\ell+1}$ that $a'\geq a$ and $b' \geq b$. 
	Note that $i$ is adjacent to at least two agents of the other type in $F_{\ell}$ and has 
	at most two adjacent friends. Thus, after deleting at most two edges, we 
	have that $u_i(\mathbf{v})=0$.  Now consider another agent $i' \in T_{t'}$ 
	in $F_{\ell+1}$. Agent $i'$ is adjacent to an agent $j\in T_t$ in $F_{\ell}$. It 
	holds that $u_j(\mathbf{v})<1$. Then, swapping $i$ and $j$ is profitable, 
	since $u_j(\mathbf{v})<1=u_j(\mathbf{v}^{i \leftrightarrow j})$  and 
	$u_i(\mathbf{v})=0<u_i(\mathbf{v}^{i \leftrightarrow j})$. Therefore, there have to be agents from $T_t$ in $F_{\ell+1}$. 
	
	Next, suppose for the sake of contradiction that there are agents from both types in~$F_{\ell+1}$. Then, there exists an agent $i \in T_{t'}$ in $F_{\ell+1}$ that is adjacent to an agent from $T_{t}$  on a vertex also in $F_{\ell+1}$. Note that agent $i$ is adjacent to at least two agents of the other type (the adjacent agent from $T_t$ in $F_{\ell+1}$ and some agent in $F_{\ell}$) and thus has at most two adjacent friends. Next, let $j$ be some agent from $T_t$ that is not adjacent to $i$ and that is not positioned on one of the corner-vertices (i.e., vertices $(a,b)$ of an $(x\times y)$-grid where $a$ is $1$ or $x$ and $b$ is $1$ or $x$) of the grid. Such an agent exists as the frame $F_1$ of the $(x\times y)$-grid with $x,y\geq 4$ is fully occupied by agents from $T_t$. As  $|T_{t'}|\geq 5$, there furthermore exists some other agent $i'$ from $T_{t'}$ not adjacent to $i$ (on some vertex in a frame $F_k$ with $k\geq\ell+1$). As $G$ is a $(x\times y)$-grid with $x,y\geq 4$, it is easy to see that there exists a path from $v_{i’}$ to $v_j$ that does not visit one of the neighbors of $v_i$ (or $v_i$).
	Analogously to the argument for the induction base, given that the two endpoints of the path are occupied by agents of different types, on this path, there need to be agents $i''\in T_{t'}\setminus \{i\}$ and $j''\in T_t$ that are placed on adjacent vertices on this path. Note that by definition of the path $j''$ is not adjacent to $i$.
	Now, if we remove the edge(s) between $v_i$ and its at most two adjacent friends, then swapping $i$ and $j''$ is profitable, as $u_i(\mathbf{v})=0<u_i(\mathbf{v}^{i 
		\leftrightarrow j''})$ and $u_{j''}(\mathbf{v})<1=u_{j''}(\mathbf{v}^{i 
		\leftrightarrow j''})$. 
	
	Thus, the induction step follows. 
	Consequently, in case there is a $2$-edge-robust equilibrium,
	all vertices need to be occupied by 
	agents from the same type. As, however, we assume that $|T_1|,|T_2|\geq 5$, 
	this leads to a contradiction.
	
	\paragraph*{\textbf{Vertex-Robustness at Most~1.}}
	Similarly, we can show that in a Schelling game with $|T_1|\ge 4$ and~$|T_2|\ge 
	4$ on an $(x\times y)$-grid with  $x\geq 4$ and $y\geq 4$, the vertex-robustness of a 
	swap-equilibrium is at most one. For the sake of contradiction, assume that there exists a swap-equilibrium $\mathbf{v}$ which is $2$-vertex-robust. By induction over the frames of~$G$, we 
	show that all vertices have to be occupied by agents from one type in 
	$\mathbf{v}$, which contradicts that we have at least four agents from both 
	types.
	
	\paragraph*{Base Case:} 
	First, consider the frame $F_1$ and assume for the sake of contradiction  
	that there are agents from both types on vertices from $F_1$. Then, there exist two 
	adjacent agents $i \in T_1$ and $j\in T_2$ on vertices from $F_1$. 
	Note that $i$ has at most three neighbors. 
	Let $i'\in T_1$ be an agent from $T_1$ that is not placed on a vertex from $N(v_i)\cup \{v_i\}$ and let $j'\in T_2$ be an agent from $T_2$ which is not placed on a vertex from $N(v_i)\cup \{v_i\}$ (such vertices exist as there exist four agents per type, $|N(v_i)\cup \{v_i\}|\leq 4$, and for both types at least one of its agents is placed on a vertex from $N(v_i)\cup \{v_i\}$). 
	Since~$G$ is an $(x\times y)$-grid with $x,y\geq 4$, there 
	exists a path from $v_{i'}$ to $v_{j'}$ that does not visit a vertex from $N(v_i)\cup \{v_i\}$. 
	Given that the two endpoints of the path are occupied by agents of different types, on this path, there need to be agents $i''\in T_1$ and $j''\in T_2$ that are placed on adjacent vertices on this path and not on a vertex from $N(v_i)\cup \{v_i\}$. 
	Now, if we remove the at most two vertices in $N(v_i)\setminus \{v_j\}$, then $i$ has utility zero and neither $v_{i''}$ nor $v_{j''}$ has been deleted. 
	Thus, swapping $i$ and $j''$ is profitable, as $u_i(\mathbf{v})=0<u_i(\mathbf{v}^{i 
		\leftrightarrow j''})$ and $u_{j''}(\mathbf{v})<1=u_{j''}(\mathbf{v}^{i 
		\leftrightarrow j''})$. 
	
	\paragraph*{Induction Step:} Assume that it holds for some $\ell \in 
	\NN$,  that all $F_k$ with $k\leq \ell$ are occupied only by agents from 
	$T_t$, $ t \in \{1,2\}$.
	
	We now show that then $F_{\ell+1}$ also needs to be only occupied by agents 
	from~$T_t$. Similarly to the proof for edge-robustness, we do so by 
	showing that if there are agents from the other type $T_{t'}$, $t'\in\{1,2\}\setminus\{t\}$ in $F_{\ell+1}$, 
	then there exists a profitable swap after deleting at most two vertices. 
	Without loss of generality, assume that $|F_{\ell+1}|\geq 4$. If $|F_{\ell+1}|< 
	4$, then we have already reached a contradiction, since all other frames are 
	occupied by agents 
	from $T_t$ only, but there are at least four agents from the other 
	type 
	$T_{t'}$ that have to be positioned. 
	
	First, suppose that there are only agents from $T_{t'}$ in $F_{\ell+1}$. Let 
	$i \in T_{t'}$ be the agent in the bottom-left corner of $F_{\ell+1}$. 
	Formally, let $i \in T_{t'}$ with $v_i = (a,b)  \in F_{\ell+1}$ such that it 
	holds for all other $(a',b') \in F_{\ell+1}$ that $a'\geq a$ and $b' \geq b$. 
	Note that $i$ is adjacent to at least two agents of the other type in $F_{\ell}$ and has 
	at most two adjacent friends. Thus, after deleting at most two vertices, we 
	have that $u_i(\mathbf{v})=0$.  As $|F_{\ell+1}|\geq 4$, there exists another agent $i' \in T_{t'}$ 
	in $F_{\ell+1}$ outside of the neighborhood of $i$. Agent $i'$ is adjacent to an agent $j\in T_t$ in $F_{\ell}$. It 
	holds that $u_j(\mathbf{v})<1$. Then, swapping $i$ and $j$ is profitable, 
	since $u_j(\mathbf{v})<1=u_j(\mathbf{v}^{i \leftrightarrow j})$  and 
	$u_i(\mathbf{v})=0<u_i(\mathbf{v}^{i \leftrightarrow j})$.
	
	Therefore, there have to be agents from $T_t$ in $F_{\ell+1}$. Next, suppose for the sake of contradiction that there are agents from both types in $F_{\ell+1}$. Then, there exists an agent $i \in T_{t'}$ in $F_{\ell+1}$ that is adjacent to an agent from $T_{t}$  on a vertex also in $F_{\ell+1}$. Note that agent $i$ is adjacent to at least two agents of the other type (the adjacent agent from $T_t$ in $F_{\ell+1}$ and some agent in $F_{\ell}$) and thus has at most two adjacent friends. Moreover, $i$ has at most four neighbors.
	Let $j\in T_{t}$ be an agent from $T_{t}$ which is not adjacent to $i$ and not placed on a corner-vertex of the grid (such an agent exists as the frame $F_1$ of the $(x\times y)$-grid with $x,y\geq 4$ is fully occupied by agents from $T_t$). Furthermore, let $i' \in T_{t'}$ be an agent from $T_{t'}$ that is not adjacent to $i$ (such an agent exists as $|T_{t'}|\geq 4$ and two of the five vertices in $N(v_i)\cup \{v_i\}$ are occupied by agents from $T_t$). Note that $i'$ has to be placed on a vertex in a frame $F_j$ with $j>i$ (and thus, in particular, can not be placed on a corner-vertex of the grid). As $G$ is an $(x\times y)$-grid with $x,y\geq 4$ and neither $v_j$ nor $v_{i’}$ is a corner-vertex, there exists a path from $v_j$ to $v_{i’}$ that does not go through $N(v_i)\cup \{v_i\}$. 
	Given that the two endpoints of the path are occupied by agents of different types, on this path, there need to be agents $i''\in T_{t’}$ and $j''\in T_t$ that are placed on adjacent vertices on this path and not on a vertex from $N(v_i)\cup \{v_i\}$. 
	Now, if we remove the at most two vertices adjacent to $v_i$ occupied by friends of agent $i$, then $i$ has utility zero and neither $v_{i''}$ nor $v_{j''}$ has been deleted. 
	Thus, swapping $i$ and $j''$ is profitable, as $u_i(\mathbf{v})=0<u_i(\mathbf{v}^{i 
		\leftrightarrow j''})$ and $u_{j''}(\mathbf{v})<1=u_{j''}(\mathbf{v}^{i 
		\leftrightarrow j''})$. 
	
	Thus, the induction step follows. 
	Consequently, in case there is a $2$-vertex-robust equilibrium,
	all vertices need to be occupied by 
	agents from the same type. As, however, we assume that $|T_1|,|T_2|\geq 4$, 
	this leads to a contradiction.
	
	\paragraph*{Schelling Games With Tight Robustness.}
	Let $I$ be 
	a Schelling game with $|T_1| = |T_2|$ on a grid $G$ with $x\geq4$ 
	columns and $y\geq 2$ rows and $x$ being even.
	
	We start by defining the assignment $\mathbf{v}$ as 
	follows. The first column is occupied by agents from $T_1$, the following 
	columns up to column $\frac{x}{2}+1$ are occupied by the agents from $T_2$ 
	and the remaining agents from $T_1$ are placed on columns $\frac{x}{2}+2$ 
	to $x$. 
	
	We now show that $\mathbf{v}$ is a swap-equilibrium. Observe that it holds 
	for all agents $i \in T_2$ that $u_i(\mathbf{v})\geq \frac{1}{2}$. An agent 
	$j \in T_1$ has at most one neighbor in $T_2$. Therefore, agent $i$ could 
	achieve at most $u_i(\mathbf{v}^{i \leftrightarrow j})=\frac{1}{2}\leq 
	u_i(\mathbf{v})$ by swapping with an agent $j\in T_1$. Thus, no profitable 
	swap exists. 
	
	In $\mathbf{v}$, there exists an agent $i \in T_1$ on the vertex at the top 
	of the first column with only one adjacent friend $i'\in T_1$. If we now 
	either delete vertex $v_{i'}$ or the edge $\{v_i,v_{i'}\}$, then agent $i$ and 
	agent $j \in T_2$ at 
	the bottom of column $\frac{x}{2}+1$, who has utility less than $1$ in 
	$\mathbf{v}$, have a profitable swap. It follows that $\mathbf{v}$ has edge- 
	and vertex-robustness zero. 
	
	In the following, we construct an assignment $\mathbf{v}'$ with 
	edge- and vertex-robustness 
	one: The agents from $T_1$ occupy the first 
	$\frac{x}{2}$ columns and the agents from $T_2$ are placed on the remaining 
	columns. We now argue that $\mathbf{v}'$ is a swap-equilibrium: Observe 
	that only the agents on columns $\frac{x}{2}$ and 
	$\frac{x}{2}+1$ have a utility of less than $1$. Therefore, only these 
	agents can be involved in a profitable swap. Consider some agent $i\in T_1$ 
	from 
	column $\frac{x}{2}$. Since $\frac{x}{2}\geq2$, we have 
	$u_i(\mathbf{v}')=\frac{2}{3}$ if $i$ occupies the vertex on the top or 
	bottom of the column and $u_i(\mathbf{v}')=\frac{3}{4}$ otherwise. If $i$ 
	swaps with some agent $j$ from column $\frac{x}{2}+1$, then 
	$u_i(\mathbf{v'}^{i \leftrightarrow j })\leq\frac{1}{3}$ if $v_j$ is the 
	vertex at the top or bottom of the column and $u_i(\mathbf{v'}^{i 
		\leftrightarrow j })\leq\frac{1}{4}$ otherwise. Hence, no 
	profitable swap 
	involving an agent from $T_1$ exists and thus $\mathbf{v}'$ is a 
	swap-equilibrium assignment.
	
	We now show that $\mathbf{v}'$ has edge-robustness $1$. We have already 
	argued 
	above that no swap-equilibrium can be be  
	$2$-edge-robust. 
	To show that $\mathbf{v}'$ is $1$-edge-robust, consider deleting an edge $e 
	\in 
	E(G)$.  From \Cref{swap:corollaryEdgeSameType}, we know that deleting an 
	edge 
	between agents of different types cannot make $\mathbf{v}'$ unstable. 
	Therefore, we only consider edges between agents of the same type. Hence, 
	by 
	symmetry, without loss of generality, let $e$ be an edge between two agents 
	from $T_1$.  Still, only 
	the 
	agents on columns $\frac{x}{2}$ and $\frac{x}{2}+1$ have utility of less 
	than 
	$1$ after the deletion of $e$. If an agent $i \in T_1$  on vertex $v_i$ 
	incident to $e$ is positioned 
	on 
	the top or bottom of column $\frac{x}{2}$, then 
	$u^{G-\{e\}}_i(\mathbf{v}')=\frac{1}{2}$ 
	and $u^{G-\{e\}}_i(\mathbf{v}')=\frac{2}{3}$ otherwise. However, by 
	swapping with some agent 
	$j 
	\in 
	T_2$, agent $i$ can  get at most $u^{G-\{e\}}_i(\mathbf{v'}^{i 
		\leftrightarrow 
		j})=\frac{1}{3}$. Therefore, the swap cannot be 
	profitable 
	and $\mathbf{v}'$ is $1$-edge-robust and hence has edge-robustness~$1$.
	
	The argument why $\mathbf{v}'$ has vertex-robustness $1$ works quite 
	similar: Let $w\in V(G)$ be the vertex we deleted. In $G-w$, again only agents 
	on columns $\frac{x}{2}$ and $\frac{x}{2}+1$ have utility of less 
	than 
	$1$ and can thus be involved in a profitable swap. Assume without loss of 
	generality that $w$ is occupied by an agent from $T_1$ in $\mathbf{v}$. 
	Then, for an 
	agent $i \in T_1$ positioned on vertex $v_i$ from column $\frac{x}{2}$ adjacent 
	to $w$ in $G$, it holds that if $v_i$ is on 
	the top or bottom of column $\frac{x}{2}$, then 
	$u^{G-\{w\}}_i(\mathbf{v}')=\frac{1}{2}$ 
	and $u^{G-\{w\}}_i(\mathbf{v}')=\frac{2}{3}$ otherwise. However, by 
	swapping with some agent 
	$j 
	\in 
	T_2$, agent $i$ can  get at most $u^{G-\{w\}}_i(\mathbf{v'}^{i 
		\leftrightarrow 
		j})=\frac{1}{3}$. Therefore, the swap cannot be 
	profitable 
	and $\mathbf{v}'$ is $1$-vertex-robust and hence has vertex-robustness~$1$.
\end{proof}

\paragraph{$\boldsymbol{\alpha}$-Star-Constellation Graphs.} Lastly, motivated by the observation that on all previously considered graph 
classes there exist swap-equilibria with zero edge-robustness and on all 
considered graph classes except cliques there exist swap-equilibria
with zero 
vertex-robustness, we investigate 
$\alpha$-star-constellation graphs, a generalization of 
stars and $\alpha$-caterpillars. We prove that every 
swap-equilibrium in a Schelling game on an $\alpha$-star-constellation graph is 
$\alpha$-vertex-robust and $\alpha$-edge-robust. We 
also 
show that a swap-equilibrium on an 
$\alpha$-star-constellation graph may fail to exist but that we can precisely 
characterize swap-equilibria on such graphs. Using this 
characterization, we design a polynomial-time algorithm for \sEq{} on 
$\alpha$-star-constellation graphs and show that there always exists a 
swap-equilibrium on an $\alpha$-caterpillar, that is, an 
$\alpha$-star-constellation graph which restricted to non-degree-one 
vertices forms a path.

\begin{theorem} \label{constellation:rob}
	In a Schelling game on an $\alpha$-star-constellation graph for $\alpha \in 
	\NN_0$, every swap-equilibrium $\mathbf{v}$
	is $\alpha$-edge and $\alpha$-vertex-robust.
\end{theorem}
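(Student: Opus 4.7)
The plan is a proof by contradiction. Suppose $\mathbf{v}$ is a swap-equilibrium on an $\alpha$-star-constellation graph $G$ that fails to be $\alpha$-edge-robust (the vertex-robust case is parallel). Then there exist $S\subseteq E(G)$ with $|S|\le\alpha$ and agents $i\in T_1$, $j\in T_2$ whose swap is profitable on $G-S$ but not on $G$; without loss of generality $u_i^G(\mathbf{v})\ge u_i^G(\mathbf{v}^{i\leftrightarrow j})$.

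First, I would shrink $S$ to its helpful core. By \Cref{swap:corollaryEdgeSameType}(iii), together with the elementary observation that removing from $S$ any element whose deletion only increases $u_i^{G-S}(\mathbf{v})$ or decreases $u_i^{G-S}(\mathbf{v}^{i\leftrightarrow j})$, and analogously for~$j$, preserves the profitability of the swap in $G-S'$ for the shrunken set $S'\subseteq S$, I can assume every element of $S$ either deletes an edge between $v_i$ and a $T_1$-friend of $i$, deletes an edge between $v_j$ and a $T_2$-friend of $j$, or is the edge $\{v_i,v_j\}$ itself. Writing $\phi_i,\phi_j$ for the two friend-deletion counts and $\delta_{ij}\in\{0,1\}$ for the last, this gives $\phi_i+\phi_j+\delta_{ij}\le|S|\le\alpha$.

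Next I would split on whether $v_i$ and $v_j$ are degree-one vertices of $G$. If either is degree-one then the corresponding agent has utility in $\{0,1\}$ and only the single incident edge is relevant at its side; a short case analysis in the spirit of the degree-one arguments appearing in the proofs of \Cref{swapEqEx:A} and \Cref{swapEqEx:B} then shows that any profitable swap on $G-S$ would already be profitable on $G$, contradicting swap-equilibrium. The main case is that both $v_i$ and $v_j$ are non-degree-one, so the constellation condition yields $\ell_i\ge d_i^{\mathrm{c}}+\alpha$ and $\ell_j\ge d_j^{\mathrm{c}}+\alpha$ (where $\ell$ counts degree-one neighbors, $d^{\mathrm{c}}$ counts non-degree-one neighbors, and $D=\ell+d^{\mathrm{c}}$). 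Substituting the reduced $S$ into the swap-profitability inequalities
\[
\frac{a_i-\phi_i}{D_i-\phi_i-\delta_{ij}}<\frac{b_j-\mathbbm{1}_{i,j}}{D_j-\phi_j-\delta_{ij}},\qquad\frac{a_j-\phi_j}{D_j-\phi_j-\delta_{ij}}<\frac{b_i-\mathbbm{1}_{i,j}}{D_i-\phi_i-\delta_{ij}},
\]
and cross-multiplying them against the swap-equilibrium inequality $a_iD_j\ge(b_j-\mathbbm{1}_{i,j})D_i$ coming from $u_i^G(\mathbf{v})\ge u_i^G(\mathbf{v}^{i\leftrightarrow j})$, I expect to derive a linear inequality in $\phi_i,\phi_j,\delta_{ij}$ that forces $\phi_i+\phi_j+\delta_{ij}>\alpha$, contradicting the bound above.

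The hard part will be this final arithmetic. The slack $\ell-d^{\mathrm{c}}\ge\alpha$ of the constellation property has to be invoked at precisely the point where the nonnegative gap $a_iD_j-(b_j-\mathbbm{1}_{i,j})D_i$ is compared to the perturbation produced by the at most $\alpha$ friend-deletions, so that $\alpha$ such deletions are provably insufficient to flip the sign of the swap-gain for $i$ (and simultaneously for $j$). Corner cases such as $\delta_{ij}=1$, where a single deletion of $\{v_i,v_j\}$ simultaneously affects both $D_i$ and $D_j$ and zeroes out $\mathbbm{1}_{i,j}^{G-S}$, must be absorbed without destroying the estimate. For vertex-robustness the same estimate applies verbatim, with the only new bookkeeping being that a single deleted vertex in $N(v_i)\cap N(v_j)$ contributes to both $\phi_i$ and $\phi_j$ while consuming only one unit of $|S|$---again accommodated by the same constellation slack.
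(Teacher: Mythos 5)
There is a genuine gap, and it sits exactly where you flag ``the hard part'': the final arithmetic in your main case cannot be made to work, because you have reduced the equilibrium hypothesis to a single local inequality for the swapping pair, and that is not enough information. Concretely, take $\alpha=1$ and central vertices $v_i,v_j$ (not adjacent) each with two degree-one neighbors and one central neighbor, so $D_i=D_j=3$ and the constellation slack $\ell \ge d^{\mathrm{c}}+\alpha$ holds; put $a_i=1$, $b_i=2$, $a_j=2$, $b_j=1$. Then $u_i^G(\mathbf{v})=\nicefrac{1}{3}=u_i^G(\mathbf{v}^{i\leftrightarrow j})$ and $u_j^G(\mathbf{v})=\nicefrac{2}{3}=u_j^G(\mathbf{v}^{i\leftrightarrow j})$, so \emph{both} pairwise equilibrium inequalities hold, yet deleting the single edge from $v_i$ to its unique adjacent friend makes the swap strictly profitable for both agents ($0<\nicefrac{1}{3}$ and $\nicefrac{2}{3}<1$). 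All the constraints you plan to cross-multiply are satisfied here with $\phi_i+\phi_j+\delta_{ij}=1\le\alpha$, so no linear combination of them can force $\phi_i+\phi_j+\delta_{ij}>\alpha$. The theorem survives only because such local data cannot occur inside a \emph{global} swap-equilibrium on an $\alpha$-star-constellation graph (here a degree-one vertex adjacent to $v_i$ carries an agent of the other type, which, by the equilibrium property applied to \emph{other} pairs, forces a very rigid global structure). Your proposal never invokes the equilibrium property for any pair other than $(i,j)$, so this cannot be recovered.

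The paper's proof supplies precisely the missing global input via a dichotomy on the equilibrium itself, not on the pair $(i,j)$: either some agent on a degree-one vertex is adjacent to a non-friend --- in which case all other agents of that non-friend's type must already be fully happy, and one checks directly that such an assignment stays stable under deletion of \emph{any} set of edges or vertices --- or every degree-one agent sits next to a friend. In the latter branch the degree-one neighbors of any central vertex $v_i$ are all friends of the agent on $v_i$, so after removing at most $\alpha$ edges or vertices the constellation condition still leaves at least as many (friendly) degree-one neighbors as central neighbors, giving $u_k^{G-S}(\mathbf{v})\ge\nicefrac{1}{2}\ge u_k^{G-S}(\mathbf{v}^{i\leftrightarrow j})$ for $k\in\{i,j\}$ with no further arithmetic. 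To repair your proof you would need to import this dichotomy (or equivalently the lower bound $a_i\ge\ell_i$ and upper bound $b_j-\mathbbm{1}_{i,j}\le d_j^{\mathrm{c}}$ on the relevant counts), at which point the cross-multiplication you envisage becomes unnecessary.
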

\begin{proof}
	Let $\mathbf{v}$ be a swap-equilibrium on an $\alpha$-star-constellation 
	graph~$G$ for some $\alpha \in \NN_0$.
	We make a case distinction based on whether or not there exists an agent~$i$ on 
	a degree-one vertex adjacent to an agent $j$ of the other type in $\mathbf{v}$.
	If this is the case, then assume without loss of generality that
	$i\in T_1$ and $j\in T_2$ and observe that it needs to hold that all 
	agents $j'\in T_2\setminus \{j\}$ are only adjacent to friends, as 
	otherwise~$j'$ and~$i$ have a profitable swap.  Now consider the 
	topology $G-S$ for some subset $S\subseteq E(G)$ or some subset $S\subseteq 
	V(G)$. Then, for all $j'\in 
	T_2 \setminus  \{j\}$, agent $j'$ cannot be involved in a 
	profitable swap in $G-S$, as $j'$ got deleted, is only adjacent to friends, or 
	placed 
	on an isolated vertex. Moreover, there also cannot exist a profitable swap for 
	$j$, as no agent from~$T_1$ is adjacent to an agent from $T_2\setminus \{j\}$.
	Hence, $\mathbf{v}$ is~$|E(G)|$-edge-robust and $|V(G)|$-vertex-robust.
	
	Now, assume that all agents on a degree-one vertex are only adjacent to 
	friends in~$\mathbf{v}$ and consider the topology $G-S$ for $S \subseteq E(G)$ 
	or $S \subseteq V(G)$ 
	with 
	$|S|\leq \alpha$. Note that, in $G-S$,
	only agents $i \in T_1$ and $j \in T_2$ with $\deg_{G}(v_i)>1$ 
	and 
	$\deg_{G}(v_j)>1$ and $\deg_{G-S}(v_i)\geq 1$ and 
	$\deg_{G-S}(v_j)\geq 1$ can be involved in a profitable swap, 
	since all other agents either occupy an isolated vertex or are only 
	adjacent to friends in $G-S$. For vertex-robustness, it additionally needs to hold 
	that $v_i,v_j\notin S$. Since $G$ is an $\alpha$-star-constellation graph and 
	we 
	delete 
	at most $\alpha$ edges or $\alpha$ vertices, it holds that both $v_i$ and $v_j$ 
	are adjacent to at 
	least as many degree-one vertices as non-degree-one vertices in $G-S$. By our 
	assumption, 
	the 
	agents on degree-one vertices adjacent to $v_i$  are friends of $i$ and the 
	agents on degree-one vertices adjacent to $v_j$ are friends of $j$. Hence, 
	swapping $i$ and $j$ 
	cannot be profitable, as $u_k^{G-S}(\mathbf{v})\geq 
	\nicefrac{1}{2}$ and $u_k^{G-S}(\mathbf{v}^{i \leftrightarrow j})\leq 
	\nicefrac{1}{2}$ 
	for~$k\in \{i,j\}$.
\end{proof}

\Cref{constellation:rob} has no implications for the existence 
of 
swap-equilibria on $\alpha$-star-constellation graphs. Indeed, we observe that
there is no swap-equilibrium in a Schelling game 
with $|T_1|=5$ and $|T_2|=7$ 
on the $1$-star-constellation graph depicted in \Cref{fig:cater}. Notably, to 
the 
best of our 
knowledge the graph from \Cref{fig:cater} is the first known 
graph without a swap-equilibrium that is not a tree. 

\begin{proposition}\label{
		constellation:triangle}
	A Schelling game on an $\alpha$-star-constellation 
	graph $G$
	may fail to admit a swap-equilibrium, even if $G$ is a split 
	graph, that is, 
	the vertices of the graph can be partitioned into a clique and an independent 
	set. 
\end{proposition}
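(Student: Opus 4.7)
The plan is to exhibit a specific $1$-star-constellation split graph $G$ together with a Schelling game on it that admits no swap-equilibrium. The witness is the graph from \Cref{fig:cater}, which I describe as follows: take three central vertices $c_1,c_2,c_3$ forming a triangle, and attach three pendant (degree-one) leaves to each $c_i$. The graph is a $1$-star-constellation graph because every central has three leaf neighbors and two non-leaf neighbors, and it is a split graph because the centrals form a clique of size $3$ while the $9$ leaves form an independent set. Since $|V(G)|=12=|T_1|+|T_2|$ with $|T_1|=5$ and $|T_2|=7$, this is a valid Schelling game on a full topology.

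Assume for contradiction that a swap-equilibrium $\mathbf{v}$ exists. The analysis splits on $k\in\{0,1,2,3\}$, the number of centrals occupied by a type-$T_1$ agent in $\mathbf{v}$; the residual counts on the leaves ($2+k$ of type $T_1$ and $10-k-3=7-(3-k)$ of type $T_2$) are then forced by $|T_1|=5$ and $|T_2|=7$. The two pivotal observations that drive each case are: (i) any leaf agent whose adjacent central is of the opposite type has utility $0$, and (ii) any central agent whose set of three leaves is not monochromatic with its own type, or which has a central neighbor of the opposite type, has utility strictly less than $1$. Combining (i) with (ii), whenever a utility-$0$ leaf agent coexists with a central agent of utility $<1$ of the opposite type, one can hope to realize a profitable swap, provided the destination neighborhoods give both agents a strict gain.

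For $k\in\{0,3\}$, a simple pigeonhole argument on the minority-type leaves guarantees that at least two centrals host an opposite-type leaf; I would then swap such a leaf agent with a suitably chosen central agent and verify both strict improvements by directly reading the neighborhood sizes (the moving leaf agent attains positive utility on a central while the moving central agent lands on a monochromatic leaf with utility $1$). For $k\in\{1,2\}$, which are symmetric via type exchange, I would do a sub-case analysis on how the minority-type leaves are distributed over the three centrals; because only $3$ or $4$ leaves of that type must be placed into three buckets of size three, there are only a handful of essentially different distributions to handle, and in each I exhibit a concrete profitable swap between a utility-$0$ leaf agent and either the odd-one-out central or a cross-central leaf agent. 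The main obstacle is the combinatorial bookkeeping for $k\in\{1,2\}$: one must rule out the configuration in which every opposite-type leaf happens to sit on a central of its own type (so that observation (i) cannot fire), but the triangle structure forces any such configuration to violate observation (ii) in a way that still yields a profitable swap. Since the graph has only $12$ vertices, the case analysis is finite and routine, which completes the proof.
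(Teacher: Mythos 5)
Your witness instance is exactly the paper's (the three $3$-stars with central vertices forming a triangle, $|T_1|=5$, $|T_2|=7$), and an exhaustive case analysis over this $12$-vertex graph would in principle succeed. However, as written the argument has a genuine gap: the cases $k\in\{1,2\}$, which contain the bulk of the configurations, are never actually carried out --- you assert that they reduce to ``a handful of essentially different distributions'' that are ``routine,'' and you halve the work by claiming $k=1$ and $k=2$ are ``symmetric via type exchange.'' That symmetry is false, since $|T_1|=5\neq 7=|T_2|$: exchanging types turns the $(5,7)$-game into a $(7,5)$-game, so the two cases have genuinely different leaf distributions (for $k=1$ the leaves carry $4$ agents of $T_1$ and $5$ of $T_2$; for $k=2$ they carry $3$ and $6$) and each must be checked on its own. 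Relatedly, your residual counts are miscomputed: with $k$ centrals of type $T_1$, the leaves carry $5-k$ agents of $T_1$ and $4+k$ of $T_2$, not ``$2+k$'' and ``$7-k$'' (which are not even equal to each other under your own formula). Finally, when you swap a utility-$0$ leaf agent with a central agent you must take a central of a \emph{different} star; swapping with the leaf's own central sends that central agent to a leaf whose only neighbor is of the opposite type, so it does not improve. Your phrase ``suitably chosen'' hides this adjacency constraint, which is precisely the kind of detail the deferred case analysis has to get right.

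For comparison, the paper sidesteps the enumeration entirely with a two-step argument you may find instructive. Since every star has four vertices and neither $5$ nor $7$ is divisible by four, some star is not monochromatic, i.e., there is a leaf agent $i\in T_l$ with $u_i(\mathbf{v})=0$ whose central is occupied by some $j\in T_{l'}$. Looking at the other two stars: either all their leaves match their centrals, in which case those two centrals must carry different types (otherwise one type would need at least eight agents) and the one of type $T_{l'}$ has utility less than $1$; or some leaf there is mismatched, which again produces an agent $j'\in T_{l'}$, not adjacent to $i$, with utility less than $1$. In either case swapping $i$ and $j'$ is profitable. This replaces your case split on $k$ and on leaf distributions with a single divisibility observation plus one dichotomy, and it is the step your proposal is missing in executable form.
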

\begin{figure}[t]
	\tikzstyle{alter}=[circle, minimum size=12.5pt, draw, inner sep=1pt, 
	semithick] 
	\tikzstyle{majarr}=[draw=black, thick]
	\centering
	\begin{tikzpicture}[auto]
		
		\node[alter] at (0ex,0ex) (a) {};
		
		\node[alter] at (-6ex,6ex) (a1) {};
		
		\node[alter] at (0ex,7ex) (a2) {};
		
		\node[alter] at (6ex,6ex) (a3) {};
		
		\node[alter] at (30ex,0ex) (b) {};
		
		\node[alter] at (24ex,6ex) (b1) {};
		
		\node[alter] at (30ex,7ex) (b2) {};
		
		\node[alter] at (36ex,6ex) (b3) {};
		
		\node[alter] at (60ex,0ex) (c) {};
		
		\node[alter] at (54ex,6ex) (c1) {};
		
		\node[alter] at (60ex,7ex) (c2) {};
		
		\node[alter] at (66ex,6ex) (c3) {};
		
		\draw[majarr] (a) edge  (a1);
		\draw[majarr] (a) edge  (a2);
		\draw[majarr] (a) edge  (a3);
		\draw[majarr] (b) edge  (b1);
		\draw[majarr] (b) edge  (b2);
		\draw[majarr] (b) edge  (b3);
		\draw[majarr] (c) edge  (c1);
		\draw[majarr] (c) edge  (c2);
		\draw[majarr] (c) edge  (c3);
		\draw[majarr] (a) edge  (b);
		\draw[majarr] (b) edge  (c);
		\draw[majarr] (a) edge [bend right=10] (c);
		
 		\end{tikzpicture}
	\caption{\label{fig:cater} There is no swap-equilibrium in a Schelling game 
		with $|T_1|=5$ and $|T_2|=7$ 
		on this $1$-star-constellation graph.}
\end{figure}
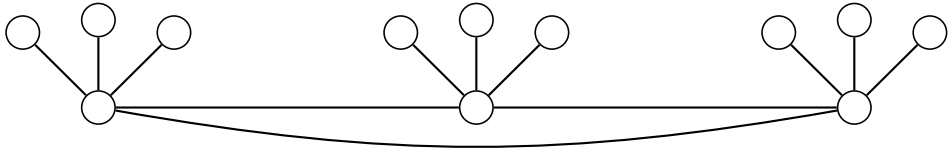
\begin{proof}
	Consider the Schelling game with $|T_1|=5$ many agents of type $T_1$ and $|T_2|=7$ many agents of type $T_2$ on the graph $G$  
	from \Cref{fig:cater}, which consists of three $3$-stars whose central 
	vertices form a clique.  Observe 
	that as
	all stars in $G$ consist of four vertices and neither $|T_1|=5$ nor 
	$|T_2|=7$ are  
	divisible by four, in any assignment~$\mathbf{v}$, there exists 
	a 
	degree-one vertex occupied by an agent $i \in T_l$ such that the adjacent 
	central vertex is occupied by an agent $j \in T_{l'}$ of the other type 
	with 
	$l \neq l'$. Let $v \neq v'$ be the other two central 
	vertices. We make a case distinction based on whether the agents on the 
	degree-one vertices  adjacent to $v$ 
	and~$v'$ have the same type as their respective neighbor on the central 
	vertex.
	If this is the case, then since we have $|T_1|<8$ and $|T_2|<8$, the 
	vertices $v$ and $v'$ 
	cannot be occupied by agents of the same type. Assume by symmetry, without 
	loss of generality, that an 
	agent $j' \in T_{l'}$ occupies vertex $v$ and an agent $i' \in 
	T_l$ occupies vertex $v'$. Then, we have $u_{j'}(\mathbf{v})<1$ and 
	swapping $i$ and $j'$ is 
	profitable, as it holds that $u_{i}(\mathbf{v})=0<u_{i}(\mathbf{v}^{i 
		\leftrightarrow j'})$ and  
	$u_{j'}(\mathbf{v})<1=u_{j'}(\mathbf{v}^{i 
		\leftrightarrow j'})$.
	
	Otherwise, there is an agent on a degree-one vertex that has a 
	different type than the agent on the adjacent central vertex $v$ or $v'$. 
	This implies that there is an agent $j'\neq j$ from $T_{l'}$ 
	with $u_{j'}(\mathbf{v})<1$. Then, similarly to the case above, swapping~$i$ and $j'$ is profitable.
\end{proof}

On the positive side, we can precisely characterize
swap-equilibria in Schelling games on $\alpha$-star-constellation graphs.

\begin{theorem}\label{constellation:equi}
	Let $G$ be an $\alpha$-star-constellation graph with $\alpha \in 
	\NN_0$ 
	and let $\mathbf{v}$ be an assignment in some Schelling game on $G$. 
	The assignment $\mathbf{v}$ is a swap-equilibrium if and only if at least one 
	of the following two conditions holds.
	\begin{enumerate}
		\item Every vertex $v \in V(G)$ with $\deg_G(v)=1$ is occupied by an agent 
		from 
		the same type as the only adjacent agent in $\mathbf{v}$.
		\item There exists an agent $i \in T_l$ for some $l \in \{1,2\}$ 
		such that all other agents $i' \in T_l \setminus \{i\}$ are only 
		adjacent to friends in $\mathbf{v}$.
	\end{enumerate}
\end{theorem}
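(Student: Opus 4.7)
The plan is to split the biconditional and argue each direction in turn. For the forward direction, I would verify that each of the two conditions, taken on its own, rules out every potentially profitable swap. For the backward direction, I would assume $\mathbf{v}$ is a swap-equilibrium in which condition~1 fails and derive condition~2 by contradiction from a single carefully chosen swap.

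\textbf{Forward direction.} If condition~1 holds, every degree-one agent has utility $1$ and cannot profit from a swap. For an agent $i$ on a vertex $v$ with $\deg_G(v)>1$, the $\alpha$-star-constellation property gives $d_1\geq d_2+\alpha$ degree-one and $d_2$ non-degree-one neighbors of $v$, and by condition~1 all $d_1$ leaf-neighbors are friends of $i$. Thus $u_i(\mathbf{v})\geq d_1/(d_1+d_2)\geq 1/2$, whereas an opposite-type swap partner $j$ arriving at $v$ would face these $d_1\geq d_2$ leaves as non-friends and reach at most $1/2$, so no mutually beneficial swap is possible. If condition~2 holds with distinguished agent $i\in T_l$, then every $i'\in T_l\setminus\{i\}$ has utility~$1$ and cannot improve, so the only candidate is $i$ itself swapping with some $j\in T_{l'}$. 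The crucial observation is that condition~2 implies \emph{no vertex occupied by a $T_{l'}$-agent is adjacent to any $T_l\setminus\{i\}$-vertex}. Hence every vertex in $N(v_j)$ is either occupied by a $T_{l'}$-agent or equals $v_i$; since after the swap $v_i$ holds $j\in T_{l'}$, agent $i$ ends up at $v_j$ with no friends and utility~$0$, so the swap is not profitable for~$i$.

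\textbf{Backward direction.} Suppose $\mathbf{v}$ is a swap-equilibrium violating condition~1, so there is a degree-one vertex $\ell$ whose agent $i$ differs in type from the agent $j$ on its unique neighbor $c$. Assume without loss of generality that $i\in T_1$ and $j\in T_2$, so $u_i(\mathbf{v})=0$. I would establish condition~2 with distinguished agent~$j$ by contradiction: assume some $j'\in T_2\setminus\{j\}$ has a non-friend neighbor, so $u_{j'}(\mathbf{v})<1$. Since $v_{j'}\neq c$ and $\ell$'s unique neighbor is $c$, the vertices $\ell$ and $v_{j'}$ are non-adjacent, so swapping $i$ and $j'$ only permutes occupants at $\ell$ and $v_{j'}$ and leaves every other neighborhood intact. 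Then $i$ on $v_{j'}$ has utility $1-u_{j'}(\mathbf{v})>0$, and $j'$ on $\ell$ is adjacent only to its friend $j$ and has utility~$1$, giving a mutually profitable swap that contradicts equilibrium.

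\textbf{Main obstacle.} The subtle step is the forward direction for condition~2: one has to exploit the global implication ``$T_{l'}$-vertices have no $T_l\setminus\{i\}$-neighbors'' in order to conclude that $i$'s post-swap utility drops all the way to zero, which is what ultimately prevents $i$ from profiting even when its current utility is very small.
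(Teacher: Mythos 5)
Your proposal is correct and follows essentially the same route as the paper: the forward direction uses the identical $\ge\nicefrac{1}{2}$ versus $\le\nicefrac{1}{2}$ argument for condition~1 and the ``$i$ lands with utility $0$'' argument for condition~2, and the backward direction exhibits the same profitable swap between the zero-utility leaf agent and an opposite-type agent with a non-friend neighbor. Your backward direction is organized slightly more cleanly (proving condition~2 directly with the leaf's neighbor $j$ as the distinguished agent, rather than negating both conditions and extracting two disjoint mixed edges as the paper does), but the underlying argument is the same.
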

\begin{proof}
	First, we prove that any assignment that fulfills at least one of the two 
	conditions is always a swap-equilibrium.
	Let $\mathbf{v}$ be an assignment satisfying the first condition, that is, 
	every 
	$v \in V$ with $\deg_G(v)=1$ is occupied by an agent from the same type as the 
	agent on the only adjacent vertex in $\mathbf{v}$. Thus, for all $i \in N$ with 
	$\deg_G(v_i)=1$, we 
	have $u_i(\mathbf{v})=1$. No such agent $i$ can be involved in a 
	profitable swap. Recall that by definition of $G$, we have $|\{ w \in N_G(v) 
	\mid \deg_G(w)=1\}| \geq |\{ w \in N_G(v) \mid \deg_G(w)>1\}| + \alpha$ for all 
	$v \in V$ with $\deg_G(v) > 1$. Since all agents on vertices adjacent to $v_j$ 
	with degree one are friends, we have $u_j(\mathbf{v})\geq \frac{1}{2}$ for all 
	$j \in N$ with $\deg_G(v_j)>1$. Now, consider an agent $j \in T_l$ and an agent 
	$j' \in T_{l'}$ of the other type $l' \neq l$ on vertex $v_{j'}$ with  
	$\deg_G(v_{j'})>1$. If we swap $j$ and $j'$, we have $u_j(\mathbf{v}^{j 
		\leftrightarrow j'})\leq \frac{1}{2}\leq u_j(\mathbf{v})$. Thus, no profitable 
	swap is possible and $\mathbf{v}$ is a 
	swap-equilibrium. 
	
	Now, we consider an assignment that fulfills the second condition. Let 
	$\mathbf{v}$ be an assignment such that there exists an agent $i \in T_l$ for 
	one of the types $l \in \{1,2\}$ such that all other agents $i' \in T_l 
	\setminus \{i\}$ of type $T_l$ are only adjacent to friends. Assume without 
	loss of generality 
	that $l=1$. For all $i' \in T_1 \setminus \{i\}$, we have 
	$u_{i'}(\mathbf{v})=1$. Similarly, for all $j \in T_2$ with $v_j \notin 
	N_G(v_i)$, we also have $u_{j}(\mathbf{v})=1$. Hence, only agent $i \in T_1$ 
	and 
	an agent $j' \in T_2$ with $v_{j'} \in N_G(v_i)$ can have a profitable swap. 
	However, after swapping $i$ and $j'$, agent $i \in T_1$ is only adjacent to 
	agents from $T_2$ and has $u_i(\mathbf{v}^{i \leftrightarrow j})=0$. Therefore, 
	no profitable swap is possible.
	
	Next, we will argue that any assignment $\mathbf{v}$ for which both conditions 
	do not hold cannot be a swap-equilibrium. Thus, in assignment $\mathbf{v}$, 
	there exists an agent $i \in N$ with $\deg_G(v_i)=1$ such that the only 
	adjacent 
	agent is of the other type. Assume without 
	loss of generality that $i \in T_1$. Additionally, for 
	both types there exist two agents  $x,x' \in T_1$ with $x \neq x'$ and $y,y' 
	\in 
	T_2$ with $y \neq y'$ such that $\{v_x,v_y\} \in E$ and $\{v_{x'},v_{y'}\} \in 
	E$. We have $u_{i}(\mathbf{v})=0$. Furthermore, since $\deg_G(v_i)=1$, at least 
	one of the agents $y,y' \in T_2$ has to be positioned outside of the 
	neighborhood of $v_i$. Assume without loss of generality that $v_y \notin N_G(v_i)$ and thus also 
	$x \neq i$. Then, swapping $i$ and $y$ is profitable: We have 
	$u_{i}(\mathbf{v})=0<u_{i}(\mathbf{v}^{i \leftrightarrow y})$, since $i$ is 
	adjacent to $x$ in $\mathbf{v}^{i \leftrightarrow y}$. It also holds that 
	$u_{y}(\mathbf{v})<1=u_{y}(\mathbf{v}^{i \leftrightarrow y})$, since $y \in 
	T_2$ 
	is adjacent to $x \in T_1$ in $\mathbf{v}$ and the only neighbor of $y$ in 
	$\mathbf{v}^{i \leftrightarrow y}$ has the same type as $y$. Hence, the 
	assignment 
	$\mathbf{v}$ cannot be a swap-equilibrium.
\end{proof}

The characterization of swap-equilibria from 
\Cref{constellation:equi} also yields a polynomial-time algorithm (using dynamic programming for \textsc{Subset Sum}) to decide for 
a Schelling game on an $\alpha$-star-constellation graph whether it admits a 
swap-equilibrium.
\begin{corollary}\label{constellation:comp}
	For a Schelling game on an $\alpha$-star-constellation graph with $\alpha \in 
	\NN_0$, one can decide 
	in polynomial time whether a swap-equilibrium exists.
\end{corollary}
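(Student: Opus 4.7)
The plan is to leverage the structural characterization from \Cref{constellation:equi}: a swap-equilibrium on the $\alpha$-star-constellation graph $G$ exists if and only if some assignment satisfies one of the two conditions stated there. I would check each condition separately, reducing each check to a polynomially-bounded \textsc{Subset Sum} instance solved by the standard pseudo-polynomial dynamic program.

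For Condition~1, every leaf must carry the same type as its unique central-vertex neighbor, so any Condition-1 assignment is fully determined by a coloring of the set $C$ of central vertices with types $T_1$ and $T_2$. Writing $s_c := 1 + |L(c)|$ for the size of the star at $c \in C$ (where $L(c)$ denotes the leaves adjacent to $c$), Condition~1 is realizable iff there exists $C' \subseteq C$ with $\sum_{c \in C'} s_c = |T_1|$. Since $|T_1| \le |V(G)|$, the standard $O(|C|\cdot|T_1|)$ \textsc{Subset Sum} DP decides this in polynomial time.

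For Condition~2, let $A$ denote the set of vertices occupied by the ``special'' type $T_l$. The condition translates to: at most one vertex $v^* \in A$ has a neighbor in $V(G)\setminus A$. In the non-degenerate regime $|T_1|,|T_2|\ge 2$, connectivity of $G$ forces exactly one such $v^*$. The key structural step I would establish is that $v^*$ must be a central vertex: if $v^*$ were a leaf, its unique neighbor $w$ would lie outside $A$, so every $u \in A \setminus \{v^*\}$ would satisfy $N_G(u) \subseteq A \setminus \{v^*\}$ (as $v^*$ is not adjacent to any such $u$), making $A \setminus \{v^*\}$ a union of connected components of $G$; connectivity of $G$ would then collapse this to $A \in \{\{v^*\}, V(G)\}$, both degenerate. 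So $v^* \in C$, and $A \setminus \{v^*\}$ must be a union of connected components of $G - \{v^*\}$.

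With this structure in hand, the algorithm iterates over the $O(|V(G)|)$ choices of $v^* \in C$ and the two values $l\in\{1,2\}$; for each pair it computes the components of $G - \{v^*\}$ and runs a \textsc{Subset Sum} DP to test whether some subfamily of component sizes sums to exactly $|T_l|-1$ (adding $v^*$ then gives an $A$ of the desired size $|T_l|$). Summing over all iterations yields overall running time $O(|V(G)|^3)$. I expect the main technical step to be the structural lemma underlying Condition~2; once this is in place, both conditions reduce to polynomially many \textsc{Subset Sum} instances with targets bounded by $|V(G)|$, and the corollary follows immediately.
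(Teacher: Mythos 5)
Your proposal is correct and follows essentially the same route as the paper: both reduce Condition~1 to a \textsc{Subset Sum} instance over the star sizes and Condition~2 to \textsc{Subset Sum} instances over the component sizes of $G-\{v^*\}$ for each candidate special vertex $v^*$, solved by the standard dynamic program. The only (harmless) difference is your extra structural lemma restricting $v^*$ to central vertices, whereas the paper simply iterates over all vertices of $G$.
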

\begin{proof}
	Recall that one can think of an $\alpha$-star-constellation graph $G$ as a 
	collection of stars with additional edges between the central vertices of 
	the stars. We define the \textit{star-partition} $S(G)$ of $G$ as the 
	partitioning of $V(G)$ into these stars, that is, $S(G)=\{ \{v\} \cup \{ w 
	\in 	
	N_G(v) \mid \deg_G(w)=1 \} \mid  v\in V(G) \wedge \deg_G(v)>1 \}$.
	We note that the star-partition is unique and can be easily computed in  
	polynomial time.  Next, observe that there exists an assignment 
	$\mathbf{v}$ on $G$ that meets the first 
	condition 
	from \Cref{constellation:equi} if and only if $S(G)$ can be 
	partitioned into two sets $A$ and $B$ such that $|\bigcup_{S \in 
		A}S|=|T_1|$ 
	and $|\bigcup_{S \in B}S|=|T_2|$. Deciding this reduces to 
	solving an instance of subset sum with integers bounded by $n$, which can 
	be 
	solved in $\bigO(n^{2})$ by using dynamic programming. To decide whether an 
	assignment $\mathbf{v}$ fulfilling the second condition exists we compute, 
	for each 
	vertex $v 
	\in V$, the connected components $C_1, \dots, C_m$ of $G[V(G)\setminus 
	\{v\}]$ and
	check 
	whether the components can be partitioned into two subsets $A$ and $B$ such 
	that $|\bigcup_{C \in A} V(C)|=|T_1|-1$ and $|\bigcup_{C \in B} 
	V(C)|=|T_2|$  
	or $|\bigcup_{C \in A} V(C)|=|T_1|$ and $|\bigcup_{C \in B} V(C)|=|T_2|-1$. 
	Again, this can be solved by using dynamic programming in~$\bigO(n^{2})$ time.
\end{proof}

\paragraph{$\boldsymbol{\alpha}$-Caterpillars.} Using \Cref{constellation:equi}, we now argue that there is a subclass of 
$\alpha$-star-constellation graphs, namely $\alpha$-caterpillars, on which a 
swap-equilibrium always exists.
Consider a Schelling game on an $\alpha$-caterpillar $G$ with $w_1,\dots, 
w_{\ell}$ 
being the non-degree-one vertices forming the central path 
($\{\{w_{i},w_{i+1}\}\mid i\in [\ell-1]\}\subseteq E(G)$). It is easy to 
construct a swap-equilibrium $\mathbf{v}$ on $G$ by assigning for each $i \in 
\{1, \dots, 
\ell \}$ agents from $T_1$ to~$w_i$ and to adjacent degree-one vertices, until all agents from~$T_1$ have been assigned; in 
which case the remaining vertices are filled with agents from $T_2$. 
As $\mathbf{v}$ fulfills Condition 2 from \Cref{constellation:equi}, 
$\mathbf{v}$ is a swap-equilibrium and it is easy 
to see that $\mathbf{v}$ has edge-robustness~$|E(G)|$ and vertex-robustness 
$|V(G)|$. Notably, this assignment 
somewhat resembles the swap-equilibrium with edge-robustness $|E(G)|$ and 
vertex-robustness 
$|V(G)|$ on a 
path from 
\Cref{path}. In contrast, extending the swap-equilibrium  with 
robustness zero on a path from \Cref{path} such that all agents on degree-one 
vertices are of the same type as their only neighbor, in some Schelling games 
on 
$\alpha$-caterpillars, it is possible to create a 
swap-equilibrium with edge- and vertex-robustness~only~$\alpha$:

\begin{proposition}\label{cater}
	For a Schelling game on an $\alpha$-caterpillar with $\alpha \in 
	\NN_0$, there is a swap-equilibrium with 
	edge-robustness~$|E(G)|$ and vertex-robustness $|V(G)|$. For every 
	$\alpha\in\NN_0$, there is a 
	Schelling game on an $\alpha$-caterpillar with 
	a swap-equilibrium with edge-robustness and vertex-robustness $\alpha$.
\end{proposition}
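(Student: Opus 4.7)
The plan is to prove the two statements separately.

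For the first statement, I would use the greedy assignment sketched in the paragraph preceding the proposition. Label the central path $v_1,\dots,v_\ell$ and, starting at $v_1$, sweep along the path placing agents of type~$T_1$ on each central vertex together with all its pendants, until the $|T_1|$ many $T_1$-agents are positioned; the remaining vertices are filled with $T_2$. At most one $T_1$ agent (the boundary one, placed last) sees a non-friend, whereas every other agent---including every $T_2$ agent---sees only friends, so Condition~2 of \Cref{constellation:equi} applies and the assignment $\mathbf{v}$ is a swap-equilibrium. For the robustness claim I would observe that edge/vertex deletions cannot create new non-friend adjacencies, so in $G-S$ every $T_2$ agent is still only adjacent to friends (or is isolated). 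Any profitable swap would therefore require a $T_2$ agent to strictly improve by moving to a vertex in the $T_1$-region, but such a vertex has only $T_1$ (non-friend) neighbors or becomes isolated, giving utility~$0$, which cannot strictly exceed the original utility. Hence $\mathbf{v}$ remains a swap-equilibrium on $G-S$ for every $S$, so it has edge-robustness~$|E(G)|$ and vertex-robustness~$|V(G)|$.

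For the second statement I would construct, for each $\alpha\in\NN_0$, a specific $\alpha$-caterpillar together with a swap-equilibrium whose edge- and vertex-robustness both equal~$\alpha$. Following the hint preceding the proposition, I would start from the robustness-zero swap-equilibrium on a path from \Cref{path} and extend the path to an $\alpha$-caterpillar by attaching to each central vertex the minimum number of pendants required by the $\alpha$-star-constellation condition ($\alpha+1$ for each endpoint, $\alpha+2$ for each interior vertex), choosing each pendant's type to match its parent. The resulting assignment $\mathbf{v}$ satisfies Condition~1 of \Cref{constellation:equi} and is therefore a swap-equilibrium on the $\alpha$-caterpillar, and by \Cref{constellation:rob} its edge- and vertex-robustness are at least~$\alpha$. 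For the matching upper bound I would exhibit a concrete set $S$ of $\alpha+1$ edges (or vertices) whose removal creates a profitable swap: mirroring the destabilization in \Cref{path}, $S$ would consist of the boundary path-edge $\{w_1,w_2\}$ together with $\alpha$ further edges (or vertices) incident to the boundary central vertex, chosen so that after deletion the utility of the boundary $T_1$-agent collapses while a swap with the \Cref{path}-counterpart in the last $T_2$-block still yields a strict improvement on both sides.

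The main obstacle is verifying that the chosen $\alpha+1$ deletions genuinely produce a profitable swap. Because the freshly attached pendants significantly boost the utilities of the boundary agents, the single-edge destabilization from \Cref{path} no longer suffices on its own, and the two profit inequalities must be balanced carefully. I would handle this by fixing a concrete, small instance of the construction that satisfies the hypotheses of \Cref{path}---a short central path with the minimum pendant counts and a specific choice of $|T_1|$ and $|T_2|$---then computing the boundary agent's and the swap-target's utilities before and after the $\alpha+1$ deletions in closed form and checking that the two required strict inequalities hold. Combined with the lower bound from \Cref{constellation:rob}, the resulting equilibrium then has edge- and vertex-robustness exactly~$\alpha$, which establishes the claim.
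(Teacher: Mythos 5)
Your first part follows the paper's route (the same greedy star-by-star assignment, Condition~2 of \Cref{constellation:equi}, and a direct argument that no deletion set creates a profitable swap), but the robustness argument is stated incorrectly: it is \emph{not} true that every $T_2$ agent sees only friends --- the $T_2$ agents adjacent to the unique boundary $T_1$ agent $i$ see the non-friend $i$, and after deletions such a $T_2$ agent can in fact strictly improve by moving onto $v_i$. The swap fails for the other reason: every $T_2$ agent that is not adjacent to $v_i$ (and every $T_1$ agent other than $i$) has utility $1$ or is isolated, so the only candidate swap is $i$ with some $j'$ on a neighbor of $v_i$, and after that swap $i$ sits on a vertex all of whose neighbors (including $v_i$, now occupied by $j'$) are of type $T_2$, giving $i$ utility $0$. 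This is fixable, but as written the step is wrong.

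The second part has a genuine gap: your construction does not have robustness $\alpha$. Extending the robustness-zero path equilibrium of \Cref{path} by type-matching pendants makes every $T_1$ agent that is adjacent to a non-friend (the agents on $w_2$ and on $w_{|T_2|+3}$) have $\alpha+3$ adjacent friends and exactly one non-friend. After deleting any $k\le\alpha+1$ of its friend-edges its utility is still at least $\nicefrac{2}{3}$, while every vertex occupied by a $T_2$ agent has at most one $T_1$ neighbor and degree at least $\alpha+3$, so within a budget of $\alpha+1$ deletions no target can be made attractive enough; hence the equilibrium is $(\alpha+1)$-robust and your planned ``careful closed-form computation'' cannot succeed on this instance --- the obstacle you flag is structural, not computational. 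The paper avoids it by choosing a different caterpillar and assignment: four central vertices with $w_1,w_4$ carrying only $\alpha+1$ pendants each, $T_1$ occupying the two \emph{endpoint} stars and $T_2$ the two interior ones (Condition~1 of \Cref{constellation:equi} applies, so \Cref{constellation:rob} gives the lower bound $\alpha$). There the boundary agent on $w_1$ has exactly $\alpha+1$ adjacent friends, so deleting its $\alpha+1$ pendant edges (or the $\alpha+1$ pendant vertices) drops its utility to $0$, after which swapping it with the agent on $w_3$ is profitable because $w_1$'s only surviving neighbor $w_2$ is of type $T_2$. You would need to replace your construction with one of this shape to obtain the matching upper bound.
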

\begin{proof}
	We start with the first part of the proposition.
	Consider a Schelling game on an $\alpha$-caterpillar $G$ with $w_1,\dots, 
	w_{\ell}$ 
	being the non-degree-one vertices forming a path and $\{\{w_i,w_{i+1}\} 
	\mid i\in [\ell-1]\}\subseteq E(G)$. It is easy to 
	construct a swap-equilibrium $\mathbf{v}$ by assigning for each $i 
	\in 
	\{1, \dots, 
	\ell \}$ agents from $T_1$ first to $w_i$ and then to the degree-one 
	vertices  
	adjacent to $w_i$, until there are no remaining unassigned agents from 
	$T_1$. In this case the remaining vertices are filled with agents from 
	$T_2$. 
	By \Cref{constellation:equi}, $\mathbf{v}$ is a swap-equilibrium (as it satisfies the second condition). Note that 
	in $\mathbf{v}$ there exists only one agent from $T_1$, say $i\in T_1$, who 
	is adjacent to an agent from $T_2$. Let $S\subseteq E(G)$ be a subset of 
	edges 
	or a subset of vertices $S\subseteq V(G)$ of arbitrary size. We now argue 
	that $\mathbf{v}$ is a swap-equilibrium on 
	$G-S$. First, note that $i$ is still the only agent from~$T_1$ who is 
	adjacent to an agent from $T_2$ in  $\mathbf{v}$ on $G-S$. 	
	For all $i' \in T_1 \setminus \{i\}$, we have 
	$u^{G-S}_{i'}(\mathbf{v})=1$ or $v_{i'}$ is an isolated vertex. Similarly, 
	for all $j \in T_2$ with $v_j \notin 
	N_{G-S}(v_i)$, we also have $u^{G-S}_{j}(\mathbf{v})=1$ or $v_{j}$ is an 
	isolated vertex. Hence, 
	only 
	agent $i \in T_1$ 
	and 
	an agent $j' \in T_2$ with $v_{j'} \in N_{G-S}(v_i)$ can have a profitable 
	swap. 
	However, after swapping $i$ and $j'$, agent $i \in T_1$ is only adjacent to 
	agents from $T_2$ and has $u^{G-S}_i(\mathbf{v}^{i \leftrightarrow j})=0$. 
	Therefore, 
	no profitable swap is possible and $\mathbf{v}$ has 
	edge-robustness~$|E(G)|$ and vertex-robustness~$|V(G)|$.
	
	Let us now come to the second part of the proposition. For $\alpha\in 
	\mathbb{N}_0$ let $G$ be an $\alpha$-caterpillar with $w_1,\dots, 
	w_{4}$ 
	being the non-degree-one vertices forming the central path with 
	$\{\{w_i,w_{i+1}\}\mid i\in [3]\}\subseteq E(G)$. For 
	$i\in \{1,4\}$, $w_i$ is adjacent to $\alpha+1$ degree-one vertices, and for $i\in \{2,3\}$, $w_i$ is adjacent to $\alpha+2$ degree-one vertices. We 
	consider  
	the Schelling game on $G$ with $|T_1|=2\cdot(\alpha+1)$ and 
	$|T_2|=2\cdot(\alpha+2)$. Let~$\mathbf{v}$ be the assignment where agents from $T_1$ occupy the $2\cdot(\alpha+1)$ vertices from the stars with central vertices $w_1, w_4$ and the agents from $T_2$ occupy the $2\cdot(\alpha+2)$ vertices from the stars with central vertices $w_2, w_3$.  Note that 
	$\mathbf{v}$ fulfills the first condition from \Cref{constellation:equi} 
	and is thus a swap-equilibrium which is by \Cref{constellation:rob} 
	$\alpha$-vertex-robust and $\alpha$-edge-robust. To show that $\mathbf{v}$ 
	has edge-robustness $\alpha$ it 
	remains to specify a set of $\alpha+1$ edges whose deletion make 
	$\mathbf{v}$ unstable (that is, $\mathbf{v}$ is not $\alpha+1$-edge-robust). Let 
	$S$ be the set of $\alpha+1$ edges containing 
	all edges between $w_1$ and its degree-one neighbors. Then, on $G-S$, swapping agent $i \in T_1$ on $w_1$ and  
	agent $j \in T_2$ on $w_3$ is profitable, as 
	$u_i^{G-S}(\mathbf{v})=0<u_i^{G-S}(\mathbf{v}^{i \leftrightarrow j})$ and 
	$u_j^{G-S}(\mathbf{v})<1=u_j^{G-S}(\mathbf{v}^{i \leftrightarrow j})$.
	To show that $\mathbf{v}$ has vertex-robustness $\alpha$, let $S$ be the 
	set of $w_1$'s degree-one neighbors. Then similar as above, on $G-S$, swapping 
	$i \in T_1$ on $w_1$ and  
	agent $j \in T_2$ on $w_3$ is profitable.
\end{proof}

\section{Conclusion}
We 
proved that even in the simplest variant of 
Schelling 
games where all agents want to maximize the fraction of agents of their type in 
their occupied neighborhood, deciding the existence of a 
swap- or jump-equilibrium is NP-complete.
Moreover, we introduced a notion for the robustness of 
an equilibrium under vertex or edge deletions and proved that the robustness of different 
swap-equilibria on the same topology can vary significantly. In addition, we 
found that the 
minimum and the maximum robustness of swap-equilibria vary depending on the 
underlying topology.

There are multiple possible directions for future research. First, independent of properties of the given graph,
in our reduction showing the NP-hardness of deciding the existence of a swap- or jump-equilibrium, we construct a graph that is not planar and has unbounded maximum degree. 
The same holds for graphs constructed in the reductions from  \citeA{AGARWAL2021103576} for showing NP-hardness in the presence of stubborn agents.
Thus, the computational complexity of deciding the existence of equilibria on planar or bounded-degree graphs (properties that typically occur in the real world) in Schelling games with our without stubborn agents is open. 
Second,  \citeA{bilo2020topological} recently introduced 
the notions of local swap (jump)-equilibria where only adjacent agents are 
allowed to swap places (agents are only allowed to jump to adjacent vertices). 
To the best of our knowledge, the computational complexity of deciding the 
existence of a 
local swap- or jump-equilibrium 
is unknown even if we allow for stubborn agents.  
Third, while we showed that on most considered graphs swap-equilibria can be very non-robust, it might be interesting to search for graphs guaranteeing a higher equilibrium robustness; here, graphs with a high minimum degree and/or high connectivity seem to be promising candidates. 
Fourth, besides 
looking 
at the robustness of equilibria with respect to the deletion of edges or 
vertices, one may 
also study adding or contracting edges or vertices.
Fifth, instead of analyzing the robustness of a specific equilibrium, 
one could also investigate the robustness of a topology regarding 
the existence of an equilibrium.
Lastly, for an equilibrium, it would also be interesting to analyze empirically or theoretically how many reallocations of agents take place on average after a certain change has been performed until an equilibrium is reached again. 

\section*{Acknowledgements}
NB was supported by the DFG project MaMu (NI 369/19) and by the DFG project ComSoc-MPMS (NI 369/22). This paper is largely based on the Bachelor thesis of the first author \cite{kreisel2021equilibria}. We thank an anonymous AAMAS '22 reviewer who pointed us to two possible simplifications in the proof of \Cref{grid}.

\bibliographystyle{theapa}

\end{document}